\newlength{\textlarg}

\documentclass[12pt,english]{article}
\usepackage{subcaption} 
\usepackage[T1]{fontenc}
\usepackage[utf8]{inputenc}
\usepackage{geometry}
\usepackage{color}
\usepackage{babel}
\usepackage{float}
\usepackage{mathrsfs}
\usepackage{url}
\usepackage{amsmath}
\usepackage{amsthm}
\usepackage{amssymb}
\usepackage{graphicx}
\usepackage{setspace}
\usepackage{booktabs,threeparttable}
\usepackage[authoryear]{natbib}
\usepackage[unicode=true,pdfusetitle,
 hypertexnames=false,bookmarks=true,bookmarksnumbered=false,bookmarksopen=false,
 breaklinks=false,pdfborder={0 0 0},pdfborderstyle={},backref=false,colorlinks=true]
 {hyperref}
\hypersetup{
 citecolor=blue, linkcolor=blue, urlcolor=blue}

\usepackage[title]{appendix}

\makeatletter

\usepackage{paralist}
  \theoremstyle{definition}
  \newtheorem{assumption}{\protect\assumptionname}
  \theoremstyle{plain}
 \theoremstyle{definition}

\usepackage{bibunits}

  \usepackage{enumitem}

\usepackage{multirow}
\usepackage{adjustbox}
\usepackage{booktabs}
\usepackage{siunitx} 
\usepackage{times}
\usepackage{appendix}

\makeatother

  \providecommand{\assumptionname}{Assumption}

\newtheorem{proposition}{Proposition}

\usepackage{graphicx}
\usepackage{subcaption}
\usepackage{float}

\makeatletter
\renewenvironment{proof}[1][\proofname] {\par\pushQED{\qed}\normalfont\topsep6\p@\@plus6\p@\relax\trivlist\item[\hskip\labelsep\bfseries#1\@addpunct{.}]\ignorespaces}{\popQED\endtrivlist\@endpefalse}
\makeatother

\makeatletter
\let\SNPDnatlinkstart\hyper@natlinkstart
\def\hyper@natlinkstart#1{\SNPDnatlinkstart{\@bibunitname.#1}}
\let\SNPDnatlinkbreak\hyper@natlinkbreak
\def\hyper@natlinkbreak#1#2{\SNPDnatlinkbreak{#1}{\@bibunitname.#2}}
\let\SNPDnatanchorstart\hyper@natanchorstart
\def\hyper@natanchorstart#1{\SNPDnatanchorstart{\@bibunitname.#1}}
\makeatother

\begin{document}

\title{Scalable Nonparametric Demand Estimation in Differentiated Product Markets\thanks{I thank Steven Berry, Debopam Bhattacharya, Philippe Chon\'e, Federico Ciliberto, Christopher Conlon, Pierre Dubois, Olivier De Groote, Andr\'e de Palma, Xavier D'Haultfoeuille, Roxana Fernandez, Alfred Galichon, Gautam Gowrisankaran, Rachel Griffith, Laura Grigolon, Brett Hollenbeck, Alessandro Iaria, Ulrich Kaiser, Laura Lasio, Laurent Linnemer, Hugo Molina, Lars Nesheim, Ariel Pakes, Michelangelo Rossi, Pasquale Schiraldi, Senay Sokullu, Yutec Sun, Frank Verboven, Thibaud Verg\'e, and Daniel Wilhelm. I also thank seminar participants at Baleares Islands, Bristol, Chicago, CMA, Copenhagen, CREST, IMT Business School, INRAE, Mannheim, Sciences Po, Stockholm School of Economics, Telecom Paris, TSE, and Vienna, as well as workshop/conference participants at Workshop on Platforms, CREST-Telecom Paris E-commerce and Digital Economics 2019, TSE Conference on IO and the Food Industry 2019, EARIE 2019/2021, ES World Congress 2020, DSE Winter School 2020, ASSA 2021, RES 2021, ES Asian Meeting 2021, Leuven Summer Event 2023, CEPR Paris Symposium 2023, Bristol Empirical IO Workshop 2024, and BSE Summer Forum 2026. This paper supersedes an earlier draft that circulated under the title "Measuring Substitution Patterns with a Flexible Demand Model".}}

\author{Julien Monardo\thanks{School of Economics, University of Bristol; \protect\url{julien.monardo@bristol.ac.uk}. The Priory Road Complex, Priory Road, Clifton BS8 1TU, United Kingdom}}

\date{October 6, 2026}\newgeometry{tmargin=1in,bmargin=1in,lmargin=1.1in,rmargin=1.1in}
\maketitle

\vspace{-0.0cm}
\begin{abstract}

The answers to many economic questions depend on the slope and curvature of demand. Estimating demand for differentiated products entails a trade-off between flexibility and scalability. Existing nonparametric approaches face a curse of dimensionality in the number of products. I develop a scalable nonparametric approach using market-level data that overcomes this curse through economically motivated restrictions embedded in many standard models. Simulations and an application to the U.S. beer market demonstrate flexibility in demand slope and curvature, with consequential differences in counterfactual price responses. Estimation in the application takes seconds, versus several hours for a random coefficient nested logit model.

\vspace{0.25cm}

\noindent \textit{Keywords:} Demand curvature, Demand estimation, Differentiated products, Discrete choice, Nonparametric estimation, Substitution patterns.
\end{abstract}
\restoregeometry

\newpage 

\begin{bibunit}

\section{Introduction}

A popular approach to answering many economic questions is to conduct counterfactual analysis using estimated models of supply and demand for differentiated products.\footnote{Another popular approach is the "treatment effects" approach (e.g., difference-in-differences analysis).} Prominent examples include classic industrial organization questions, such as merger evaluation and the measurement of market power, and applications in many other fields.\footnote{See Table 1 in \citet*{berryhaile2016} and Footnote 1 in \citet*{compiani2018} for several examples.} Given the assumed supply model, the answers to these questions depend on the slope (summarized by price elasticities or diversion ratios) and curvature (how the slope changes with prices) of demand implied by the estimated demand model. Credible counterfactual analysis thus requires both a demand model flexible enough to accommodate a wide range of slopes and curvatures and an estimation method that can recover these features of demand accurately from the data and scale to modern settings with many products and markets.

Estimating demand for differentiated products therefore entails a trade-off between flexibility and scalability: because demand for each product depends not only on its own price and characteristics but also on those of its competitors, the dimension of the problem grows with the number of products per market. Standard logit-type models achieve scalability by reducing the dimension of the problem through parametric functional form restrictions. The leading BLP approach of \citet*{BLP1995} relaxes these restrictions through a random coefficient logit (RCL) demand model whose flexibility depends on the flexibility of the random coefficient specification.\footnote{Flexibility requires sufficiently flexible utility and random coefficient specifications \citep*{mcfadden2000mixed}.} For tractability, empirical applications often impose strong distributional assumptions on random coefficients, such as normality, that can restrict substitution patterns. By contrast, \citet*{compiani2018} develops a nonparametric estimation approach that avoids arbitrary distributional and functional form assumptions but faces a severe curse of dimensionality limiting its scalability: because its polynomial approximation uses product-specific arguments, the number of polynomial coefficients grows rapidly with the number of products in the market.\footnote{His nonparametric approach builds on the insight of \citet*{berry2014identification} that most assumptions commonly imposed in demand estimation are unnecessary for nonparametric identification of demand.}

Recent papers underscore the need for flexibility in both the slope and curvature of demand. \citet*{miravete2023elasticity} show that similar elasticities can coexist with sharply different curvature, and identify the curse of dimensionality as an obstacle to nonparametric alternatives. \citet*{birchall2024estimating} show that, when price enters utility linearly, own-price elasticities in logit and RCL models scale quasi-linearly with price, and that a more flexible price functional form relaxes this scaling and adds flexibility in demand curvature. \citet*{goldszmidt2026price} show that commonly used RCL specifications can generate more flexible cross-price elasticities while leaving own-price elasticities close to quadratic functions of own price, noting that they remain functions of own price alone, though not necessarily quadratic, when price enters through a transformation such as the logarithm.\footnote{Their approximation requires small type-specific choice probabilities and independent price and non-price random coefficients. Allowing these coefficients to be correlated can relax the restriction.}

In this paper, I respond to this need by developing a nonparametric approach to estimating demand for differentiated products from market-level data that combines flexibility in demand with scalability in the number of products. Following the literature, I consider markets with products differentiated by observed and unobserved characteristics and address the resulting endogeneity using generalized method of moments (GMM) estimation with appropriate instruments. The approach achieves scalability by reducing the dimension of the problem through mild, economically motivated restrictions embedded in many standard demand models, rather than through parametric functional-form or distributional assumptions. The resulting model accommodates a wide range of substitution patterns, own-price responses, and demand curvature.

Specifically, I use the following strategy to achieve scalability. First, I impose invariance restrictions. In particular, permutation invariance requires the market share of a product to remain unchanged under any permutation of its competitors. Second, I show that, under these restrictions, the inverse market share of any product can be represented as the sum of the logit inverse term and a permutation-invariant function of competitor-relative quantities (market share ratios and differences in prices and characteristics). Third, I use the approximation result of \citet*{bachmayr2024polynomial} to approximate this component with a permutation-invariant polynomial expansion. By using competitor-pooled rather than competitor-specific arguments, the construction breaks the curse of dimensionality in the number of products: for a fixed polynomial degree, the number of polynomial coefficients no longer grows with the number of products.

\begin{table}[htbp]
\centering
\caption{Number of coefficients in degree-$3$ approximations}
\label{tabdim1}

\begin{threeparttable}
\small
\setlength{\tabcolsep}{3.5pt}
\renewcommand{\arraystretch}{1.15}

\begin{tabular*}{0.95\textwidth}
{@{\extracolsep{\fill}} r l rrr rrr @{}}
\toprule
& &
\multicolumn{3}{c}{Shares only ($d=1$)} &
\multicolumn{3}{c}{Shares and prices ($d=2$)} \\
\cmidrule(lr){3-5}
\cmidrule(lr){6-8}
$J$ & Application
& SNPD & NPD & Full
& SNPD & NPD & Full \\
\midrule

5
&
& 7
& 80
& 256
& 23
& 13{,}056
& 65{,}536 \\

7
&
& 7
& 224
& 4{,}096
& 23
& 248{,}064
& $1.68\times10^{7}$ \\

10
&
& 7
& 660
& 262{,}144
& 23
& $7.85\times10^{6}$
& $6.87\times10^{10}$ \\

\addlinespace[2pt]

24
& Nevo (2000)
& 7
& 9{,}200
& $7.04\times10^{13}$
& 23
& $1.50\times10^{11}$
& $4.95\times10^{27}$ \\

39
& MW (2017)
& 7
& 39{,}520
& $7.56\times10^{22}$
& 23
& $7.17\times10^{13}$
& $5.71\times10^{45}$ \\

150
& BLP (1995)
& 7
& $2.25\times10^{6}$
& $5.09\times10^{89}$
& 23
& $9.59\times10^{21}$
& $2.59\times10^{179}$ \\

\bottomrule
\end{tabular*}

\begin{tablenotes}[flushleft]
\footnotesize
\item \textit{Notes:}
Each entry is the number of coefficients in a degree-$3$ polynomial
approximation of the inverse market share function. SNPD uses total degree at most 3. NPD and Full use degree at most 3 in each coordinate. $J$ denotes the number of products, and $d$ denotes the number of relative
objects per competitor ($d=1$ includes market share ratios only and
$d=2$ includes market share ratios and price differences). SNPD is the proposed symmetric polynomial sieve, which is permutation invariant by construction.
NPD is the sieve of \citet*{compiani2018}, which imposes permutation
invariance through linear equality constraints.
Full is the unrestricted tensor-product sieve. MW (2017) stands for \citet*{miller2017understanding}.
\end{tablenotes}

\end{threeparttable}
\end{table}

Table~\ref{tabdim1} quantifies the scalability gain from the approximation used in the proposed approach (SNPD) relative to the tensor-product approximation of \citet*{compiani2018}, which imposes permutation invariance through linear equality restrictions (NPD), and an unrestricted tensor-product approximation (Full). Both NPD and Full become infeasible with even a moderate number of products. By contrast, provided the polynomial degree does not exceed the number of competitors of each product, which is typical in practice, the number of SNPD coefficients is independent of the number of products, preserving feasibility in the canonical applications shown in the table. Hence, the model remains estimable in markets with many products, and the polynomial degree can be chosen by cross-validation. For example, in my empirical application to the U.S. beer market, where markets contain up to 39 products, cross-validation selects a cubic sieve with only 53 nonconstant terms.

The proposed approach is computationally attractive: estimation reduces to a constrained convex GMM problem. I regularize estimation with theory-based shape restrictions \citep*{matzkin1994,chetverikov2018}, such as the law of demand. These restrictions improve finite-sample stability and discipline the implied substitution patterns. Because they are linear in the sieve coefficients, imposing them preserves the convexity of the GMM problem. Convexity eliminates local optima and allows efficient optimization using standard software.

The framework also generates two extensions. First, the same strategy extends to flexible universal logit models, which can approximate a large class of discrete choice demand models \citep*{mcfadden1984econometric}. Under permutation invariance across competitors, the symmetric sieve again makes the number of coefficients independent of the number of products. Second, the supplemental appendix develops a parametric counterpart to the nonparametric inverse market share model that extends the nested logit inverse market share model by assigning a separate nest to each pair of products with nesting parameters that vary with product proximity in characteristic space.

Directly modeling inverse market shares rather than specifying individual-level utilities offers three advantages. First, the inverse market share function, unlike the market share function itself, yields equations that can be estimated directly by regression methods \citep*{berry1994estimating}. Second, the inverse market share function and its derivatives provide the key inputs needed to answer many economic questions of interest.\footnote{See references in \citet*{compiani2018} and \citet*{miravete2023elasticity} for examples of questions.} Combined with a supply model, the estimated inverse market share can be used to recover markups and to conduct counterfactual analyses of mergers, product introductions, and changes in tax and trade policies. Third, directly estimating inverse market shares avoids imposing restrictions on individual consumer behavior and therefore accommodates a broad range of behaviors and substitution patterns, including multiple choices and complementarity in demand.

The focus on market-level data remains relevant even when researchers observe individual-level (micro) data: micro moments supplement rather than replace aggregate moments \citep*{petrin2002,berry2004differentiated,conlon2023incorporating}. Market-level methods therefore remain useful even in settings with richer data. Market-level data are also often easier to obtain and remain widely used in research and policy practice.\footnote{Recent academic applications include \citet*{dopper2025rising}, \citet*{starc2025does}, and \citet*{ale2026colluding}. Recent competition policy applications include the European Commission's \href{https://eur-lex.europa.eu/legal-content/EN/TXT/?uri=SWD\%3A2020\%3A363\%3AFIN}{Digital Markets Act} impact assessment (2020, EU), the European Commission's \href{https://eur-lex.europa.eu/legal-content/EN/TXT/PDF/?uri=CELEX:52022SC0319}{AI Liability Directive} impact assessment (2022, EU), \href{https://www.gov.uk/cma-cases/arcelik-slash-whirlpool-emea-merger-inquiry}{Ar\c{c}elik / Whirlpool EMEA} (2024, UK), and \href{https://law.justia.com/cases/federal/district-courts/california/candce/3\%3A2020cv02345/357628/635/}{In re JUUL Labs Antitrust Litigation} (2026, US).}

The proposed approach nevertheless faces two limitations. First, because it specifies inverse market shares directly rather than deriving them from an individual-level utility model, individual-level welfare and counterfactual analyses require further assumptions on preference heterogeneity. One promising direction is to combine micro and aggregate moments \citep*{petrin2002,berry2004differentiated,conlon2023incorporating}. Second, it still faces a curse of dimensionality in the number of product characteristics included in the inverse market share function. As in BLP, which cannot accommodate arbitrarily many random coefficients, this limits the number of characteristics that can be included in practice. Possible strategies include principal component analysis
\citep*{backus2021common}, characteristic selection based on IIA tests
\citep*{gandhihoude}, single-index specifications
\citep*{horowitz2012semiparametric}, and regularization of the
symmetric polynomial basis \citep*{hastie09elements}. 

I assess the finite-sample performance of the proposed approach in
Monte Carlo simulations by examining the estimated substitution
patterns, demand curvature, and the implied
markups under a static price
competition model among multiproduct firms. First, the proposed
approach performs well in recovering the substitution patterns, demand curvature, and markups generated by BLP using normally distributed random coefficients on
price and a product characteristic, even under substantial unobserved
preference heterogeneity. Second, the approach can outperform BLP
when BLP is misspecified in either its logit functional form or its
random coefficient distribution. In one design, the data are generated
by a parametric version of the nonparametric inverse market share
model. In the other, the true price coefficient follows a lognormal
distribution, while BLP estimation imposes normality. Overall, the simulations show that the proposed approach performs well across a range of data-generating processes without imposing parametric functional-form or distributional assumptions.

I further evaluate the proposed approach in an empirical application to the U.S. beer market studied by \citet*{miller2017understanding}.
Relative to their random coefficient nested logit (RCNL) specification,
the proposed approach implies greater substitution toward closer
products while generating similar average own- and cross-price
elasticities. Under SNPD, own-price elasticities vary more at a given
price, even without a price-income interaction, so they are not functions of own price alone, and their relationship with prices is nonmonotonic, in contrast
to the increasing relationship in absolute value under the estimated
RCNL model. SNPD also generates a wider range of demand curvature,
with values both below and above one. These results illustrate a practical response to the restrictions on own-price elasticities documented by \citet*{birchall2024estimating} and \citet*{goldszmidt2026price} and on demand curvature documented by \citet*{miravete2023elasticity}. These differences matter for counterfactual analysis: the two demand specifications imply substantially different responses to a cost shock.
Under the same static price competition model, a per-unit tax on
imported products equal to $10\%$ of their baseline prices raises
domestic prices by $6.92\%$ under SNPD, compared with $1.53\%$ under
RCNL.
Thus, similar average elasticities can conceal substantial differences in the
predicted transmission of cost shocks across products. This application
also highlights the scalability gain of the approach: with $2{,}808$
markets and up to $39$ products per market, SNPD takes approximately
$30$ seconds to estimate, compared with approximately $10$ hours and
$30$ minutes for the RCNL model.

\paragraph{Related literature.} 

This paper builds on the large literature on demand estimation in differentiated product markets pioneered by \citet*{berry1994estimating} and \citet*{BLP1995}, but departs from it by specifying an inverse market share model rather than an individual-level utility model.\footnote{A prominent alternative uses flexible functional forms to specify demand directly in observables, typically with additive errors \citep[see][and references therein]{barnettserletis2008}. Relative to \citet*{berry1994estimating} and \citet*{BLP1995}, this approach operates in product rather than characteristics space and therefore faces a curse of dimensionality in the number of products. It also restricts product-specific unobservables to affect only own-product demand.} The standard approach derives market shares from utility maximization and inverts them to obtain the estimating equations. Except for the logit and nested logit models, this inversion has no closed form, ruling out regression-based estimation and requiring more involved procedures.\footnote{See \citet*{conlon2018}. Estimation repeatedly performs numerical inversion within a nonlinear, nonconvex optimization problem and, for random coefficient models, simulates the market share function. This raises familiar concerns about local optima, starting values, and the accuracy of numerical inversion and simulation \citep[see][and references therein]{knittel2014estimation}.} I instead specify the inverse market share function directly and estimate it by regression.

Among approaches that estimate an inverse market share function, the closest is \citet*{compiani2018}, who develops a nonparametric approach to estimating demand for differentiated products. His approach, however, suffers from a curse of dimensionality in the number of products. I build on it to develop a nonparametric estimator that overcomes this curse. \citet*{brand2025quasi} extend \citet*{compiani2018} with a quasi-Bayesian approach to better enforce economic constraints but inherit the same dimensionality problem. Other closely related papers estimate inverse demand within more structured models: \citet*{Nevo2026} and \citet*{lu2023semi} build on the RCL model, whereas \citet*{fosgerau2018demand} and \citet*{Hortacsu2020} rely on nesting structures.

The paper also relates to approaches that seek to generate flexible substitution patterns. These include semiparametric and nonparametric methods for estimating random coefficients in the RCL model \citep*{lu2023semi,wang2021sieve}, as well as approaches that map demand parameters into characteristics space using the distance-metric approach of \citet*{pinkse2002spatial} \citep*{pinkse2004mergers,slade2004,davis2014flexible}. Like the present approach, distance-metric approaches allow regression-based estimation. However, they restrict product-specific unobservables to affect only own-product demand. \citet*{wang2025scalable}, in turn, estimate price elasticities nonparametrically without specifying a demand function.

Finally, the paper contributes to work that balances flexibility and scalability. This literature reduces dimensionality by summarizing substitution patterns with lower-dimensional objects \citep*{pinkse2004mergers,fosgerau2018demand,fally2026consumer}, simplifying estimation of the RCL model \citep*{salanie2022,lu2023semi}, or exploiting permutation invariance \citet*{compiani2018,Nevo2026}. This paper follows the third approach. \citet*{compiani2018} imposes permutation invariance through equality constraints on polynomial coefficients and calls for ``additional ways to tackle the curse of dimensionality''. Within random coefficient models that are linear in characteristics, \citet*{gandhihoude} and \citet*{Nevo2026} derive closely related representations, in which inverse demand involves a permutation-invariant function of competitors' market shares and characteristic differences. I obtain such a representation from invariance and regularity restrictions on the market share function, without a random utility structure. Like \citet*{Nevo2026}, I address the curse of dimensionality in the number of products by pooling information across competitors. Their framework accommodates linear sieves and Deep Sets neural networks \citep*{zaheer2017}. Their preferred specification uses share-weighted sums of transformed characteristic differences, followed by a second step that recovers the preference distribution in an RCL model. Building on \citet*{bachmayr2024polynomial}, I construct a symmetric polynomial sieve of share ratios and price and characteristic differences, without specifying a random coefficient distribution. Both linear-sieve formulations permit convex GMM estimation under linear economic shape restrictions. I further develop constraint reduction and global coefficient bounds to impose these restrictions efficiently. \citet*{singh2023choice} use Deep Sets to estimate the market share function directly with neural networks.

\paragraph{Organization of the paper.} The remainder of the paper is organized as follows. Section \ref{Section:FD} introduces the general setting and develops the scalable nonparametric demand estimation approach. Section \ref{Section:Simul} presents Monte Carlo evidence on its empirical performance. Section \ref{sec:empirical_applications} presents an application of the proposed approach to the U.S. beer market. Section \ref{Section:CCL} concludes. 

\section{Scalable Nonparametric Demand Estimation \label{Section:FD}}

\subsection{General Setting}

I adopt the general setting of \citet*{berry2014identification}. In market $t$, consumers choose among $J+1$ differentiated products in $\mathcal{J} = \mathcal{J}_0\cup\{0\}$, where $\mathcal{J}_0=\{1,\ldots,J\}$ contains the inside products and $j=0$ is the outside good. 

In each market $t$, let $\mathcal{X}_t=(\mathbf{x}_t,\mathbf{p}_t,\boldsymbol{\xi}_t)$, with support $\mathcal{X}$, collect product/market characteristics, where $\mathbf{x}_t=(\mathbf{x}_{0t},\ldots,\mathbf{x}_{Jt})$ is the matrix of observed characteristics, $\mathbf{p}_t=(p_{0t},\ldots,p_{Jt})$ is the vector of observed prices, and $\boldsymbol{\xi}_t=(\xi_{0t},\ldots,\xi_{Jt})$ is the vector of  unobserved characteristics, with $\mathbf{x}_{jt}\in\mathbb{R}^K$, $p_{jt}\in\mathbb{R}_+$, and $\xi_{jt}\in\mathbb{R}$ for each $j\in\mathcal{J}$. For the outside good, $\mathbf{x}_{0t}=\mathbf{0}$, $p_{0t}=0$, and $\xi_{0t}=0$. The unobservables $\xi_{jt}$ capture product differentiation not explained by observed characteristics. They are observed by consumers and firms but not by the modeler. As is standard, $\mathbf{x}_t$ is assumed exogenous, that is, uncorrelated with $\xi_{jt}$, whereas $\mathbf{p}_t$ is assumed endogenous.\footnote{For instance, in price competition models, firms set prices while taking unobserved product characteristics into account, which makes prices endogenous.}

Aggregating individual choices yields the vector of positive market shares $\mathbf{s}_t = (s_{0t},\ldots,s_{Jt}) \in \Delta_{J}^+$, where $\Delta_{J}^+= \left\{ \mathbf{s}_t \in\left(0,\infty \right)^{J+1}:\sum_{j\in\mathcal{J}} s_{jt}=1\right\}$ denotes the set of positive market share vectors.\footnote{Formally, $\Delta_{J}^+$ is the relative interior of the unit simplex in $\mathbb{R}^{J+1}$.} Aggregate consumer choice is represented by the market share function $\boldsymbol{\sigma}=(\sigma_0,\ldots,\sigma_J)$, which maps $\mathcal{X}_t \in \mathcal{X}$ into $\mathbf{s}_t \in \Delta_{J}^+$:
\begin{equation}\label{eq:GeneralD}
   s_{jt} = \sigma_j \left(\mathcal{X}_t\right) = \sigma_j \left(\mathbf{x}_{t},\mathbf{p}_{t}, \boldsymbol{\xi}_{t}\right),\quad j\in\mathcal{J}.
\end{equation}

I develop a scalable nonparametric approach to estimating \eqref{eq:GeneralD} from market-level data using mild, economically motivated restrictions embedded in many standard demand models. Proofs for this section are provided in Appendix \ref{Appendix:Proof}.  

\subsection{Nonparametric Demand Model \label{Sec:NPDModel}}

Three assumptions are imposed on the market share function in \eqref{eq:GeneralD}.\footnote{It also relies on two standard implicit assumptions: scalar unobservables $\xi_{jt}$ and income-independent demand.} The first assumption imposes a linear index structure that restricts how \(\mathcal{X}_t\) enters the market share function \citep*{berry1994estimating, BLP1995, berry2014identification}. Partition the observed characteristics $\mathbf{x}_t$ into $\mathbf{x}_t^{(1)} = (\mathbf{x}_{0t}^{(1)},\ldots,\mathbf{x}_{Jt}^{(1)})$ and $\mathbf{x}_t^{(2)} = (\mathbf{x}_{0t}^{(2)},\ldots,\mathbf{x}_{Jt}^{(2)})$, with each $\mathbf{x}_{jt}^{(1)} \in \mathbb{R}^{K_1}$ and $\mathbf{x}_{jt}^{(2)} \in \mathbb{R}^{K_2}$, where $K_1 + K_2 = K$. Formally:

\begin{assumption}[Index]\label{Ass1} For each market $t$, there exists a vector of linear indices  $\boldsymbol{\delta}_t = (\delta_{0t},\ldots,\delta_{Jt})$ with $\delta_{jt} =  \mathbf{x}_{jt}\boldsymbol{\beta} - \alpha p_{jt}+\xi_{jt}$ for all $j \in\mathcal{J}$ such that 
\begin{equation}\label{eq:GeneralDIndex}
   s_{jt} = \sigma_j \left(\mathcal{X}_t\right)  = \sigma_j (\boldsymbol{\delta}_t,\mathbf{p}_t,\mathbf{x}_t^{(2)}), \quad j\in\mathcal{J}.
\end{equation}
\end{assumption}

The index restriction requires $\mathbf{x}^{(1)}_t$ and $\boldsymbol{\xi}_t$ to enter the market share function only through $\boldsymbol{\delta}_t$, leaving unrestricted how $\mathbf{p}_t$ and $\mathbf{x}_t^{(2)}$ enter it and jointly affect market shares with $\boldsymbol{\delta}_t$. In a BLP-type random coefficient logit model, $\mathbf{x}_t^{(1)}$ and  $\mathbf{x}_t^{(2)}$ enter utility with fixed and random coefficients, respectively. The logit model does not include $\mathbf{x}_t^{(2)}$. In models with nesting structures, such as the nested logit model, $\mathbf{x}_t^{(2)}$ collects the characteristics used to form nests.  The outside good normalizations imply that $\delta_{0t}=0$.

The second assumption imposes economically motivated invariance restrictions on the market share function in \eqref{eq:GeneralDIndex}.\footnote{Each invariance restriction is imposed only for transformations that leave the arguments within the domain of $\boldsymbol\sigma$. Translation invariance is stated on the unnormalized index domain; $\delta_{0t}=0$ selects the normalized representation.} As shown below, these restrictions make the approach scalable.
\begin{assumption}[Invariance]\label{Ass5}\noindent
\begin{enumerate}
    \item \label{Ass5.1} (Product) There exists a function $\sigma$,
common to all inside products, such that, for every $j\in\mathcal J_0$,
\[\sigma_j(\boldsymbol{\delta}_t,\mathbf p_t,\mathbf x_t^{(2)})
=\sigma\Big( (\delta_{jt},p_{jt},\mathbf x_{jt}^{(2)});
(\delta_{0t},p_{0t},\mathbf x_{0t}^{(2)});
\big((\delta_{kt},p_{kt},\mathbf x_{kt}^{(2)})\big)_{k\in\mathcal J_0\setminus\{j\}}\Big).\]  
     \item \label{Ass5.2} (Permutation) For each $j\in\mathcal J_0$ and any permutation $\pi$ of the indices of product $j$'s
inside competitors $\mathcal J_0\setminus\{j\}$,
\[
\begin{aligned}
&\sigma_j((\delta_{jt},p_{jt},\mathbf x_{jt}^{(2)});
(\delta_{0t},p_{0t},\mathbf x_{0t}^{(2)});
((\delta_{kt},p_{kt},\mathbf x_{kt}^{(2)}))_{k\in\mathcal J_0\setminus\{j\}})\\
&\quad=\sigma_j((\delta_{jt},p_{jt},\mathbf x_{jt}^{(2)});
(\delta_{0t},p_{0t},\mathbf x_{0t}^{(2)});
((\delta_{\pi(k)t},p_{\pi(k)t},\mathbf x_{\pi(k)t}^{(2)}))_{k\in\mathcal J_0\setminus\{j\}}).
\end{aligned}
\]
\item \label{Ass5.3} (Translation) For any $c\in\mathbb R$ and every $j\in\mathcal J$,
\[
\sigma_j(\boldsymbol{\delta}_t+c\mathbf1,\mathbf p_t,\mathbf x_t^{(2)})
=\sigma_j(\boldsymbol{\delta}_t,\mathbf p_t,\mathbf x_t^{(2)}).
\]
\end{enumerate}
\end{assumption}

The outside good market share is determined by
\[
\sigma_0(\boldsymbol\delta_t,\mathbf p_t,\mathbf x_t^{(2)})
=1-\sum_{j\in\mathcal J_0}
\sigma_j(\boldsymbol\delta_t,\mathbf p_t,\mathbf x_t^{(2)}).
\]

Many standard demand models, including the RCL model, satisfy these restrictions. Product invariance assigns every product $j \in\mathcal{J}_0$ the same market share function, eliminating product-specific functional forms.\footnote{Product invariance is sometimes referred to as symmetry \citep[see, e.g.,][]{gandhihoude}.} Permutation invariance leaves the market share of any inside product \(j\) unchanged under any relabeling of its inside competitors in \(\mathcal{J}_0\setminus\{j\}\): competitors matter through their characteristics, not their identities.\footnote{Permutation invariance is also referred to as exchangeability \citep*[see, e.g.,][]{pakes1994computing,BLP1995,compiani2018} and symmetry \citep*[see, e.g.,][]{bachmayr2024polynomial,gandhihoude}.} Translation invariance in \(\boldsymbol{\delta}_t\) makes market shares depend only on index differences, thereby justifying the normalization \(\delta_{0t}=0\). Product and permutation invariance apply to inside products. They do not require the outside good share to follow the same functional form or impose invariance to permutations that exchange the outside good with an inside competitor. Translation invariance applies to all products.

The final assumption imposes regularity conditions that ensure the market share model is well behaved.
\begin{assumption}[Regularity]\label{AssReg}
    The market share function $\boldsymbol \sigma$, defined for all $\boldsymbol\delta_t\in\mathbb R^{J+1}$, is continuously differentiable and satisfies
$\lim_{\delta_{jt} \to +\infty} \sigma_j(\boldsymbol{\delta}_t,\mathbf{p}_t,\mathbf{x}^{(2)}_t) = 1$ 
for all $j \in \mathcal{J}$ and for any fixed finite competitor indices $\{\delta_{kt}\}_{k \ne j}$. Its substitution matrix $[\partial \sigma_j(\boldsymbol{\delta}_t,\mathbf{p}_t,\mathbf{x}^{(2)}_t)/\partial \delta_{kt}]_{0\leq j,k \leq J}$ is everywhere symmetric positive semidefinite  with the inside good block positive definite.
\end{assumption}

Every additive random utility model obeying mild regularity assumptions generates a market share function that satisfies these conditions \citep*{dalyzachary,anderson1992discrete,koning2003}. These include the logit, nested logit, and their mixtures, such as the RCL model and the random coefficients nested logit model. The conditions ensure that  \eqref{eq:GeneralDIndex} is invertible in $\boldsymbol{\delta}_t$ and, when prices do not enter $\sigma$ directly, consistent with consumer theory.

For each competitor $k$, define its market share ratio and price and characteristic
differences relative to product \(j\) as
\[
r_{jk,t}=\frac{s_{kt}}{s_{jt}}\in\mathbb{R}_{++},\qquad
d_{jk,t}^{p}= p_{kt}-p_{jt}\in\mathbb{R},\qquad
\mathbf{d}_{jk,t}^{(2)}
=\mathbf{x}_{kt}^{(2)}-\mathbf{x}_{jt}^{(2)}
\in\mathbb{R}^{K_2},
\]
where the \(l\)th element of \(\mathbf{d}_{jk,t}^{(2)}\) is $d_{jk,t}^{(2,l)}
= x_{kt}^{(2,l)}-x_{jt}^{(2,l)}
\in\mathbb{R}$, $l=1,\ldots,K_2$. 

As \citet*{berry1994estimating} and \citet*{berry2014identification} emphasize, estimation targets the inverse market share function rather than the market share function itself. Under Assumptions \ref{Ass1} -- \ref{AssReg}, the next proposition establishes invertibility of the market share function and characterizes its inverse.

\begin{proposition}[Inverse market share representation]\label{prop:Main}
Suppose that Assumptions \ref{Ass1} -- \ref{AssReg} hold. Then the market share function
$\boldsymbol{\sigma}$ is invertible in $\boldsymbol{\delta}_t$ with $\delta_{0t}=0$. Moreover, there exists a continuously differentiable function $G$, common to all inside products and invariant to permutations of its inside competitor arguments, such that, for every $j\in\mathcal J_0$,
\begin{equation}\label{eq:InvGeneralDIndexCorrected}
\begin{aligned}
\ln\!\left(\frac{s_{jt}}{s_{0t}}\right)
+G\!\left(
(r_{j0,t},d^p_{j0,t},\mathbf d^{(2)}_{j0,t});
\bigl(r_{jk,t},d^p_{jk,t},\mathbf d^{(2)}_{jk,t}\bigr)_{k\in\mathcal J_0\setminus\{j\}}
\right)=\mathbf x_{jt}\boldsymbol\beta-\alpha p_{jt}+\xi_{jt}.
\end{aligned}
\end{equation}

Moreover, if all inside prices are strictly positive, $G$ admits an equivalent continuously differentiable representation invariant under permutations of all competitors, including the outside good.

Furthermore, for a given market share function $\boldsymbol{\sigma}$ and the normalization $\delta_{0t}=0$, the function $G$ satisfying \eqref{eq:InvGeneralDIndexCorrected} is unique at every admissible value of its arguments.

Lastly, defining $G_{jt}$ as the value of $G$ in \eqref{eq:InvGeneralDIndexCorrected}, the derivative matrix 
\begin{equation*}
    \Bigg[\dfrac{1}{s_{0t}} + \dfrac{1\{i=j\}}{s_{jt}}+\,\dfrac{\partial G_{it}}{\partial s_{jt}}\,\Bigg]_{1\le i,j\le J}
\end{equation*}
is symmetric positive definite on $\Delta_J^+$.\footnote{Derivatives are total derivatives with respect to inside shares on the share simplex, holding prices and characteristics fixed, with $s_{0t}=1-\sum_{i\in\mathcal J_0}s_{it}$.}
\end{proposition}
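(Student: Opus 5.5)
Fix $(\mathbf p_t,\mathbf x^{(2)}_t)$ and view $\boldsymbol\sigma$ as a map in $\boldsymbol\delta_t$. The plan is to obtain invertibility from a potential-function (gradient-map) argument, then transport the invariances of $\sigma$ to the inverse. Finally, I rewrite the inverse in terms of relative quantities by subtracting the logit term.

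\textbf{Step 1 (invertibility).} Restrict to $\delta_{0t}=0$ and consider the inside map $\tilde{\boldsymbol\sigma}:\mathbb R^J\to\{(s_1,\dots,s_J):s_j>0,\ \sum_j s_j<1\}$.

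By Assumption~\ref{AssReg}, the Jacobian of $\tilde{\boldsymbol\sigma}$ is the inside block of the substitution matrix. This block is symmetric positive definite everywhere, so $\tilde{\boldsymbol\sigma}=\nabla W$ for a strictly convex $C^2$ potential $W$; the symmetric Jacobian on the simply connected domain $\mathbb R^J$ gives this by Poincaré's lemma. The gradient of a strictly convex function is injective, since $\langle \nabla W(a)-\nabla W(b),a-b\rangle>0$ for $a\neq b$.

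For surjectivity onto the open simplex, given a target $\mathbf s$ I minimize $W(\boldsymbol\delta)-\mathbf s\cdot\boldsymbol\delta$. I will show coercivity from two facts: the limits $\sigma_j\to1$ as $\delta_j\to+\infty$, and the adding-up constraint with $s_0>0$, which forces the shares to vanish as indices go to $-\infty$ relative to the outside good. Along any ray the directional derivative of $W$ then tends to $\max$ or $\min$ of the relevant support values, which strictly exceed or fall below $\mathbf s\cdot$direction. I expect this coercivity/boundary argument to be the main technical obstacle, because the regularity assumption only gives the $+\infty$ limits coordinatewise with the other indices fixed. I would combine it with monotonicity of $W$ along rays and with translation invariance to treat directions in which several coordinates diverge jointly.

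Invertibility on the full index domain then follows from translation invariance (Assumption~\ref{Ass5}.\ref{Ass5.3}).

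\textbf{Step 2 (representation of $G$).} Write the inverse as $\delta_j=\sigma_j^{-1}(\mathbf s,\mathbf p,\mathbf x^{(2)})$. Product and permutation invariance of $\sigma$ transfer to the inverse because inversion commutes with simultaneously relabelling products in both $\boldsymbol\delta$ and $\mathbf s$. Hence $\delta_j=D\bigl((s_j,p_j,\mathbf x^{(2)}_j);(s_0,0,\mathbf 0);(s_k,p_k,\mathbf x^{(2)}_k)_{k\neq j}\bigr)$ for a common $D$ that is symmetric in the competitor arguments.

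Next I use translation invariance, applied jointly with the transformation $\delta_{jt}=\mathbf x_{jt}\boldsymbol\beta-\alpha p_{jt}+\xi_{jt}$. Shifting $p_{jt}$ and $\mathbf x^{(2)}_{jt}$ by a common amount across all products (including product $0$, where the domain permits) leaves the relevant index differences unchanged, so $D-\delta$-level terms depend on prices and characteristics only through the differences $d^p_{jk}$ and $\mathbf d^{(2)}_{jk}$, with product $0$ entering through $d^p_{j0}=-p_j$.

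Shares enter through $(s_j,s_0,s_k)$ subject to the simplex constraint. Since $s_j=1/(1+r_{j0}^{-1}\cdot\ldots)$, more precisely $s_j=\bigl(1+\sum_{k\neq j} r_{jk}\bigr)^{-1}$ with $k$ ranging over $0$ and the inside competitors, every share is a function of the ratios. I then define
\[
G=D-\ln(s_j/s_0),
\]
expressed in $(r,d^p,\mathbf d^{(2)})$. This $G$ inherits the $C^1$ property from the inverse function theorem (the Jacobian is nonsingular) and is symmetric in the inside competitors.

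\textbf{Step 3 (full permutation invariance when prices are positive).} When all inside prices are strictly positive, the outside good's tuple $(r_{j0},d^p_{j0},\mathbf d^{(2)}_{j0})=(r_{j0},-p_j,-\mathbf x^{(2)}_j)$ is distinguishable from inside tuples only through $p_j$. One could tag the outside good by that fact, but that is not smooth, so the cleaner route is different. I would write $p_j=-d^p_{j0}$ as a symmetric function of all competitor tuples, namely $-\min_k d^p_{jk}$ over the tuples with $d^p_{jk}+p_j=0$. Alternatively, I would use the fact that the outside good is the unique competitor with $p_k=0$, i.e.\ $d^p_{jk}=-p_j$. I will construct a $C^1$ symmetric extension by summing over all competitors with a smooth partition of unity localized near the outside good's tuple. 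I flag this as fiddly but secondary.

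\textbf{Step 4 (uniqueness and positive definiteness).} Uniqueness is immediate: with $\delta_0=0$ fixed, the left-hand side must equal the unique inverse value, so $G$ is pinned down pointwise.

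For the derivative matrix, differentiate $\ln(s_i/s_0)+G_i=\delta_i(\mathbf s)$ with respect to inside shares, using $s_0=1-\sum s_i$. The logit part contributes $1/s_0+1\{i=j\}/s_j$, so the displayed matrix equals $\partial\boldsymbol\delta/\partial\mathbf s$. This is the inverse of the inside Jacobian block, which is symmetric positive definite, and the inverse of a symmetric positive definite matrix is itself symmetric positive definite.
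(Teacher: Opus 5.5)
Your outline matches the paper's proof: a strictly convex potential whose gradient is the inside share map, relabeling plus uniqueness of the normalized inverse, subtraction of $\ln(s_j/s_0)$, and the Jacobian as the inverse of the inside substitution block. Getting $W$ from Poincar\'e's lemma and the ratio dependence from the simplex constraint is leaner than the paper's detour through a homogeneous representation $\mathbf H$ and its inverse. Two steps, however, do not go through as written.

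\textbf{Step~3.} You already have the right ingredient: $-\min_k d^p_{jk}$ over all competitors equals $p_j$. Moreover, because every inside price is strictly positive, $d^p_{j0}=-p_j<p_k-p_j=d^p_{jk}$ for all $k\in\mathcal J_0\setminus\{j\}$. So the outside block is the \emph{unique} block attaining the minimum price difference. Your conclusion that tagging this block is not smooth is the mistake.

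Set $\widetilde G(V)=G\bigl(v_*(V);V\setminus\{v_*(V)\}\bigr)$, with $v_*(V)$ the argmin block. Permuting blocks leaves $v_*$ unchanged and only reorders the inside blocks, to whose order $G$ is invariant. Because the minimum is strict, the argmin is locally constant on the permuted admissible domain, so $\widetilde G$ is $C^1$ with no smoothing. The partition of unity is unnecessary. It also could not be built from bumps in a single block, since the same block value can be the outside block of one configuration and an inside block of another (inside price differences can be negative too).

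\textbf{Step~2.} You invoke translation invariance in prices and characteristics, which Assumption~\ref{Ass5}.\ref{Ass5.3} does not provide. That assumption concerns $\boldsymbol\delta$ only, $\boldsymbol\sigma$ depends on $(\mathbf p,\mathbf x^{(2)})$ without restriction, and $p_0=0$, $\mathbf x^{(2)}_0=\mathbf 0$ cannot be shifted. No invariance is needed. Because the outside good is normalized to zero, $p_j=-d^p_{j0}$ and $p_k=d^p_{jk}-d^p_{j0}$, and likewise for $\mathbf x^{(2)}$. So the relative objects, including the outside block, are a smooth bijective reparametrization of the levels, and level dependence is carried by the $k=0$ block.

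\textbf{Step~1.} The surjectivity obstacle you flag goes away if you bound the asymptotic slope with the subgradient inequality instead of following limits along the ray. For any $\boldsymbol\eta$ and $\mathbf d\neq\mathbf 0$, $\lim_{t\to\infty}W(\boldsymbol\delta+t\mathbf d)/t\ge\tilde{\boldsymbol\sigma}(\boldsymbol\eta)\cdot\mathbf d$.
\begin{itemize}
\item Sending $\eta_j\to+\infty$ with the other indices fixed gives $\tilde{\boldsymbol\sigma}(\boldsymbol\eta)\to\mathbf e_j$.
\item Sending all inside indices jointly to $-\infty$ gives $\tilde{\boldsymbol\sigma}(\boldsymbol\eta)\to\mathbf 0$. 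This follows from the $j=0$ limit in Assumption~\ref{AssReg} combined with translation invariance in $\boldsymbol\delta$, not from adding-up.
\end{itemize}
The recession slope is therefore at least $\max(0,\max_j d_j)$. This strictly exceeds $\mathbf s\cdot\mathbf d$, because $\mathbf s\cdot\mathbf d$ is a convex combination of $d_1,\ldots,d_J,0$ with positive weights $s_1,\ldots,s_J,s_0$. Hence $W-\mathbf s\cdot\boldsymbol\delta$ is coercive, and its minimizer solves $\tilde{\boldsymbol\sigma}(\boldsymbol\delta)=\mathbf s$. The paper does not prove this step itself; it invokes an external lemma.
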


The decomposition \eqref{eq:InvGeneralDIndexCorrected} forms the basis of the estimation approach: it combines the familiar logit inverse term $\ln(s_{jt}/s_{0t})$ and a permutation-invariant function $G$ of the relative variables $r_{jk,t}$, $d_{jk,t}^{p}$, and $\mathbf{d}_{jk,t}^{(2)}$, including those associated with the outside good. Setting \(G=0\) recovers the logit inverse, which maps shares one-to-one to indices via \(\delta_{jt}=\ln(s_{jt}/s_{0t})\): a larger market share \(s_{jt}\) relative to the outside-good share \(s_{0t}\) corresponds to a higher product index \(\delta_{jt}\). Beyond the logit case, product \(j\)'s market share may also be larger when it faces more distant substitutes in the \((\mathbf{p}_t,\mathbf{x}_t^{(2)})\) space. The function \(G\) captures this mechanism by allowing substitution patterns to vary flexibly with product proximity in that space.

The price and characteristic differences in Proposition~\ref{prop:Main} uniquely determine their levels under the outside-good normalizations. Since $p_{0t}=0$, we recover $p_{jt}=-d^p_{j0,t}$ and $p_{kt}=d^p_{jk,t}-d^p_{j0,t}$, with analogous identities for characteristics. Thus, the inverse market share function and $G$ can be expressed in terms of  share ratios and price and characteristic differences without imposing translation invariance in prices or characteristics: outside good arguments retain dependence on levels. $G$ also incorporates the normalization $\delta_{0t}=0$. If all inside prices are strictly positive, the outside good is uniquely identified by the smallest price difference. This yields an equivalent continuously differentiable representation invariant under permutations of all competitors.

Invertibility is only one implication of these assumptions. The regularity conditions rule out zero market shares and impose symmetry and positive semidefiniteness on the substitution matrix. The invariance restrictions replace the \(J+1\) product-specific functions with a common, permutation-invariant function \(G\) of relative objects. This representation reduces the number of arguments from \((2+K_2)\times(J+1)\) to \((2+K_2)\times J\), and exploiting permutation invariance yields a further reduction in dimensionality, as shown below.

When prices do not enter \(G\) and \(\alpha>0\), symmetry and positive semidefiniteness also ensure consistency with consumer theory. Symmetry guarantees the existence of a surplus function whose derivatives with respect to product indices equal market shares. Positive semidefiniteness makes this surplus function convex, implying the law of demand: each product's market share is nonincreasing in its own price. The model does not rule out complementarity in demand, defined as a negative cross-price derivative of market share.

Supplemental Appendix \ref*{Appendix:Connection} relates this model to the RCL model \citep*{lu2023semi,Nevo2026}, the semiparametric logit model \citep*{compiani2018}, and the universal logit model \citep*{mcfadden1984econometric}.

\paragraph{Outside good exclusion from $G$.} 
Proposition \ref{prop:Main} allows \(G\) to depend on the relative objects associated with every competitor, including the outside good. For identification and
estimation, I impose the stronger restriction that $G$ depends only on the relative objects associated with inside competitors. For each $j\in\mathcal{J}_0$, let $\mathbf V_{jt}^{I}
=
\{
r_{jk,t},d_{jk,t}^{p},\mathbf d_{jk,t}^{(2)}
\}_{k\in\mathcal J_0\setminus\{j\}}$ collect the relative objects for product \(j\)'s inside competitors, and impose
\begin{equation}\label{eq:OutsideExclusion}
G\!\left(r_{j0,t},d_{j0,t}^p,\mathbf d_{j0,t}^{(2)};\mathbf V_{jt}^I\right)
=G^I\!\left(\mathbf V_{jt}^I\right).
\end{equation}
where \(G^{I}\) is continuously differentiable and invariant to permutations of the inside competitors. This exclusion does not follow from the outside-good normalizations or from Proposition \ref{prop:Main}. Hereafter, I write \(G\) and \(\mathbf V_{jt}\) for \(G^{I}\) and
\(\mathbf V_{jt}^{I}\), respectively. 

Restriction \eqref{eq:OutsideExclusion} removes an exact overlap between the linear and nonparametric components. Under the outside good normalizations, $p_{jt}=-d^p_{j0,t}$, $\mathbf x^{(2)}_{jt}=-\mathbf d^{(2)}_{j0,t}$, and $\ln(s_{jt}/s_{0t})=-\ln r_{j0,t}$. An unrestricted $G$ containing these arguments could therefore absorb the dependent variable and the linear effects of price and $\mathbf x^{(2)}$. In particular, the separation condition in Assumption~\ref{Ass3}.\ref{Ass3.3} fails because price itself is a function of these arguments. Retaining the outside good arguments would require additional restrictions addressing both index scale and the overlap between the linear and nonparametric components. The outside good exclusion, together with the identification assumptions below, permits separate identification of $(\alpha,\boldsymbol\beta)$ and $G$.

Economically, \eqref{eq:OutsideExclusion} allows flexible substitution among inside products while constraining substitution with the outside good. Holding covariates fixed, a common increase $c$ in inside indices leaves conditional inside shares unchanged and multiplies participation odds $(1-s_{0t})/s_{0t}$ by $e^c$. The function $G$ captures departures from logit through inside share ratios and signed price and characteristic differences, without direct dependence on outside good arguments. The outside share enters only through $\ln(s_{jt}/s_{0t})$, while price and characteristic levels can still affect demand through the indices and resulting shares. The RCL model does not generally satisfy this exclusion, but simulations show that the model can nevertheless approximate its substitution patterns in the design considered.

\subsection{Nonparametric Identification}

Suppose the modeler observes market-level data \(\{\mathbf{s}_t,\mathbf{p}_t,\mathbf{x}_t,\mathbf{z}_t\}_{t=1}^T\), where \(\mathbf{z}_t\) are instruments for prices and market shares. Under the outside good exclusion  \eqref{eq:OutsideExclusion}, rearranging \eqref{eq:InvGeneralDIndexCorrected} yields a partially linear instrumental variable  model \citep*{ai2003efficient,florens2012instrumental}
\begin{equation}\label{eq:PLIV}
     \ln \left(\frac{s_{jt}}{s_{0t}}\right) = - G(\mathbf{V}_{jt}) + \mathbf{x}_{jt}\boldsymbol{\beta}-\alpha p_{jt}+\xi_{jt},\quad j \in\mathcal{J}_0,
\end{equation}
where both prices and market shares are endogenous.\footnote{Market shares are also endogenous because they depend jointly on prices and unobserved characteristics, and consumers may consider the unobserved characteristics when choosing products.}

The common inside good intercept is not separately identified from $G$, so its allocation is a normalization. I absorb it into $G$ and require that no linear combination of the regressors equals a nonzero constant. When the regressors span a constant, as with the product fixed effects in the empirical application, I instead omit the constant term of the sieve and let the fixed effects absorb the intercept. Either choice fixes the allocation without restricting the model.

Assume the following exclusion, completeness, and separation restrictions on \(\mathbf{z}_t\).\footnote{With endogenous characteristics, \citet*{borusyak2026nonparametric} impose a restriction they call faithfulness.}

\begin{assumption}[Instruments]\label{Ass3}For all $j\in\mathcal{J}_0$, the following restrictions hold:
    \begin{enumerate}
        \item \label{Ass3.1} (Exclusion) $ \mathbb{E}\left[ \xi_{jt}|\mathbf{x}_t,\mathbf{z}_t\right]=0$ almost surely.
        \item \label{Ass3.2} (Completeness for $G$)  For all functions $B(\mathbf{V}_{jt})$ with finite expectation, if $\mathbb{E}[B(\mathbf{V}_{jt})|\mathbf{x}_t,\mathbf{z}_t]=0$ almost surely, then $B(\mathbf{V}_{jt})=0$ almost surely.
        \item \label{Ass3.3} (Separation between $G$ and $(\alpha,\boldsymbol{\beta})$) For all functions $B(\mathbf{V}_{jt})$ with finite expectation and all $(\widetilde{\alpha},\widetilde{\boldsymbol{\beta}})$, if $\mathbb{E}[B(\mathbf{V}_{jt})|\mathbf{x}_t,\mathbf{z}_t]=\mathbf{x}_{jt}\widetilde{\boldsymbol{\beta}}-\widetilde{\alpha}\mathbb{E}[p_{jt}|\mathbf{x}_t,\mathbf{z}_t]$ almost surely, then $\widetilde{\alpha}=0$, $\widetilde{\boldsymbol{\beta}}=\mathbf{0}$, and $B(\mathbf{V}_{jt})=0$ almost surely.
    \end{enumerate}
\end{assumption}

Exclusion requires mean independence between the instruments $\mathbf{z}_t$ and demand shocks $\xi_{jt}$, so any variation the instruments generate in prices and market shares is exogenous. Completeness is the nonparametric counterpart to the rank condition in linear instrumental variable models: it requires the instruments to generate sufficiently rich variation in $\mathbf{V}_{jt}$ to identify $G$ uniquely \citep*{newey2003instrumental,berry2014identification}. Separation is the semiparametric counterpart to the same rank condition. It requires the characteristics \(\mathbf{x}_{jt}\) and predicted prices \(\mathbb{E}[p_{jt}|\mathbf{x}_t,\mathbf{z}_t]\) to vary in ways that no admissible function of \(\mathbf{V}_{jt}\) can reproduce \citep*{florens2012instrumental,newey2003instrumental}. Equivalently, no nonzero linear combination of \(\bigl(\mathbf{x}_{jt},-\mathbb{E}[p_{jt}|\mathbf{x}_t,\mathbf{z}_t]\bigr)\) can be represented as \(\mathbb{E}[B(\mathbf{V}_{jt})|\mathbf{x}_t,\mathbf{z}_t]\) for any admissible \(B\). This prevents \(G\) from absorbing the linear price and characteristics effects, allowing \(\alpha\) and \(\boldsymbol{\beta}\) to be identified separately from \(G\). When $\mathbf x_t^{(1)}$ is absent, equation~\ref{eq:PLIV}, together with exclusion and completeness, implies that all inside good demand shocks are equal within a market. Supplemental Appendix~\ref*{Appendix:PolynomialIdentification} establishes this implication and discusses identification within a fixed-degree polynomial specification. 

Intuitively, instruments help identify $G$ by generating sufficiently rich variation in the relative objects entering $\mathbf V_{jt}$. Separately identifying $(\alpha,\boldsymbol\beta)$ requires variation in predicted prices and characteristics that conditional expectations of functions of these relative objects cannot reproduce.\footnote{Let $\bar p_t=J^{-1}\sum_{j\in\mathcal{J}_0} p_{jt}$ and $\bar{\mathbf x}^{(2)}_t=J^{-1}\sum_{j\in\mathcal{J}_0}\mathbf x^{(2)}_{jt}$. The relative inputs determine $p_{jt}-\bar p_t$ and $\mathbf x^{(2)}_{jt}-\bar{\mathbf x}^{(2)}_t$. Hence, variation in these centered components alone does not generally distinguish their linear effects from $G$. In particular, if a characteristic entering $\mathbf x^{(2)}_{jt}$ is demeaned within every market, its linear coefficient cannot be separated from an unrestricted $G$ without an additional restriction.}
Market-wide cost or markup shifters may help identify $\alpha$ by generating variation in price levels. Examples include exogenous changes in input costs, taxes, or fees, as well as ownership changes that affect equilibrium markups. They need not leave price differences unchanged: a common cost shock can affect prices differently across products. Instruments that reveal substitution patterns by generating sufficiently rich variation in the relative objects $\mathbf V_{jt}$, for example through exogenous changes in the choice set, help identify $G$.\footnote{Candidate price instruments include cost shifters and their proxies, such as prices of the same good in other markets (Hausman instruments). Other candidates shift markups through competitors' characteristics (BLP-type instruments), exogenous changes in market structures such as ownership changes, or demographics in related markets (Waldfogel-Fan instruments), subject to the exclusion restriction \citep[see][Section 4.2]{berryhaile2021}.}
The same instruments can help identify both components if they generate both types of variation.

The following proposition establishes nonparametric identification of the market share function under Assumptions \ref{Ass1} -- \ref{Ass3} and the outside good exclusion \eqref{eq:OutsideExclusion}.

\begin{proposition}[Nonparametric demand identification]\label{PropBH} Under Assumptions \ref{Ass1} -- \ref{Ass3} and the outside good exclusion \eqref{eq:OutsideExclusion}, 
$(\alpha,\boldsymbol{\beta})$ are point identified, and $G$ is point 
identified on the support of $\mathbf{V}_{jt}$. It follows that the 
inverse market share function $\boldsymbol{\sigma}^{-1}$ is point 
identified on the support of 
$(\mathbf{s}_t,\mathbf{p}_t,\mathbf{x}_t^{(2)})$ and, by Proposition 
\ref{prop:Main}, that the market share function $\boldsymbol{\sigma}$ 
is point identified on $\mathcal{X}$.
\end{proposition}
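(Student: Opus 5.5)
The proof follows the standard partially linear IV identification argument of \citet*{newey2003instrumental} and \citet*{florens2012instrumental}, applied to \eqref{eq:PLIV}. Suppose two structures $(\alpha,\boldsymbol\beta,G)$ and $(\widetilde\alpha,\widetilde{\boldsymbol\beta},\widetilde G)$ both satisfy \eqref{eq:PLIV} together with Assumption~\ref{Ass3}, and generate the same distribution of observables $(\mathbf s_t,\mathbf p_t,\mathbf x_t,\mathbf z_t)$. Both $G$ and $\widetilde G$ obey the outside good exclusion \eqref{eq:OutsideExclusion}, so they are functions of $\mathbf V_{jt}$ alone. Solving \eqref{eq:PLIV} for $\xi_{jt}$ and taking conditional expectations given $(\mathbf x_t,\mathbf z_t)$ under exclusion (Assumption~\ref{Ass3}.\ref{Ass3.1}) yields, for each $j\in\mathcal J_0$,
\begin{equation*}
\mathbb E\!\left[\ln(s_{jt}/s_{0t})+G(\mathbf V_{jt})\,\middle|\,\mathbf x_t,\mathbf z_t\right]=\mathbf x_{jt}\boldsymbol\beta-\alpha\,\mathbb E[p_{jt}|\mathbf x_t,\mathbf z_t]
\end{equation*}
almost surely. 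The same equation holds with tildes. Subtracting the two equations and setting $B=G-\widetilde G$ gives
\[
\mathbb E[B(\mathbf V_{jt})|\mathbf x_t,\mathbf z_t]=\mathbf x_{jt}(\boldsymbol\beta-\widetilde{\boldsymbol\beta})-(\alpha-\widetilde\alpha)\mathbb E[p_{jt}|\mathbf x_t,\mathbf z_t].
\]

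The separation condition (Assumption~\ref{Ass3}.\ref{Ass3.3}) now gives $\widetilde\alpha=\alpha$, $\widetilde{\boldsymbol\beta}=\boldsymbol\beta$, and $B=0$ almost surely. Hence $G=\widetilde G$ almost surely on the support of $\mathbf V_{jt}$. Completeness (Assumption~\ref{Ass3}.\ref{Ass3.2}) gives the same conclusion for $G$ once the linear part has been pinned down. Two points need checking in this step. First, $B$ must have finite expectation; I will either assume this as part of the admissible class for $G$ or obtain it from integrability of $\ln(s_{jt}/s_{0t})$, $\mathbf x_{jt}$ and $p_{jt}$ through \eqref{eq:PLIV}. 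Second, I will rely on the intercept normalization stated before Assumption~\ref{Ass3}, under which no linear combination of regressors equals a constant. This ensures that the constant absorbed into $G$ is not confounded with $\boldsymbol\beta$. Continuity of $G$, which comes from Proposition~\ref{prop:Main}, then upgrades ``almost surely'' to pointwise equality on the support.

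To pass to $\boldsymbol\sigma^{-1}$, note that by Proposition~\ref{prop:Main} the normalized inverse is $\delta_{jt}=\ln(s_{jt}/s_{0t})+G(\mathbf V_{jt})$ for $j\in\mathcal J_0$, with $\delta_{0t}=0$. Every argument of $\mathbf V_{jt}$, namely share ratios and price and characteristic differences, is a known function of $(\mathbf s_t,\mathbf p_t,\mathbf x^{(2)}_t)$. Point identification of $G$ therefore implies point identification of $\boldsymbol\sigma^{-1}$ on the support of $(\mathbf s_t,\mathbf p_t,\mathbf x^{(2)}_t)$. With $(\alpha,\boldsymbol\beta)$ known, $\xi_{jt}=\delta_{jt}-\mathbf x_{jt}\boldsymbol\beta+\alpha p_{jt}$ is identified, so $\boldsymbol\delta_t$ is a known function of $\mathcal X_t$. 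The invertibility part of Proposition~\ref{prop:Main} then identifies $\boldsymbol\sigma$ on $\mathcal X$ as the unique share vector solving $\boldsymbol\sigma^{-1}(\mathbf s,\mathbf p_t,\mathbf x^{(2)}_t)=\boldsymbol\delta_t$.

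The main obstacle is this last step. The inverse is identified only on the support of observed shares, yet the statement claims identification of $\boldsymbol\sigma$ on all of $\mathcal X$. For a given $\mathcal X_t$, the share vector solving the inversion must lie in the region where $G$ is identified. Since every point of $\mathcal X$ is generated by some market and maps to observed $(\mathbf s_t,\mathbf p_t,\mathbf x^{(2)}_t)$ under the true $\boldsymbol\sigma$, I plan to argue that the supports correspond under the bijection of Proposition~\ref{prop:Main}. I would interpret ``on $\mathcal X$'' in exactly this sense, and keep uniqueness of the solution from the injectivity guaranteed by Assumption~\ref{AssReg}.
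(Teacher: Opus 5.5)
Your proposal is correct and follows the paper's proof essentially step for step. You subtract the two conditional moment restrictions, use separation to obtain $\widetilde\alpha=\alpha$ and $\widetilde{\boldsymbol\beta}=\boldsymbol\beta$, conclude $B=0$ almost surely, and use continuity to pass from almost-sure to pointwise equality on the support. The only difference in that part is that the paper invokes completeness for $B=0$, while you take it directly from separation as stated; both work.

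For the final step the paper is more direct than your inversion route. Once $\boldsymbol\xi_t$ is identified, $\mathcal X_t$ is a known function of observables, and the observed $\mathbf s_t=\boldsymbol\sigma(\mathcal X_t)$ identify $\boldsymbol\sigma$ on its support $\mathcal X$ by continuity. This settles the support-correspondence issue you flag: support points need not be realized by any market, and continuity of $\boldsymbol\sigma$ is what makes the correspondence go through.
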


I now turn from identification to estimation. I jointly estimate \((\alpha,\boldsymbol{\beta})\) and \(G\), approximating \(G\) with a polynomial sieve whose degree grows with the sample size. 

\subsection{Symmetric (Permutation-Invariant) Sieve Estimation \label{Subsec:EmpiricalAppli}}

\paragraph{Method of sieves.} Fix $J\geq 2$. Collect the objects relative to product $j$ in $\mathbf{V}_{jt}\in\mathbb{R}^{\widetilde v}$, where $\widetilde v=(2+K_2) \times (J-1)$, and normalize each component to lie in $[-1,1]^{\widetilde v}$. Let $\widetilde{\mathbf V}_{jt}=\mathbf h(\mathbf V_{jt})$ denote this normalization.\footnote{In practice, the market share ratios $r_{jk,t}$ and the difference terms
$d\in\{d_{jk,t}^{(2,l)},d_{jk,t}^{p}\}$ need not lie in $[-1,1]$ and
therefore must be transformed before constructing the Chebyshev basis. For the ratio $r_{jk,t}=s_{kt}/s_{jt}$, I use the relative difference
transformation $\widetilde r_{jk,t}
=
\frac{r_{jk,t}-1}{r_{jk,t}+1}
=
\frac{s_{kt}-s_{jt}}{s_{kt}+s_{jt}}$, which maps $(0,\infty)$ onto $(-1,1)$. For each difference term $d$, I choose the transformation according to its support. If $d$ has bounded support $[a,b]$, I use the linear rescaling 
$\widetilde d
=
2\frac{d-a}{b-a}-1$, which maps $[a,b]$ onto $[-1,1]$. If $d$ has unbounded support, I instead use the algebraic transformation $\widetilde d = \frac{d}{\sqrt{1+d^2}}$,
which maps $\mathbb{R}$ onto $(-1,1)$, following
\citet*{shen2014approximations}.} The map $\mathbf h$ uses fixed, strictly increasing transformations that may differ across coordinate types but are identical across competitors for each corresponding coordinate. Define $\widetilde G(\widetilde{\mathbf v})=G(\mathbf h^{-1}(\widetilde{\mathbf v}))$, so that $\widetilde G(\widetilde{\mathbf V}_{jt})=G(\mathbf V_{jt})$. For uniform approximation, consider a fixed compact domain $\widetilde{\mathcal V}$ of admissible normalized inputs that is unchanged by relabeling inside competitors.

For every $(j,t)$, the model and the exclusion restrictions imply 
\begin{equation}\label{eq:xistruct}
\mathbb{E}[\xi_{jt}|\mathbf{x}_t,\mathbf{z}_t]=0, \quad \xi_{jt}=\ln \left(\frac{s_{jt}}{s_{0t}}\right) + G(\mathbf{V}_{jt}) - \mathbf{x}_{jt}\boldsymbol{\beta}+\alpha p_{jt}.
\end{equation}

As usual, projecting the conditional moments onto a finite set of instrument functions yields unconditional moments. Since \(G(\mathbf{V}_{jt})=\widetilde{G}(\widetilde{\mathbf{V}}_{jt})\), the sieve approximates \(\widetilde{G}\) on \(\widetilde{\mathcal V}\).

I estimate \(\widetilde{G}\) nonparametrically by a sieve whose dimension grows with \(T\). First allow equation-specific approximations. The sieve implementation below imposes a common basis and coefficient vector. For each equation $j\in\mathcal{J}_0$, let $\boldsymbol{\phi}_{P_j}^{(j)}:[-1,1]^{\widetilde v}\to\mathbb{R}^{P_j}$, defined by $\boldsymbol{\phi}_{P_j}^{(j)}(\widetilde{\mathbf V}_{jt})=(\phi_{1,P_j}^{(j)}(\widetilde{\mathbf V}_{jt}),\ldots,\phi_{P_j,P_j}^{(j)}(\widetilde{\mathbf V}_{jt}))'$, denote a basis vector, for example a tensor Bernstein or Chebyshev polynomial basis. The linear sieve approximation to \(\widetilde{G}\) is given by
\begin{equation*}\label{eq:sieve}
\widetilde{G}(\widetilde{\mathbf{V}}_{jt};\boldsymbol{\theta}_j)
= (\boldsymbol{\phi}_{P_j}^{(j)}(\widetilde{\mathbf{V}}_{jt}))^\top\boldsymbol{\theta}_j,
\qquad \boldsymbol{\theta}_j \in \mathbb{R}^{P_j},
\end{equation*}
where \(P_j=P_j(T)\to\infty\) as \(T\to\infty\) with $J$ fixed. 

Let $\boldsymbol{\psi}_{K_j}^{(j)}: (\mathbf{x}_t,\mathbf{z}_t)\mapsto(\boldsymbol{\psi}_{1,K_j}^{(j)}(\mathbf{x}_t,\mathbf{z}_t),\ldots,\boldsymbol{\psi}_{K_j,K_j}^{(j)}(\mathbf{x}_t,\mathbf{z}_t))'\in \mathbb{R}^{K_j}$ denote the instrument basis for equation $j$, with $K_j = K_j(T) \rightarrow \infty$ as $T \rightarrow \infty$. Assume that $\mathbb E\!\big[\boldsymbol{\psi}_{K_j}^{(j)}(\mathbf{x}_t,\mathbf{z}_t)\,\boldsymbol{\phi}_{P_j}^{(j)}(\widetilde{\mathbf{V}}_{jt})^{\!\top}\big]\in\mathbb{R}^{K_j \times P_j}$ has full column rank $P_j$. This rank condition implies the order condition $K_j\geq P_j$. For joint identification of $(\alpha,\boldsymbol\beta,\boldsymbol\theta_D)$ in the correctly specified common sieve, impose full column rank for the moments of all regressors, as in Supplemental Appendix~\ref*{Appendix:ProofUniqueExist}. Stacking over markets gives the instrument matrix $\boldsymbol{\Psi}_{(j)}=(\psi_{K_j}^{(j)}(\mathbf{x}_1,\mathbf{z}_1),\ldots,\psi_{K_j}^{(j)}(\mathbf{x}_T,\mathbf{z}_T))' \in \mathbb{R}^{T \times K_j}$.

Define the moment vector for observation $(j,t)$ by
 \begin{equation*}
    g_{jt}(\alpha,\boldsymbol{\beta},\boldsymbol{\theta}_j)=\xi_{jt}(\alpha,\boldsymbol\beta,\boldsymbol\theta_j)\boldsymbol{\psi}_{K_j}^{(j)}(\mathbf{x}_t,\mathbf{z}_t)=(\ln (s_{jt}/s_{0t}) + \widetilde{G}(\widetilde{\mathbf{V}}_{jt};\boldsymbol{\theta}_j) - \mathbf{x}_{jt}\boldsymbol{\beta}+\alpha p_{jt})\boldsymbol{\psi}_{K_j}^{(j)}(\mathbf{x}_t,\mathbf{z}_t),
\end{equation*}
and stack the per-equation sample moments into the vector
\begin{equation*}
    \mathbf{g}_T(\boldsymbol{\theta}) = \left(\mathbf{g}_T^{(1)}(\boldsymbol{\theta})',\ldots,\mathbf{g}_T^{(J)}(\boldsymbol{\theta})'\right)'\in\mathbb{R}^{\sum_j K_j},
    \qquad
    \mathbf{g}_T^{(j)}(\boldsymbol{\theta})=\dfrac{1}{JT}\sum_{t=1}^T g_{jt}(\boldsymbol{\theta}),
\end{equation*}
where $\boldsymbol{\theta}=(\alpha,\boldsymbol{\beta},\boldsymbol{\theta}_1,\ldots,\boldsymbol{\theta}_J)$ stacks all parameters.\footnote{Here, $\xi_{jt}(\boldsymbol\theta)$ denotes the sieve residual. At the true linear coefficients, it coincides with the structural shock only if the sieve represents $G$ exactly on the relevant support.} When all equations use the same instrument functions, $\boldsymbol{\psi}^{(j)}_{K_j}$ may be replaced by observation-specific instruments $\mathbf z_{jt}\in\mathbb R^K$ (e.g., own-characteristic and competitor-based functions of $(\mathbf x_t,\mathbf z_t)$), and one may instead use the pooled moment $\mathbf g_T(\boldsymbol\theta)=\frac{1}{JT}\sum_{j,t}\xi_{jt}(\boldsymbol\theta)\mathbf z_{jt}$.

The sieve-GMM estimator $\widehat{\boldsymbol{\theta}}$ solves
\begin{equation*}\label{eq:MinQ}
    \min_{\boldsymbol{\theta}}   \mathbf{g}_T(\boldsymbol{\theta})'\mathbf{W}_T \mathbf{g}_T(\boldsymbol{\theta}),
\end{equation*}
where the weighting matrix is typically the 2SLS weighting matrix $\mathbf{W}_T= (\boldsymbol{\Psi}'\boldsymbol{\Psi}/(JT))^{-1}$ with $\boldsymbol{\Psi}=\mathrm{diag}(\boldsymbol{\Psi}_{(1)},\ldots,\boldsymbol{\Psi}_{(J)})$. Here, $\boldsymbol\Psi$ has full column rank and is block diagonal for stacked moments; for pooled moments, its rows are $\mathbf z_{jt}'$.

Sieve estimation presents two challenges. The first is ill-posedness. Projecting onto the bases $\phi^{(j)}_{P_j}(\widetilde{\mathbf{V}}_{jt})$ and $\psi^{(j)}_{K_j}(\mathbf{x},\mathbf{z})$ induces the finite-dimensional map
\begin{equation*}
T_{K_j,P_j}:\ \theta_j\in\mathbb{R}^{P_j}\ \longmapsto\ 
\mathbb{E}\big[\boldsymbol{\psi}^{(j)}_{K_j}(\mathbf{x}_t,\mathbf{z}_t)\,
\boldsymbol{\phi}^{(j)}_{P_j}(\widetilde{\mathbf{V}}_{jt})^{\!\top}\big]\theta_j
= \mathbf{M}_{K_j,P_j}\theta_j.
\end{equation*}

Recovering $\theta_j$ requires inverting this map. As $P_j$ and $K_j$ grow, the smallest singular value of $\mathbf{M}_{K_j,P_j}$ may approach zero. Small sampling errors in $\mathbf{M}_{K_j,P_j}\theta_j$ can then produce large errors in the sieve coefficients and recovered inverse market share function. The literature stabilizes this inversion by restricting the admissible function space through (i) sieve-dimension control or hard thresholding \citep*{horowitz2011applied,chen2012estimation}, (ii) Tikhonov (ridge) regularization \citep*{newey2003instrumental,chen2012estimation}, and (iii) theory-based shape restrictions such as monotonicity or convexity \citep*{matzkin1994,chetverikov2018,blundell2012measuring,chetverikov2017nonparametric}. 

The second challenge is the curse of dimensionality: in a standard tensor-product sieve, the number of basis terms grows exponentially in the dimension of $\widetilde{\mathbf{V}}_{jt}$. To see this, note that each coordinate of $\widetilde{\mathbf V}_{jt}$ lies in $[-1,1]$, and fix a univariate basis $\{\phi_k^{(\ell)}\}_{k=0}^D$ on $[-1,1]$ for each $\ell=1,\ldots,\widetilde v$. For multi-indices $\mathbf k=(k_1,\ldots,k_{\tilde v})\in\{0,\ldots,D\}^{\tilde v}$, the usual tensor-product approximation takes the form 
\[
\widetilde{G}(\widetilde{\mathbf{V}}_{jt};\theta_j)\;=\;\sum_{\mathbf k\in\{0,\ldots,D\}^{\tilde v}}\theta_{j,\mathbf k}\,\Phi_{\mathbf k}(\widetilde{\mathbf{V}}_{jt}),\quad \Phi_{\mathbf k}(\widetilde{\mathbf{V}}_{jt})=\prod_{\ell=1}^{\tilde v}\phi^{(\ell)}_{k_\ell}\!\big(\widetilde{V}_{jt,\ell}\big),
\]
and therefore requires $P_j=(D+1)^{\tilde v}$ coefficients $\{\theta_{j,\mathbf{k}}\}$. 

Because $\widetilde v=(2+K_2) \times (J-1)$, the number of sieve coefficients increases sharply with the number of products. Ill-posedness magnifies this dimensionality burden, leading to poor finite-sample performance and limited practical feasibility even when the dimension is moderate. For example, when $K_2=0$ and $D=2$, a tensor-product sieve already requires $4.78\times 10^6$ parameters with only $J=8$ products. Prior work addresses this curse through dimension-reduction strategies such as additive structures \citep*[][Section 5.1]{ichimura2007implementing}, permutation-invariant specifications \citep*{compiani2018}, and machine learning methods \citep[see][and references therein]{Bakhitov2025}.

I address ill-posedness through sieve-dimension control (i.e., selecting the polynomial degree $D$ through cross-validation) and theory-based shape restrictions, yielding a constrained sieve-GMM estimator.  I next address the curse of dimensionality.

\paragraph{Symmetric sieve construction.} Building on \citet*{bachmayr2024polynomial}, I approximate \(\widetilde{G}\) with a symmetric polynomial sieve that pools competitor-specific basis terms \(\phi_{\mathbf r}(\widetilde{\mathbf v}_{jk,t})\). Applying a common transformation to each competitor and aggregating the terms imposes permutation invariance by construction. Once \(J-1\geq D\), the number of sieve coefficients depends on the polynomial degree \(D\), which governs approximation accuracy, rather than on the number of products \(J\). For a fixed degree, this breaks the curse of dimensionality in products without sacrificing approximation power for permutation-invariant functions.

Fix a product \(j\), and let \(\widetilde{\mathbf v}_k\) denote the block of arguments in \(\widetilde{\mathbf V}_{jt}\) corresponding to inside competitor \(k\). Each block lies in \(\Omega=[-1,1]^d\), where \(d=2+K_2\), and product \(j\) has \(N=J-1\) inside competitors. The approximation target is the permutation-invariant function $\widetilde G:\widetilde{\mathcal{V}}\to\mathbb R$ of $(\widetilde{\mathbf v}_1,\ldots,\widetilde{\mathbf v}_N)$, where $\widetilde{\mathcal{V}}\subseteq\Omega^N=[-1,1]^{\widetilde v}$ is the compact domain of admissible normalized inputs, which is unchanged by relabeling inside competitors, and $\widetilde v=dN$.

Choose a one-body tensor-product basis $\Phi=\{\phi_{\mathbf r}:\mathbf r\in\mathbb N^d\}$ on $\Omega$. Here $\mathbb{N}$ includes zero. I use the Chebyshev polynomial basis of the first kind, for which the degree-$n$ polynomial is defined by $T_n(v)=\cos\!\big(n\arccos v\big)$ for $|v|\leq 1$. For a multi-index $\mathbf{r}=(r_1,\ldots,r_d)\in\mathbb N^d$ and an input $\widetilde{\mathbf v}=(\widetilde v_1,\ldots,\widetilde v_d)\in\Omega$, the associated tensor-product basis element and its total degree are 
\begin{equation*}
\phi_\mathbf{r}(\widetilde{\mathbf v})
=
\prod_{q=1}^d T_{r_q}(\widetilde v_q),
\qquad
\deg(\phi_\mathbf{r})=\sum_{q=1}^d r_q.
\end{equation*}

For an input $\widetilde{\mathbf{V}}=(\widetilde{\mathbf{v}}_1,\ldots,\widetilde{\mathbf{v}}_N)\in\Omega^N$, define the pooled one-body feature by
\begin{equation}   \label{eq:pooledfeature}
A_\mathbf{r}(\widetilde{\mathbf{V}})\ =\ \dfrac{1}{N} \sum_{n=1}^{N}\phi_\mathbf{r}(\widetilde{\mathbf{v}}_n), 
\end{equation}
and for a multi-index $\mathbf{u}=(\mathbf{u}_1,\ldots,\mathbf{u}_N)$, define the many-body symmetric feature by
\begin{equation}\label{eq:manypooledfeature}
    A_\mathbf{u}(\widetilde{\mathbf{V}})\ =\ \prod_{m=1}^{N} A_{\mathbf{u}_m}(\widetilde{\mathbf{V}}).
\end{equation}

Equation \eqref{eq:pooledfeature} implements the Deep Sets idea of \citet*{zaheer2017}, under which permutation-invariant functions are represented through pooled one-body features.\footnote{I pool by averaging, whereas \citet*{bachmayr2024polynomial} sum. Averaging rescales every basis function by a constant, so the span is unchanged and their approximation result continues to apply. It sets the constant feature to one rather than $N$, so factors with a zero multi-index leave \eqref{eq:manypooledfeature} unchanged and features of different body order share a common scale, which improves conditioning.} Their products in \eqref{eq:manypooledfeature} add higher-order symmetric interactions.

Let $\preceq_d$ denote the graded lexicographic order on $\mathbb N^d$: multi-indices are ordered first by total degree and, within each degree class, lexicographically. Restricting tuples of one-body multi-indices to a fixed nondecreasing order eliminates duplicate many-body basis functions. Specifically, restrict $\mathbf{u}=(\mathbf{u}_1,\ldots,\mathbf{u}_N)$ to $(\mathbb N^d)_{\mathrm{ord}}^N=\{\mathbf{u}\in(\mathbb N^d)^N:\mathbf{u}_1\preceq_d\cdots\preceq_d\mathbf{u}_N\}$ and define its total degree by $\deg(\mathbf{u})=\sum_{m=1}^N\deg(\phi_{\mathbf{u}_m})$. The degree-$D$ sieve then uses only the basis functions $\{A_{\mathbf{u}}:\mathbf{u}\in(\mathbb N^d)_{\mathrm{ord}}^N,\ \deg(\mathbf{u})\leq D\}$.

Let the degree-$D$ symmetric Chebyshev space be
\begin{equation*}
\mathcal{A}_D
=
\operatorname{span}
\left\{
A_{\mathbf{u}}:
\mathbf{u}\in(\mathbb N^d)_{\mathrm{ord}}^N,\ 
\deg(\mathbf{u})\leq D
\right\}.
\end{equation*}

The approximant is
\begin{equation*}
\widetilde G_D(\widetilde{\mathbf V};\boldsymbol\theta_D)
=
\sum_{\substack{
\mathbf u\in(\mathbb N^d)_{\mathrm{ord}}^N\\
\deg(\mathbf u)\leq D
}}
\theta_{\mathbf u}\,A_{\mathbf u}(\widetilde{\mathbf V}),
\qquad
\boldsymbol\theta_D=(\theta_{\mathbf u}).
\end{equation*}

Under the symmetric sieve construction, the inverse market share equations become
\begin{equation}\label{eq:xi}
    \xi_{jt}(\boldsymbol{\theta}) = \ln \left(\frac{s_{jt}}{s_{0t}}\right)+ \widetilde{G}_D(\widetilde{\mathbf{V}}_{jt};\boldsymbol{\theta}_D) - \mathbf{x}_{jt}\boldsymbol{\beta}+\alpha p_{jt},\quad j \in \mathcal{J}_0,
\end{equation}
where $\boldsymbol{\theta}=(\alpha,\boldsymbol{\beta},\boldsymbol{\theta}_D)'$.

The following uniform approximation result of \citet*{bachmayr2024polynomial} provides the theoretical foundation for the symmetric polynomial sieve.

\begin{proposition}[Approximation]\label{Prop:Approx}
Let $\widetilde{G}:\widetilde{\mathcal{V}}\to\mathbb R$ be a continuous, permutation-invariant function. Then, for every $\varepsilon>0$, there exist $D \in \mathbb{N}$ and $\widetilde{G}_D\in\mathcal A_D$ such that
\[
\sup_{\widetilde{\mathbf{V}}\in\widetilde{\mathcal{V}}}\big|\widetilde{G}(\widetilde{\mathbf{V}})- \widetilde{G}_D(\widetilde{\mathbf{V}})\big| < \varepsilon.
\]
\end{proposition}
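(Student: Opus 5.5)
The plan has three steps: extend to the full cube, approximate by a polynomial and then symmetrize it, and finally show that the symmetrized polynomial lies in $\mathcal A_D$.

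\emph{Step 1 (extension and polynomial approximation).} Since $\widetilde{\mathcal V}\subseteq\Omega^N$ is compact and invariant under relabeling the $N$ competitor blocks, I first extend $\widetilde G$ to a continuous function on all of $\Omega^N=[-1,1]^{dN}$ using the Tietze extension theorem. Call the extension $F$. I then symmetrize it as
\[
\bar F(\widetilde{\mathbf V})=\frac{1}{N!}\sum_{\pi}F(\widetilde{\mathbf v}_{\pi(1)},\ldots,\widetilde{\mathbf v}_{\pi(N)}).
\]
Here $\bar F$ is continuous and permutation invariant on $\Omega^N$, and it agrees with $\widetilde G$ on $\widetilde{\mathcal V}$ because $\widetilde G$ is invariant there.

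By Stone--Weierstrass there is a polynomial $Q$ in the $dN$ coordinates with $\sup_{\Omega^N}|\bar F-Q|<\varepsilon$. Replacing $Q$ by its symmetrization $\bar Q$ keeps it within $\varepsilon$ of $\bar F$, since averaging over permutations is a sup-norm contraction that fixes $\bar F$. Hence $\bar Q$ is a symmetric polynomial within $\varepsilon$ of $\widetilde G$ on $\widetilde{\mathcal V}$. Let $D$ be its total degree.

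\emph{Step 2 (the symmetric polynomial space is spanned by the $A_{\mathbf u}$).} It remains to show that every symmetric polynomial of total degree at most $D$ lies in $\mathcal A_D$. This is the main obstacle, since it is a multisymmetric version of the fundamental theorem of symmetric polynomials.

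Expanding $\bar Q$ in the product Chebyshev basis, it is a combination of symmetrized tensors
\[
S_{\mathbf u}=\operatorname{sym}\Bigl(\prod_{n=1}^N\phi_{\mathbf u_n}(\widetilde{\mathbf v}_n)\Bigr),\qquad \mathbf u\in(\mathbb N^d)^N_{\mathrm{ord}},\ \deg\mathbf u\le D.
\]
These symmetrized tensors span the symmetric polynomials of degree at most $D$. I then induct on the number $\nu(\mathbf u)$ of nonzero one-body indices in $\mathbf u$ (the "body order"). Consider the product of pooled features $\prod_{m:\mathbf u_m\neq 0}A_{\mathbf u_m}$. Expanding it, the terms in which all nonzero factors sit at distinct particle positions give a positive multiple of $S_{\mathbf u}$, where the multiple is determined by $N$ and the multiplicities of the indices in $\mathbf u$. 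The remaining terms have two or more factors at the same position $n$. For those terms, products $\phi_{\mathbf r}\phi_{\mathbf r'}$ at a single position are re-expanded in the one-body Chebyshev basis using $T_aT_b=\tfrac12(T_{a+b}+T_{|a-b|})$. This yields symmetrized tensors with strictly smaller body order and no larger total degree.

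By the induction hypothesis those lower-order terms lie in $\mathcal A_D$, so $S_{\mathbf u}\in\mathcal A_D$. The base case is body order zero, where the constant equals $A_{\mathbf 0}=1$. The factors $A_{\mathbf 0}=1$ are harmless in products, which is why zero indices can pad $\mathbf u$ to length $N$. The degree bookkeeping needs $\nu(\mathbf u)\le\min(N,D)$, which holds automatically because nonzero one-body indices each contribute degree at least one.

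\emph{Step 3 (conclusion).} Steps 1 and 2 together give $\bar Q\in\mathcal A_D$ with $\sup_{\widetilde{\mathcal V}}|\widetilde G-\bar Q|<\varepsilon$, which proves the proposition. This is exactly the density argument of \citet*{bachmayr2024polynomial}. The averaging normalization in \eqref{eq:pooledfeature} only rescales the basis functions and therefore leaves the span unchanged. The only delicate part is the triangular re-expansion in Step 2. If it proves awkward, a fallback is to use power-sum multisymmetric generators and invoke the classical result that power sums of degree at most $N$ generate the ring of multisymmetric polynomials in characteristic zero.
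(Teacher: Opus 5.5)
Your proof is correct and follows the paper's own route: extend $\widetilde G$ to $\Omega^N$, average over permutations of the competitor blocks, then use that symmetric polynomials are dense and are spanned by the pooled products $A_{\mathbf u}$. The paper simply cites these last two facts as Propositions 3.2 and 3.3 of \citet*{bachmayr2024polynomial}, whereas your Stone--Weierstrass symmetrization and body-order induction prove them directly. One caveat concerns only your power-sum fallback: when $D>N$ it can produce products of more than $N$ nonconstant pooled factors, which are not themselves of the form $A_{\mathbf u}$, so it would need an extra reduction step that your main induction avoids because every symmetrized tensor has body order at most $N$.
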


Thus, pooled one-body features and their products uniformly approximate every continuous, permutation-invariant function on $\Omega^N$. 

Define the uniform best-approximation error as
\begin{equation*}
    e_D(\widetilde{G}) = \inf_{g \in \mathcal{A}_D}\sup_{\widetilde{\mathbf{V}}\in\widetilde{\mathcal{V}}}|\widetilde{G}(\widetilde{\mathbf{V}})-g(\widetilde{\mathbf{V}})|.
\end{equation*}

Because the sieve spaces are nested, Proposition \ref{Prop:Approx} implies that  $e_D(\widetilde{G}) \rightarrow 0$ as  $D\rightarrow \infty$.\footnote{Nestedness makes $e_D(\widetilde G)$ nonincreasing in $D$. Proposition~\ref{Prop:Approx} implies that, for every $\varepsilon>0$, there exists $D_\varepsilon$ such that $e_{D_\varepsilon}(\widetilde G)<\varepsilon$. Hence, $e_D(\widetilde G)\leq e_{D_\varepsilon}(\widetilde G)<\varepsilon$ for every $D\geq D_\varepsilon$, proving the result.}  This establishes the approximation condition for unconstrained nonparametric estimation using the symmetric sieve space $\mathcal A_{D_T}$, whose degree $D_T$ increases with the sample size.\footnote{Supplemental Appendix~\ref*{Appendix:SupNormConsistency} states the additional conditions for sup-norm consistency, following \citet[Section 3.1]{Chen2007}.} See Supplemental Appendix \ref*{Appendix:BodyApprox} for another useful approximation result.

The key scalability result is that, once $J-1\geq D$, the number of sieve coefficients $P=P(J,D,d)$ depends only on $(D,d)$ and no longer on $J$. A term of total degree at most $D$ involves at most $D$ nonconstant pooled factors. Additional factor slots contribute only constants, giving alternative representations of existing many-body symmetric features. Increasing $J-1$ beyond $D$ therefore leaves the basis unchanged.

This result delivers scalable nonparametric demand estimation: the sieve approximates any continuous, permutation-invariant component arbitrarily well, while its coefficient count is independent of $J$ once $J -1 \geq D$. I select $D$ by cross-validation. The gain over an unrestricted total-degree-$D$ polynomial sieve is parsimony, not approximation power: for a permutation-invariant target, a best total-degree-$D$ polynomial approximation on $\widetilde{\mathcal{V}}$ can itself be chosen permutation invariant. The symmetric sieve achieves the same best approximation error as that sieve with fewer coefficients, potentially lowering variance and easing the bias-variance trade-off. Table \ref{tabdim1} reports coefficient counts relative to sieves of degree at most $D$ in each coordinate.

\subsection{Implementation \label{Sec:Implementation}}

\paragraph{Constrained GMM.} For a fixed degree $D$, estimation reduces to the estimation of the finite-dimensional parameter vector $\theta = (\alpha,\boldsymbol{\beta},\boldsymbol{\theta}_D)$, where $\boldsymbol{\theta}_D$ collects the coefficients of the symmetric polynomial sieve for $G$. Let $\mathbf{g}_T(\theta)$ denote the sample analogue of the unconditional moment conditions, with residual $\xi_{jt}(\theta)$ defined by \eqref{eq:xi}. Let $\Theta$ denote the parameter space induced by the theory-based shape restrictions described below, and let $\mathbf{W}_T$ denote a positive definite GMM weighting matrix.  The sieve-GMM estimator solves\footnote{In the Monte Carlo simulations, I estimate the models by constrained one-step GMM. I first orthonormalize the instruments using a QR decomposition, replacing the instrument matrix with an orthonormal basis for its column space. This transformation improves numerical conditioning without changing the instrument span. I form the GMM criterion using the transformed instruments. In the empirical application, I estimate the model by constrained two-step GMM: I form the first-step criterion using the transformed instruments and use the first-step residuals to construct a market-clustered second-step weighting matrix. When the second-step weighting matrix is ill-conditioned, I regularize it using the linear Ledoit-Wolf shrinkage estimator before inversion \citep{ledoit2004well}. I solve all GMM problems in MATLAB using MOSEK \citep*{mosek}.}
\[
\hat{\boldsymbol{\theta}} = \arg\min_{\boldsymbol{\theta} \in \Theta}  \mathbf{g}_T(\boldsymbol{\theta})' \mathbf{W}_T \mathbf{g}_T(\boldsymbol{\theta}).
\]

Because the sample moments are affine in $\theta$ and the restrictions below define a convex admissible set $\Theta$, the estimator solves a convex program. Supplemental Appendix~\ref*{Appendix:ConstrainedGMMUniqueness} gives conditions for existence and uniqueness of the constrained GMM solution. I select the sieve degree by minimizing a cross-validated GMM criterion, following \citet*{rivers2002model,komiyama2018cross}. Implementation details are provided in Supplemental Appendix \ref*{Appendix:Appli}.

\paragraph{Demand derivatives.} For each product $j$, define
\[
\widehat G_{jt}
=
\widetilde G_D
\bigl(
\widetilde{\mathbf V}_{jt};
\widehat{\boldsymbol{\theta}}_D
\bigr),
\qquad
\Gamma^{s,p}_{jk,t}
=
\frac{\partial\widehat G_{jt}}{\partial s_{kt}},
\qquad
\Gamma^p_{jk,t}
=
\frac{\partial\widehat G_{jt}}{\partial d^p_{jk,t}},
\quad j\neq k.
\]

The derivative matrices of the estimated inverse market share  with respect to inside good shares and prices are
\[
\mathbf A_t
=
\frac{1}{s_{0t}}\mathbf 1\mathbf 1'
+
\operatorname{diag}(\mathbf s_t)^{-1}
+
\left[
\Gamma^{s,p}_{jk,t}
\right]_{j,k=1}^{J},
\]
and
\[
[\mathbf B_t]_{jk}
=
\begin{cases}
\displaystyle
\widehat{\alpha}
-
\sum_{\ell\neq j}
\Gamma^p_{j\ell,t},
& j=k,\\[0.8em]
\displaystyle
\Gamma^p_{jk,t},
& j\neq k.
\end{cases}
\]

Let $\mathbf C_t
=
\operatorname{diag}(\mathbf s_t)^{-1}
+
\left[
\Gamma^{s,p}_{jk,t}
\right]_{j,k=1}^{J}$ so that $\mathbf A_t=\mathbf C_t+s_{0t}^{-1}\mathbf 1\mathbf 1'$, where $\mathbf 1$ denotes the $J$-dimensional vector of ones.
At a candidate coefficient vector, I write $\mathbf A_t(\boldsymbol{\theta}_D)$,
$\mathbf B_t(\alpha,\boldsymbol{\theta}_D)$ and $\mathbf C_t(\boldsymbol{\theta}_D)$.

Provided $\mathbf A_t$ is nonsingular, the matrix of demand derivatives with respect to prices is\footnote{Supplemental Appendix~\ref*{app:shape_restrictions} provides the exact constraint formulas and the restrictions used to ensure that $\mathbf A_t$ is nonsingular.}
\[
\mathbf{J}_{\boldsymbol{\sigma},t}^{\mathbf{p}}
=
-\mathbf A_t^{-1}\mathbf B_t,
\]
from which we can compute own- and cross-price elasticities, diversion ratios, and demand curvature. For products $j$ and $k$ in market $t$, writing $\partial\sigma_{kt}/\partial p_{jt}$ for the $(k,j)$ entry of $\widehat{\mathbf{J}}_{\boldsymbol{\sigma},t}^{\mathbf{p}}$, the price elasticity of product $k$ with respect to the price of product $j$ and the diversion ratio from product $j$ to product $k\neq j$ are
\[
\varepsilon_{kj,t}
=\frac{p_{jt}}{\sigma_{kt}}\frac{\partial \sigma_{kt}}{\partial p_{jt}},
\qquad
\mathcal D_{jk,t}
=-\frac{\partial\sigma_{kt}/\partial p_{jt}}{\partial\sigma_{jt}/\partial p_{jt}},
\]
and the diversion ratio to the outside good is $\mathcal D_{j0,t}=1-\sum_{k\neq j}\mathcal D_{jk,t}$. The own-price demand curvature of product $j$ is
\[
\rho_{jt}
=\frac{\sigma_{jt}\,\partial^2\sigma_{jt}/\partial p_{jt}^2}
      {\bigl(\partial\sigma_{jt}/\partial p_{jt}\bigr)^2},
\]
holding other prices and demand shifters fixed \citep*{mrazova2017,miravete2023pass,miravete2023elasticity}. Curvature requires $\boldsymbol\sigma$ to be twice differentiable in prices, which Assumption~\ref{AssReg} does not impose, and depends on second derivatives of $\widehat G$.

\paragraph{Shape restrictions.} I use theory-based shape restrictions to improve finite-sample stability and discipline substitution patterns. Specifically, I require $\mathbf C_t(\boldsymbol{\theta}_D)$ to be symmetric and strictly row-diagonally dominant, which is sufficient for positive definiteness of $\mathbf C_t(\boldsymbol{\theta}_D)$ and hence of $\mathbf A_t(\boldsymbol{\theta}_D)$. I impose dominance on $\mathbf C_t$ rather than on $\mathbf A_t$ because the rank-one term $s_{0t}^{-1}\mathbf 1\mathbf 1'$ can prevent diagonal dominance without preventing positive definiteness. When prices do not enter \(G\), \(\widehat{\alpha}>0\), and this derivative matrix is positive definite, the restrictions imply local consistency with consumer theory at the observed data points. In particular, they deliver the law of demand and a local surplus function. Auxiliary variables for absolute-value terms make all restrictions linear in these variables and the sieve coefficients. I impose these restrictions at the observed data points in each market.

Following \citet*{compiani2018}, one can impose one restriction under substitutability, defined as $\partial\sigma_k(\boldsymbol{\delta}_t,\mathbf{p}_t,\mathbf{x}^{(2)}_t)/\partial\delta_{jt}\leq0$ for every $k\neq j$, and two further
restrictions under additional conditions. Under substitutability,
the inside good block of the substitution matrix is an $M$-matrix, so its inverse
$\mathbf A_t(\boldsymbol{\theta}_D)$ has nonnegative entries, which adds $TJ^2$ inequalities. If
the own effect of each demand index also dominates the sum of its cross effects, each diagonal
entry of $\mathbf A_t(\boldsymbol{\theta}_D)$ is at least as large as the off-diagonal entries in
its row, which adds $TJ(J-1)$ inequalities. If demand in addition is decreasing in own price, that is, the diagonal entries of $\mathbf{J}_{\boldsymbol{\sigma}}^{\mathbf{p}}$ are negative, and each inverse share function is weakly increasing in every competitor's price, the diagonal entries of
$\mathbf B_t(\alpha,\boldsymbol{\theta}_D)$ are nonnegative, which adds $TJ$ inequalities. All three are linear in the sieve coefficients and require no auxiliary variables.

Direct implementation generates $TJ(J-1)/2$ symmetry equalities, $TJ(J-1)$ auxiliary variables, and $TJ(2J-1)$ diagonal-dominance inequalities, for a total of $TJ(5J-3)/2$ constraint rows. Because the symmetry equalities restrict the same $K_D$ sieve coefficients, their rank cannot exceed $\min\{K_D,TJ(J-1)/2\}$. A low-memory QR construction retains $r_{\mathrm{sym}}\leq \min\{K_D,TJ(J-1)/2\}$ orthonormal equality restrictions, reducing memory use and improving conditioning while preserving the feasible coefficient set. This reduces the total number of local constraint rows from $TJ(5J-3)/2$ to $r_{\mathrm{sym}}+TJ(2J-1)$, eliminating $TJ(J-1)/2-r_{\mathrm{sym}}$ redundant equality rows. Supplemental Appendix~\ref*{app:shape_restrictions} provides the exact constraint formulas. 

The restrictions above hold only at the observed market states. I also derive a coefficient-only sufficient condition making the symmetric part of $\mathbf A_t(\boldsymbol{\theta}_D)$ uniformly positive definite over a prespecified share and price domain. This condition is conservative, but the number of constraints it requires does not depend on the number of markets. If $K_s$ sieve coefficients depend on market shares, combining the global restriction with the low-memory symmetry representation requires $r_{\mathrm{sym}}+2K_s+1$ constraint rows and $K_s$ auxiliary variables, rather
than $r_{\mathrm{sym}}+TJ(2J-1)$ rows and $TJ(J-1)$ auxiliary variables under the local implementation. The global restriction therefore removes $TJ(2J-1)-(2K_s+1)$ constraint rows and $TJ(J-1)-K_s$ auxiliary variables. Because $r_{\mathrm{sym}}\leq K_D$, the total number of constraint rows is bounded by $K_D+2K_s+1$, independently of $T$ for a fixed sieve dimension.

\section{Monte Carlo Simulations  \label{Section:Simul}} 

In this section, I use Monte Carlo simulations to evaluate the empirical performance of the scalable nonparametric demand (SNPD for short) approach. I conduct three experiments. The first uses a correctly specified BLP benchmark and asks whether SNPD can recover the resulting substitution patterns, demand curvature, and markups. The remaining two experiments introduce BLP misspecification: misspecification of the random coefficient distribution and misspecification of the logit-type functional form. These two experiments ask whether SNPD can improve on BLP when BLP is misspecified. In all three experiments, the supply side is a static oligopolistic price competition model. 

For each experiment, I generate 100 Monte Carlo replications. Each replication contains \(T\in\{100,500\}\) independent markets with \(J\in\{25,50\}\) products. Products are owned by $J/5$ multiproduct firms, each producing five products. For each replication, I simulate data from a model of supply and demand by generating product characteristics, cost shifters, and unobserved demand and cost shocks, and then solving for equilibrium prices and market shares. The observed demand characteristic \(x_{jt}^{(2)}\) and cost shifters \(w_{jt}^1,w^2_{jt}\), which enter only marginal costs, are independently drawn from \(\mathcal{U}(0,1)\). The demand unobservable is \(\xi_{jt}=u_{1jt}+u_{2jt}\), and the cost unobservable is \(\omega_{jt}=u_{1jt}+u_{3jt}\), where \(u_{1jt}\), \(u_{2jt}\), and \(u_{3jt}\) are independently drawn from \(\mathcal{U}(-0.25,0.25)\). The supply side is a static oligopoly price competition model with marginal cost $c_{jt}=2+x_{jt}^{(2)}+w_{jt}^1+ w^2_{jt}+\omega_{jt}$. The demand model varies across experiments, as described below. Its parameters are calibrated so that the outside good share ranges from 0.8 to 0.9, consistent with \citet*{conlon2018}. Prices \(p_{jt}\) and market shares \(s_{jt}\) are determined by solving for the model equilibrium.\footnote{In each market \(t\), firm \(f\) chooses the prices of its products
\(\mathcal{J}_f\) to maximize profits $\sum_{j\in\mathcal{J}_f}(p_{jt}-c_{jt})\sigma_j(\mathcal{X}_t)$. Assuming a pure-strategy Nash equilibrium in prices, I use the implied first-order conditions to recover marginal costs \(\widehat c_{jt}\) and markups \(100(p_{jt}-\widehat c_{jt})/p_{jt}\).} 

I report three sets of statistics. First, I summarize substitution patterns using average own-price elasticities, cross-price elasticities, and diversion ratios to competing products and the outside good. For cross-product substitution, I report the average across all competing products and separate averages for the five closest products and all remaining products in price-characteristic space. For diversion ratios, I report both the average diversion to an individual competing product and the total diversion to each group of products. Second, I report average own-price demand curvature. Third, I report average implied markups. I weight diversion ratios by the market share of the source product and markups by own market shares; elasticities and curvature are unweighted.

I implement the SNPD approach following the procedure described in Subsection \ref{Sec:Implementation}. I simulate the RCL models using the PyBLP package developed by \citet*{conlon2018}, integrating the market share functions using a level-9 Gauss-Hermite product rule. I estimate the RCL models using PyBLP's implementation of the nested fixed-point algorithm of \citet*{BLP1995}, following the best practices recommended by \citet*{conlon2018}.\footnote{I integrate the market share functions using a level-9 Gauss-Hermite product rule, recover mean utilities with SQUAREM using a  tolerance of \(10^{-14}\), and minimize the GMM objective with Knitro 13.1's Interior/Direct algorithm. Estimation proceeds via two-step GMM with heteroskedasticity-robust weighting matrices. The initial instrument set includes cost shifters and differentiation instruments proposed by \citet*{gandhihoude}, based on sums of squared differences in $x^{(2)}_{jt}$ over products owned by the same and rival firms, respectively. Following \citet*{gandhihoude}, I construct analogous instruments using prices predicted from a regression on $w_{jt}^1, w_{jt}^2, x^{(2)}_{jt}$, and the differentiation instruments based on $x^{(2)}_{jt}$. I then construct approximate feasible optimal instruments and reestimate the model.}

\subsection{RCL Model with Normal Price and Characteristic Coefficients}

In this experiment, I generate data from an RCL model with normally distributed random coefficients on both price and an exogenous product characteristic. Consumer \(n\)'s conditional indirect utility from choosing product \(j \in \mathcal{J}_0\) in market \(t\) is 
\begin{equation}\label{eq:uRCL}
    u_{njt} = \beta_0 - \alpha_n p_{jt} +  \beta_n x_{jt}^{(2)}  + \xi_{jt} + \varepsilon_{njt}.
\end{equation}
For the outside good, \(u_{n0t}=\varepsilon_{n0t}\). The random utility shocks \(\varepsilon_{njt}\) are i.i.d. according to a type I extreme value distribution across consumers, products, and markets. I set the intercept to $\beta_0=-4$ and draw the random coefficients according to $\alpha_n \sim \mathcal{N}(0.5,0.15^2)$ and $\beta_n \sim \mathcal{N}(2,3^2)$, allowing for substantial unobserved heterogeneity in preferences.

Table~\ref{tab:mc_normal_j50} reports the results for $J=50$.
The results for $J=25$ are reported in Table~\ref{tab:mc_normal_j25}
in Supplemental Appendix~\ref*{app:mc_j25}. I compare SNPD with a correctly specified RCL model and a logit model. Because BLP imposes the correct functional form
and random coefficient distribution, it provides a benchmark for measuring
the finite-sample cost of relaxing the correct parametric structure. The logit model provides another useful misspecified benchmark: its poor performance confirms that the DGP is meaningfully different from the logit model and that the Monte Carlo exercise is nontrivial.

As expected, BLP performs very well: it yields estimates close to
the true values, with small RMSEs. Conversely, the logit model
performs substantially worse, understating own- and cross-price
elasticities and overstating diversion to the outside good and markups. By construction, it also fails to reproduce the stronger substitution toward nearby products in characteristic space. These results show that the DGP is sufficiently different from the
logit model that recovering the relevant economic objects requires a richer
demand model.

SNPD also performs well for both $J=25$ and $J=50$. It closely recovers average own- and cross-price elasticities and markups and captures the stronger substitution toward the five closest products. However, it slightly attenuates the difference between close and more distant products. The same pattern appears in diversion ratios: SNPD assigns less total diversion to inside products and more to the outside good, a difference that does not decline with $T$. Moreover, as expected, its RMSEs are larger than those of the correctly specified RCL model and, except for own-price curvature when $T = 100$, substantially smaller than those of the misspecified logit model. The RMSEs decline as the number of markets increases from \(100\) to \(500\), both when \(J=25\) and when \(J=50\). Overall, the finite-sample cost of flexibility is modest.

\begin{table}[H]
\centering
\caption{RCL model with normal price and characteristic coefficients: $J=50$}
\label{tab:mc_normal_j50}
\begingroup
\fontsize{10.25}{10.5}\selectfont
\setlength{\tabcolsep}{0.8pt}
\renewcommand{\arraystretch}{0.92}
\begin{tabular*}{\textwidth}{
@{\extracolsep{\fill}}l c c
r@{\hspace{4pt}}r
r@{\hspace{4pt}}r
r@{\hspace{4pt}}r@{}}
\toprule
Statistic & $T$ & True
& \multicolumn{2}{c}{\makebox[0pt][c]{\shortstack{Correctly\\[-1pt] specified RCL}}}
& \multicolumn{2}{c}{SNPD}
& \multicolumn{2}{c}{Logit} \\
\cmidrule(lr){4-5}
\cmidrule(lr){6-7}
\cmidrule(lr){8-9}
& & & Mean & RMSE & Mean & RMSE & Mean & RMSE \\
\midrule

\multicolumn{9}{@{}l}{\textit{Price elasticities}} \\
\quad Own
& 100 & -2.508 & -2.504 & 0.039
& -2.497 & 0.042 & -2.392 & 0.127 \\
& 500 & -2.508 & -2.510 & 0.017
& -2.501 & 0.020 & -2.393 & 0.118 \\[0.5pt]

\quad Cross, all
& 100 & 0.022 & 0.022 & 0.000
& 0.023 & 0.005 & 0.013 & 0.009 \\
& 500 & 0.022 & 0.022 & 0.000
& 0.023 & 0.002 & 0.013 & 0.009 \\[0.5pt]

\quad Cross, five closest
& 100 & 0.032 & 0.032 & 0.001
& 0.029 & 0.008 & 0.013 & 0.020 \\
& 500 & 0.032 & 0.032 & 0.000
& 0.029 & 0.004 & 0.013 & 0.020 \\[0.5pt]

\quad Cross, others
& 100 & 0.021 & 0.021 & 0.000
& 0.023 & 0.005 & 0.013 & 0.008 \\
& 500 & 0.021 & 0.021 & 0.000
& 0.023 & 0.003 & 0.013 & 0.008 \\

\addlinespace[2pt]
\multicolumn{9}{@{}l}{\textit{Mean diversion ratios}} \\
\quad All
& 100 & 0.011 & 0.011 & 0.000
& 0.009 & 0.002 & 0.005 & 0.006 \\
& 500 & 0.011 & 0.011 & 0.000
& 0.009 & 0.002 & 0.005 & 0.006 \\[0.5pt]

\quad Five closest
& 100 & 0.025 & 0.025 & 0.000
& 0.020 & 0.006 & 0.009 & 0.016 \\
& 500 & 0.025 & 0.025 & 0.000
& 0.020 & 0.005 & 0.009 & 0.016 \\[0.5pt]

\quad Other products
& 100 & 0.009 & 0.009 & 0.000
& 0.008 & 0.002 & 0.005 & 0.004 \\
& 500 & 0.009 & 0.009 & 0.000
& 0.008 & 0.001 & 0.005 & 0.004 \\

\addlinespace[2pt]
\multicolumn{9}{@{}l}{\textit{Total diversion ratios}} \\
\quad Five closest
& 100 & 0.125 & 0.125 & 0.001
& 0.101 & 0.029 & 0.046 & 0.079 \\
& 500 & 0.125 & 0.125 & 0.000
& 0.101 & 0.025 & 0.046 & 0.079 \\[0.5pt]

\quad Other inside
& 100 & 0.399 & 0.400 & 0.004
& 0.348 & 0.089 & 0.206 & 0.194 \\
& 500 & 0.400 & 0.400 & 0.001
& 0.346 & 0.061 & 0.206 & 0.194 \\[0.5pt]

\quad Outside good
& 100 & 0.476 & 0.475 & 0.006
& 0.551 & 0.115 & 0.749 & 0.273 \\
& 500 & 0.476 & 0.476 & 0.002
& 0.554 & 0.085 & 0.749 & 0.273 \\

\addlinespace[2pt]
\multicolumn{9}{@{}l}{\textit{Price curvature}} \\
\quad Own
& 100 & 1.103 & 1.108 & 0.076
& 1.083 & 0.122 & 0.995 & 0.108 \\
& 500 & 1.103 & 1.106 & 0.020
& 1.087 & 0.056 & 0.995 & 0.108 \\

\addlinespace[2pt]
\multicolumn{9}{@{}l}{\textit{Markups}} \\
\quad Markup (\%)
& 100 & 40.29 & 40.38 & 0.941
& 40.38 & 0.995 & 42.16 & 2.076 \\
& 500 & 40.29 & 40.21 & 0.337
& 40.18 & 0.430 & 42.13 & 1.894 \\
\bottomrule
\end{tabular*}
\endgroup

\par\vspace{1pt}
\begin{minipage}{\textwidth}
\fontsize{8.5}{9.2}\selectfont
\textit{Notes:}
Results are based on 100 Monte Carlo replications with \(J=50\) products
per market. \(T\) denotes the number of markets. ``True'' and ``Mean'' are
averages across replications; RMSE is the square root of the average squared
difference between estimated and true statistics. Price elasticities and
own-price curvature are unweighted. Diversion ratios weight source products
by market share. Mean diversion ratios average across individual
destinations, whereas total diversion ratios sum within destination groups.
Markups are market-share-weighted price-cost margins expressed as a
percentage of price. The five closest products minimize Euclidean distance
in price-characteristic space after rescaling both variables to \([0,1]\)
within each market.
\end{minipage}
\end{table}

\subsection{RCL Model with Lognormal Price and Normal Characteristic Coefficients}

In this experiment, I generate data from an RCL model with a lognormally
distributed random coefficient on price and a normally distributed random
coefficient on an exogenous product characteristic. Consumer \(n\)'s
conditional indirect utility from choosing product
\(j\in\mathcal{J}_0\) in market \(t\) is given by \eqref{eq:uRCL}. I set
\(\beta_0=-3\) and draw the random coefficients according to
\(\alpha_n\sim\log\mathcal{N}(0,1)\) and
\(\beta_n\sim\mathcal{N}(2,3^2)\).

Table~\ref{tab:mc_snpd_j50} reports the results for $J=50$.
The results for $J=25$ are reported in Table~\ref{tab:mc_snpd_j25}
in Supplemental Appendix~\ref*{app:mc_j25}. I compare SNPD with two RCL models and a logit model. The first RCL model imposes the correct functional form and random-coefficient distributions. Its performance relative to SNPD measures the finite-sample cost of flexibility. The second RCL model assumes normally distributed price and characteristic coefficients and is therefore misspecified. Its performance relative to SNPD measures the gain from flexibility when the functional form is correct but the assumed parametric distribution of the random coefficients is incorrect. As before, the poor performance of the logit model confirms that the Monte Carlo exercise is nontrivial.

\begin{table}[H]
\centering
\caption{RCL model with lognormal price and normal characteristic coefficients: $J=50$}
\label{tab:mc_snpd_j50}
\begingroup
\fontsize{10.25}{10.5}\selectfont
\setlength{\tabcolsep}{0.4pt}
\renewcommand{\arraystretch}{0.92}
\begin{tabular*}{\textwidth}{
@{\extracolsep{\fill}}l c c
r@{\hspace{3pt}}r
r@{\hspace{3pt}}r
r@{\hspace{3pt}}r
r@{\hspace{3pt}}r@{}}
\toprule
Statistic & $T$ & True
& \multicolumn{2}{c}{\makebox[0pt][c]{\shortstack{Correctly\\[-1pt] specified RCL}}}
& \multicolumn{2}{c}{\makebox[0pt][c]{\shortstack{Misspecified\\[-1pt] RCL}}}
& \multicolumn{2}{c}{SNPD}
& \multicolumn{2}{c}{Logit} \\
\cmidrule(lr){4-5}
\cmidrule(lr){6-7}
\cmidrule(lr){8-9}
\cmidrule(lr){10-11}
& & & Mean & RMSE & Mean & RMSE & Mean & RMSE & Mean & RMSE \\
\midrule

\multicolumn{11}{@{}l}{\textit{Price elasticities}} \\
\quad Own
& 100 & -2.285 & -2.282 & 0.029 & -2.259 & 0.040
& -2.261 & 0.040 & -2.314 & 0.050 \\
& 500 & -2.285 & -2.287 & 0.013 & -2.264 & 0.025
& -2.263 & 0.026 & -2.314 & 0.036 \\[0.5pt]

\quad Cross, all
& 100 & 0.023 & 0.023 & 0.001 & 0.017 & 0.006
& 0.022 & 0.004 & 0.009 & 0.014 \\
& 500 & 0.023 & 0.023 & 0.000 & 0.017 & 0.006
& 0.022 & 0.002 & 0.009 & 0.014 \\[0.5pt]

\quad Cross, five closest
& 100 & 0.035 & 0.034 & 0.001 & 0.028 & 0.007
& 0.031 & 0.008 & 0.009 & 0.026 \\
& 500 & 0.035 & 0.035 & 0.000 & 0.028 & 0.007
& 0.031 & 0.004 & 0.009 & 0.026 \\[0.5pt]

\quad Cross, others
& 100 & 0.021 & 0.021 & 0.001 & 0.016 & 0.006
& 0.021 & 0.003 & 0.009 & 0.013 \\
& 500 & 0.021 & 0.021 & 0.000 & 0.016 & 0.006
& 0.021 & 0.001 & 0.009 & 0.013 \\

\addlinespace[2pt]
\multicolumn{11}{@{}l}{\textit{Mean diversion ratios}} \\
\quad All
& 100 & 0.011 & 0.011 & 0.000 & 0.009 & 0.002
& 0.011 & 0.002 & 0.004 & 0.008 \\
& 500 & 0.011 & 0.011 & 0.000 & 0.009 & 0.002
& 0.011 & 0.001 & 0.004 & 0.008 \\[0.5pt]

\quad Five closest
& 100 & 0.026 & 0.026 & 0.000 & 0.022 & 0.004
& 0.023 & 0.004 & 0.006 & 0.019 \\
& 500 & 0.026 & 0.026 & 0.000 & 0.022 & 0.004
& 0.023 & 0.003 & 0.006 & 0.019 \\[0.5pt]

\quad Other products
& 100 & 0.010 & 0.010 & 0.000 & 0.008 & 0.002
& 0.009 & 0.002 & 0.003 & 0.006 \\
& 500 & 0.010 & 0.010 & 0.000 & 0.008 & 0.002
& 0.009 & 0.001 & 0.003 & 0.006 \\

\addlinespace[2pt]
\multicolumn{11}{@{}l}{\textit{Total diversion ratios}} \\
\quad Five closest
& 100 & 0.129 & 0.128 & 0.002 & 0.109 & 0.019
& 0.117 & 0.020 & 0.032 & 0.097 \\
& 500 & 0.129 & 0.129 & 0.001 & 0.109 & 0.019
& 0.117 & 0.013 & 0.032 & 0.097 \\[0.5pt]

\quad Other inside
& 100 & 0.426 & 0.425 & 0.007 & 0.345 & 0.081
& 0.413 & 0.068 & 0.150 & 0.276 \\
& 500 & 0.426 & 0.425 & 0.003 & 0.345 & 0.081
& 0.413 & 0.031 & 0.150 & 0.276 \\[0.5pt]

\quad Outside good
& 100 & 0.446 & 0.447 & 0.009 & 0.546 & 0.100
& 0.470 & 0.083 & 0.818 & 0.373 \\
& 500 & 0.446 & 0.446 & 0.003 & 0.546 & 0.100
& 0.470 & 0.040 & 0.818 & 0.373 \\

\addlinespace[2pt]
\multicolumn{11}{@{}l}{\textit{Price curvature}} \\
\quad Own
& 100 & 1.361 & 1.359 & 0.009 & 1.563 & 0.227
& 1.395 & 0.105 & 0.996 & 0.364 \\
& 500 & 1.361 & 1.360 & 0.003 & 1.547 & 0.191
& 1.395 & 0.057 & 0.996 & 0.364 \\

\addlinespace[2pt]
\multicolumn{11}{@{}l}{\textit{Markups}} \\
\quad Markup (\%)
& 100 & 42.44 & 42.52 & 0.503 & 41.94 & 0.803
& 44.31 & 2.114 & 46.04 & 3.681 \\
& 500 & 42.43 & 42.40 & 0.221 & 41.86 & 0.626
& 44.14 & 1.753 & 46.01 & 3.602 \\
\bottomrule
\end{tabular*}
\endgroup

\par\vspace{1pt}
\begin{minipage}{\textwidth}
\fontsize{8.5}{9.2}\selectfont
\textit{Notes:}
Results are based on 100 Monte Carlo replications with \(J=50\) products
per market. \(T\) denotes the number of markets. ``True'' and ``Mean'' are
averages across replications; RMSE is the square root of the average squared
difference between estimated and true statistics. Price elasticities and
own-price curvature are unweighted. Diversion ratios weight source products
by market share. Mean diversion ratios average across individual
destinations, whereas total diversion ratios sum within destination groups.
Markups are market-share-weighted price-cost margins expressed as a
percentage of price. The five closest products minimize Euclidean distance
in price-characteristic space after rescaling both variables to \([0,1]\)
within each market.
\end{minipage}
\end{table}

The misspecified RCL model recovers own-price elasticities reasonably well but systematically understates cross-price elasticities, both for the five closest products and for the remaining products. It therefore understates substitution between inside products and assigns too much of the lost demand to the outside good. The logit model performs substantially worse, especially for cross-price elasticities and diversion ratios. The RMSEs of the misspecified RCL model are generally larger than those of the correctly specified RCL model but smaller than those of the logit model. These results show that correctly specifying the distribution of the random coefficients matters for recovering substitution patterns.

SNPD performs well for both $J=25$ and $J=50$. It closely recovers average own- and cross-price elasticities and captures the stronger substitution toward the five closest products. Relative to the misspecified RCL model, SNPD more accurately recovers substitution between products and the allocation of diversion between inside products and the outside good, even though its nonparametric component excludes terms involving the outside good. These results show that the exclusion does not prevent SNPD from improving on the misspecified RCL model in these experiments. The RMSEs for cross-price elasticities and diversion ratios are generally smaller for SNPD than for the misspecified RCL model and substantially smaller than for the logit model. For markups, SNPD does not consistently outperform the misspecified RCL model, but its estimates remain reasonably close to the true values. Its RMSEs generally decline as the number of markets increases from \(100\) to \(500\), both when \(J=25\) and when \(J=50\). These results suggest that SNPD can be useful in applications where the distribution of consumer heterogeneity is unknown: it can recover substitution patterns without requiring the researcher to specify that distribution.

\subsection{Parametric SNPD}

In this experiment, I generate data directly from a parametric version of the inverse market share model. For each inside product $j\in\mathcal J_0$, market shares satisfy
\begin{equation}
\ln\!\left(\frac{s_{jt}}{s_{0t}}\right)
-\sum_{k\in\mathcal J_0\setminus\{j\}}
\mu_{jk,t}
\ln\!\left(\frac{s_{jt}}{s_{jt}+s_{kt}}\right)
=
\delta_{jt},
\label{eq:parametric_snpd_dgp}
\end{equation}
where $\delta_{jt}=
-2+2x^{(2)}_{jt}
-0.75p_{jt}+\xi_{jt}$ and
\begin{equation*}
\mu_{jk,t}
=
\frac{24}{J-1}
\left(
0.02
-0.005\lvert \widetilde{d}^p_{jk,t}\rvert
-0.005\lvert \widetilde{d}^{(2)}_{jk,t}\rvert
\right).
\end{equation*}

Because $\mu_{jk,t}$ decreases
with the distance between products in price and characteristic space,
closer products tend to be stronger substitutes. The factor $24/(J-1)$ keeps
the overall magnitude of the pairwise substitution terms comparable
between the designs with $J=25$ and $J=50$.

Table~\ref{tab:mc_parametric_j50} reports the results for $J=50$.
The results for $J=25$ are reported in Table~\ref{tab:mc_parametric_j25}
in Supplemental Appendix~\ref*{app:mc_j25}. I compare SNPD with an RCL model and a logit model. In this experiment, the RCL model is misspecified in its functional form. Its performance relative to SNPD measures the gain from flexibility when the DGP is specified directly through inverse market shares, complementing the preceding experiment on misspecified random coefficient distributions. As before, the poor performance of the logit model in recovering cross-price elasticities and diversion ratios confirms that the Monte Carlo exercise is nontrivial.

All three models recover average own-price elasticities reasonably well.
The misspecified RCL model systematically understates cross-price
elasticities and diversion between inside products and assigns too much
of the lost demand to the outside good. These discrepancies persist as
the number of markets increases from 100 to 500, both when $J=25$ and
when $J=50$. The logit model performs substantially worse, especially
for cross-price elasticities and diversion ratios.

\begin{table}[H]
\centering
\caption{Parametric inverse market share model: $J=50$}
\label{tab:mc_parametric_j50}
\begingroup
\fontsize{10.25}{10.5}\selectfont
\setlength{\tabcolsep}{0.8pt}
\renewcommand{\arraystretch}{0.92}
\begin{tabular*}{\textwidth}{
@{\extracolsep{\fill}}l c c
r@{\hspace{4pt}}r
r@{\hspace{4pt}}r
r@{\hspace{4pt}}r@{}}
\toprule
Statistic & $T$ & True
& \multicolumn{2}{c}{\makebox[0pt][c]{\shortstack{Misspecified\\[-1pt] RCL}}}
& \multicolumn{2}{c}{SNPD}
& \multicolumn{2}{c}{Logit} \\
\cmidrule(lr){4-5}
\cmidrule(lr){6-7}
\cmidrule(lr){8-9}
& & & Mean & RMSE & Mean & RMSE & Mean & RMSE \\
\midrule

\multicolumn{9}{@{}l}{\textit{Price elasticities}} \\
\quad Own
& 100 & -4.323 & -4.321 & 0.041 & -4.317 & 0.041 & -4.329 & 0.041 \\
& 500 & -4.324 & -4.325 & 0.019 & -4.320 & 0.019 & -4.333 & 0.020 \\[0.5pt]

\quad Cross, all
& 100 & 0.040 & 0.036 & 0.004 & 0.040 & 0.004 & 0.028 & 0.012 \\
& 500 & 0.040 & 0.036 & 0.004 & 0.040 & 0.002 & 0.028 & 0.012 \\[0.5pt]

\quad Cross, five closest
& 100 & 0.043 & 0.037 & 0.006 & 0.045 & 0.007 & 0.028 & 0.015 \\
& 500 & 0.043 & 0.037 & 0.006 & 0.045 & 0.004 & 0.028 & 0.015 \\[0.5pt]

\quad Cross, others
& 100 & 0.040 & 0.036 & 0.004 & 0.039 & 0.003 & 0.028 & 0.012 \\
& 500 & 0.040 & 0.036 & 0.004 & 0.039 & 0.002 & 0.028 & 0.012 \\

\addlinespace[2pt]
\multicolumn{9}{@{}l}{\textit{Mean diversion ratios}} \\
\quad All
& 100 & 0.009 & 0.008 & 0.001 & 0.009 & 0.001 & 0.007 & 0.002 \\
& 500 & 0.009 & 0.008 & 0.001 & 0.009 & 0.000 & 0.007 & 0.002 \\[0.5pt]

\quad Five closest
& 100 & 0.011 & 0.010 & 0.001 & 0.011 & 0.002 & 0.008 & 0.003 \\
& 500 & 0.011 & 0.010 & 0.001 & 0.011 & 0.001 & 0.008 & 0.003 \\[0.5pt]

\quad Other products
& 100 & 0.009 & 0.008 & 0.001 & 0.009 & 0.001 & 0.006 & 0.002 \\
& 500 & 0.009 & 0.008 & 0.001 & 0.009 & 0.000 & 0.006 & 0.002 \\

\addlinespace[2pt]
\multicolumn{9}{@{}l}{\textit{Total diversion ratios}} \\
\quad Five closest
& 100 & 0.054 & 0.050 & 0.005 & 0.057 & 0.008 & 0.038 & 0.016 \\
& 500 & 0.054 & 0.050 & 0.005 & 0.057 & 0.004 & 0.038 & 0.016 \\[0.5pt]

\quad Other inside
& 100 & 0.379 & 0.356 & 0.026 & 0.377 & 0.034 & 0.280 & 0.099 \\
& 500 & 0.379 & 0.358 & 0.022 & 0.375 & 0.015 & 0.280 & 0.099 \\[0.5pt]

\quad Outside good
& 100 & 0.567 & 0.594 & 0.031 & 0.566 & 0.041 & 0.682 & 0.115 \\
& 500 & 0.567 & 0.592 & 0.026 & 0.569 & 0.018 & 0.682 & 0.115 \\

\addlinespace[2pt]
\multicolumn{9}{@{}l}{\textit{Price curvature}} \\
\quad Own
& 100 & 0.817 & 1.051 & 0.234 & 0.938 & 0.128 & 0.994 & 0.176 \\
& 500 & 0.817 & 1.053 & 0.236 & 0.945 & 0.129 & 0.994 & 0.176 \\

\addlinespace[2pt]
\multicolumn{9}{@{}l}{\textit{Markups}} \\
\quad Markup ($\%$)
& 100 & 26.36 & 24.26 & 2.120 & 25.25 & 1.159 & 24.21 & 2.171 \\
& 500 & 26.36 & 24.23 & 2.130 & 25.18 & 1.196 & 24.18 & 2.179 \\
\bottomrule
\end{tabular*}
\endgroup

\par\vspace{1pt}
\begin{minipage}{\textwidth}
\fontsize{8.5}{9.2}\selectfont
\textit{Notes:}
Results are based on 100 Monte Carlo replications with \(J=50\) products
per market. \(T\) denotes the number of markets. ``True'' and ``Mean'' are
averages across replications; RMSE is the square root of the average squared
difference between estimated and true statistics. Price elasticities and
price curvature are unweighted. Diversion ratios weight source products by
market share. Mean diversion ratios average across individual destinations,
whereas total diversion ratios sum within destination groups. Markups are
market-share-weighted price-cost margins expressed as a percentage of price.
The five closest products minimize Euclidean distance in
price-characteristic space after rescaling both variables to \([0,1]\)
within each market.
\end{minipage}
\end{table}

SNPD performs well. It closely recovers average cross-price elasticities
and more accurately recovers the allocation of diversion between inside
products and the outside good. It also captures the stronger substitution
toward the five closest products, although it overstates their average
cross-price elasticities in both designs. For markups, SNPD has smaller
RMSEs than the RCL and logit models in all four configurations. Its
estimates are close to the true values when $J=25$, but remain below
them when $J=50$. The RMSE comparisons depend on the number of markets. When $T=100$,
SNPD has larger RMSEs than the misspecified RCL model for total diversion
to the five closest products, the remaining inside products, and the
outside good, despite recovering their average levels more accurately.
When $T=500$, its RMSEs are smaller for all three measures in both
designs. SNPD's RMSEs generally decline as the number of markets
increases, although its markup  and own-price curvature estimates do not improve when $J=50$.
These results show that SNPD can improve the recovery of substitution
patterns when the RCL functional form is misspecified, but that its
flexibility can come at a cost in smaller samples.

\section{Empirical Application: the U.S. Beer Market}
\label{sec:empirical_applications}

I apply my approach to the U.S. beer market studied by \citet*{miller2017understanding} and compare its estimates with their random coefficient nested logit model. With 2,808 markets and up to 39 products per market, this application assesses the flexibility and scalability of the proposed approach in an empirical setting. Details are provided in Supplemental Appendix \ref*{Appendix:Appli}.

\subsection{Data and Demand Models}

\paragraph{Data.} I use the same sample from the IRI marketing data set \citep*{bronnenberg2008database} as \citet*{miller2017understanding}. The sample covers 39 geographic regions over 2005 -- 2011, excluding the first year following the MillerCoors joint venture. It contains $J=39$ products across $T=2{,}808$ markets, yielding $N=94{,}656$ product-market observations. Markets are region-month combinations. Products are brand-size combinations in 6-pack, 12-pack, and combined 24/30-pack formats. Product availability varies across markets. Market shares are computed from sales measured in 144-ounce equivalent units, and prices are revenue per equivalent unit. The Public Use Microdata Sample provides the regional income distributions used in demand estimation. The dataset also includes 12 excluded instruments: an indicator for AB InBev and MillerCoors products after the merger; brewery-to-market distance interacted with the diesel price index; the number of products and total distance across products, together with their interactions with AB InBev and MillerCoors indicators; and mean income and its interactions with import status, calories, and package size.

\paragraph{Demand models.} I compare SNPD with the RCNL model of \citet*{miller2017understanding}. Both specifications include product and date fixed effects. The RCNL model uses regional income distributions to generate preference heterogeneity and places all inside goods in a single nest. SNPD includes mean income and its interactions with price and calories, together with a permutation-invariant Chebyshev sieve of transformed share ratios and price and calorie differences. The same function $\widetilde G_D$ is used for markets with different numbers of products. Cross-validation over $D\in\{3,4,5,6\}$ selects $D=3$, yielding 53 nonconstant sieve terms. I estimate RCNL using PyBLP \citep*{conlon2018} with 12 excluded instruments, and SNPD by constrained two-step GMM using these instruments, their symmetric transformations, and the included exogenous regressors.

\subsection{Demand Estimates}

Table~\ref{tab:demand_parameter_estimates} reports the estimates. Both specifications have a negative price coefficient and a positive income-price interaction. In the RCNL model, the latter allows higher-income consumers to be less price sensitive; in the SNPD model, it allows the linear price coefficient to vary with mean market income. In the RCNL model, the nesting parameter is $0.889$, indicating a strong tendency for consumers to substitute among inside goods rather than toward the outside good.

\begin{table}[H]
\caption{Demand estimates}
\label{tab:demand_parameter_estimates}
\centering
\def\demandtablescale{0.95} \begin{threeparttable}
\setlength{\tabcolsep}{6pt}

\scalebox{\demandtablescale}{\begin{tabular}{lcccccc}
\toprule
& & \multicolumn{2}{c}{RCNL}
& \multicolumn{3}{c}{SNPD} \\
\cmidrule(lr){3-4}\cmidrule(lr){5-7}
Variable & Parameter
& Estimate & SE
& Estimate & SE & 95\% CI \\
\midrule
Price
& $-\alpha$
& $-0.082$ & $0.005$
& $-3.785$ & $0.181$ & $[-3.954,\,-3.259]$ \\

Nesting parameter
& $\mu$
& $0.889$ & $0.006$
& -- & -- & -- \\

\addlinespace

Income $\times$ price
& $\pi_p;\gamma_{yp}$
& $4.20\times10^{-4}$ & $3.00\times10^{-5}$
& $0.651$ & $0.025$ & $[0.592,\,0.691]$ \\

Income $\times$ constant
& $\pi_0;\gamma_y$
& $0.017$ & $0.001$
& $-0.352$ & $0.027$ & $[-0.399,\,-0.292]$ \\

Income $\times$ calories
& $\pi_c;\gamma_{yc}$
& $0.002$ & $2.40\times10^{-4}$
& $0.247$ & $0.013$ & $[0.217,\,0.269]$ \\

\addlinespace

GMM objective
&
& \multicolumn{2}{c}{$0.004$}
& \multicolumn{3}{c}{$1.66\times10^{-7}$} \\
Computation time
&
& \multicolumn{2}{c}{$\approx 10\text{h}30\text{min}$}
& \multicolumn{3}{c}{$\approx 30\text{s}$} \\
\bottomrule
\end{tabular}}
\par\smallskip
\begin{minipage}{\textwidth}
\footnotesize \textit{Notes:} RCNL standard errors (SE) are clustered by market. SNPD standard errors (SE) and confidence intervals (CI) use the \citet{hong2020numerical} procedure. \end{minipage}
\end{threeparttable}
\end{table}

The coefficients enter different demand models, which use different normalizations. It is therefore better to compare the models through substitution patterns and counterfactual outcomes. The GMM objective is lower under SNPD. This comparison depends on the instrument sets and weighting matrices used. A more systematic assessment could use the model-selection test of \citet*{rivers2002model} or the cross-validation procedure of \citet*{komiyama2018cross}.

The SNPD model takes approximately 30 seconds to estimate, whereas the RCNL model takes around 10 hours and 30 minutes. SNPD estimates the inverse market share equation directly, avoiding repeated numerical share inversion. In addition, for a fixed polynomial degree, the number of sieve coefficients does not grow with the number of products. This application therefore illustrates the computational tractability of the proposed approach in a setting with many markets.

\subsection{Substitution Patterns}

Table~\ref{tab:empirical_demand_results} reports the implied substitution patterns and markups for the RCNL and SNPD models. Markups are computed under a static oligopolistic price competition model. The five closest products are defined by Euclidean distance in price and calories rescaled to $[0,1]$ within each market. 

\begin{table}[H]
\centering
\caption{Demand elasticities, diversion ratios, and markups}
\label{tab:empirical_demand_results}
\begingroup
\fontsize{10.25}{10.5}\selectfont
\setlength{\tabcolsep}{19.5pt}
\renewcommand{\arraystretch}{0.94}
\begin{tabular}{@{}lcccc@{}}
\toprule
Statistic
& \multicolumn{2}{c}{RCNL}
& \multicolumn{2}{c}{SNPD} \\
\cmidrule(lr){2-3}
\cmidrule(lr){4-5}
& Estimate & 95\% CI
& Estimate & 95\% CI \\
\midrule

\multicolumn{5}{@{}l}{\textit{Price elasticities}} \\
\quad Own
& $-5.596$ & $[-6.240,\,-5.070]$
& $-5.151$ & $[-5.582,\,-4.768]$ \\
\quad Cross, all
& $0.137$ & $[0.126,\,0.151]$
& $0.133$ & $[0.122,\,0.151]$ \\
\quad Cross, five closest
& $0.152$ & $[0.140,\,0.167]$
& $0.303$ & $[0.278,\,0.338]$ \\
\quad Cross, others
& $0.135$ & $[0.124,\,0.148]$
& $0.102$ & $[0.092,\,0.118]$ \\

\addlinespace[4.5pt]
\multicolumn{5}{@{}l}{\textit{Mean diversion ratios}} \\
\quad All

& $0.028$ & $[0.0276,\,0.0281]$
& $0.025$ & $[0.024,\,0.028]$ \\
\quad Five closest
& $0.047$ & $[0.044,\,0.052]$
& $0.054$ & $[0.051,\,0.059]$ \\
\quad Other products
& $0.024$ & $[0.024,\,0.025]$
& $0.020$ & $[0.019,\,0.023]$ \\

\addlinespace[4.5pt]
\multicolumn{5}{@{}l}{\textit{Total diversion ratios}} \\
\quad Five closest
& $0.235$ & $[0.220,\,0.258]$
& $0.272$ & $[0.257,\,0.295]$ \\
\quad Other inside
& $0.680$ & $[0.663,\,0.694]$
& $0.564$ & $[0.520,\,0.634]$ \\
\quad Outside good
& $0.085$ & $[0.075,\,0.094]$
& $0.164$ & $[0.090,\,0.202]$ \\

\addlinespace[4.5pt]
\multicolumn{5}{@{}l}{\textit{Markups}} \\
\quad Markup ($\%$)
& $32.623$ & $[30.836,\,34.556]$
& $37.821$ & $[33.005,\,40.839]$ \\
\bottomrule
\end{tabular}
\endgroup

\par\smallskip
\begin{minipage}{\textwidth}
\footnotesize
\textit{Notes:}
RCNL confidence intervals are based on market-clustered inference. SNPD
confidence intervals are obtained using the \cite{hong2020numerical} bootstrap procedure. The five closest products are defined by Euclidean distance in price-characteristic space, using price and calories rescaled to $[0,1]$ within each market. Price elasticities are unweighted. Diversion ratios are weighted by the market share of the source product. Mean diversion ratios average diversion to individual inside products, whereas total diversion ratios sum diversion within each destination group. Markups are market-share-weighted averages
of the price-cost margin, expressed as a percentage of price.
\end{minipage}
\end{table}

The SNPD and RCNL models generate similar average price
elasticities. Average own-price elasticities are $-5.151$ under SNPD
and $-5.596$ under RCNL, while average cross-price elasticities are
$0.133$ and $0.137$, respectively. The main difference concerns
substitution toward the five closest products. Their average cross-price
elasticity is $0.303$ under SNPD, compared with $0.152$ under RCNL.
For the remaining products, the corresponding estimates are $0.102$
and $0.135$. Diversion ratios show a similar pattern. The five closest products
receive $27.2\%$ of lost sales under SNPD, compared with $23.5\%$ under
RCNL. The remaining inside products receive $56.4\%$ and $68.0\%$,
respectively, while diversion to the outside good is $16.4\%$ and
$8.5\%$. Thus, although the models generate similar average elasticities,
SNPD assigns more diversion to nearby products and the outside good,
and less to more distant inside products. The implied weighted average markup is $37.8\%$ under SNPD and
$32.6\%$ under RCNL, with overlapping confidence intervals.

\paragraph{Own-price elasticities, prices, and demand curvature.} 
Figure~\ref{fig:elasticity_price_curvature} compares the relationship
between own-price elasticities and prices in Panel (a) and between
own-price elasticities and demand curvature in Panel (b). The first two rows report results for RCNL and SNPD, respectively. To assess whether these patterns depend on the price-income interaction, I also estimate an SNPD specification excluding this interaction.
The third row reports the results.

\newcommand{\BeerFigureScale}{0.7}

\begin{figure}[H]
\centering
\makebox[\textwidth][c]{\scalebox{\BeerFigureScale}{\begin{minipage}{\textwidth}

\caption{Own-price elasticities, prices, and demand curvature}
\label{fig:elasticity_price_curvature}

\medskip

\begin{minipage}[t]{0.48\textwidth}
    \centering
    (a) Own-price elasticities and prices
\end{minipage}
\hfill
\begin{minipage}[t]{0.48\textwidth}
    \centering
    (b) Own-price elasticities and curvature
\end{minipage}

\par\medskip

\begin{minipage}[t]{0.48\textwidth}
    \centering
    \includegraphics[width=\linewidth]{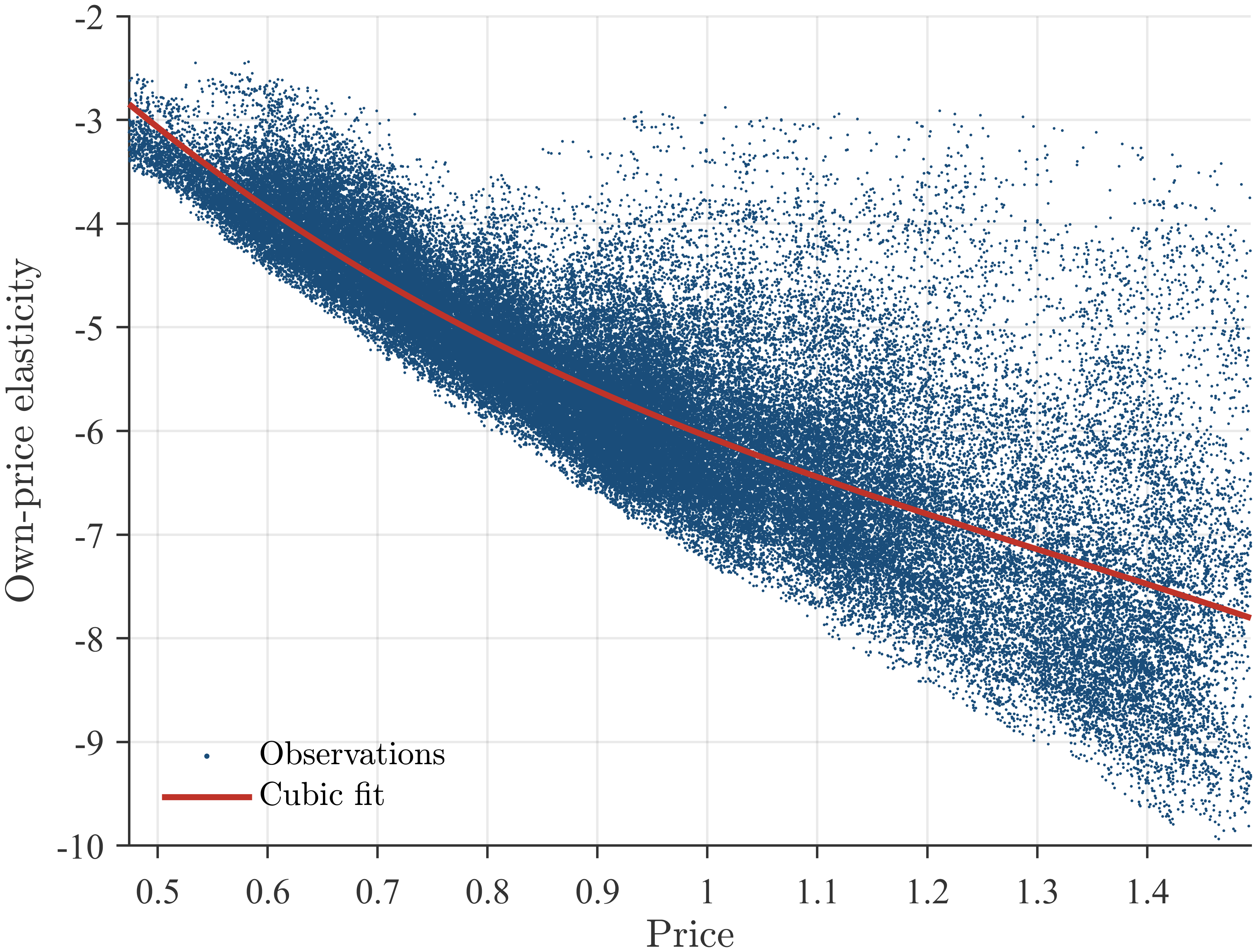}
    \par\smallskip
    RCNL
\end{minipage}
\hfill
\begin{minipage}[t]{0.48\textwidth}
    \centering
    \includegraphics[width=\linewidth]{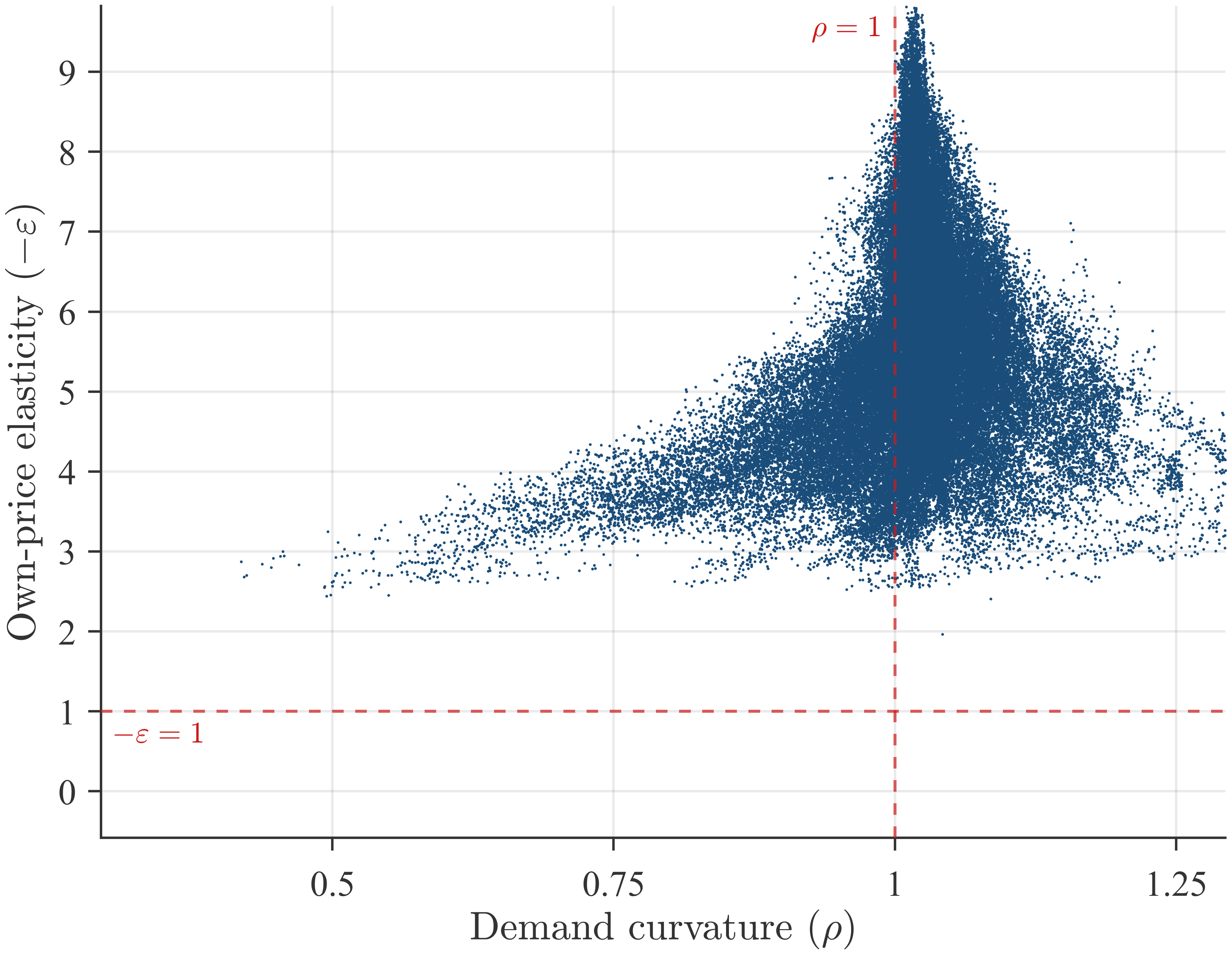}
    \par\smallskip
    RCNL
\end{minipage}

\par\medskip

\begin{minipage}[t]{0.48\textwidth}
    \centering
    \includegraphics[width=\linewidth]{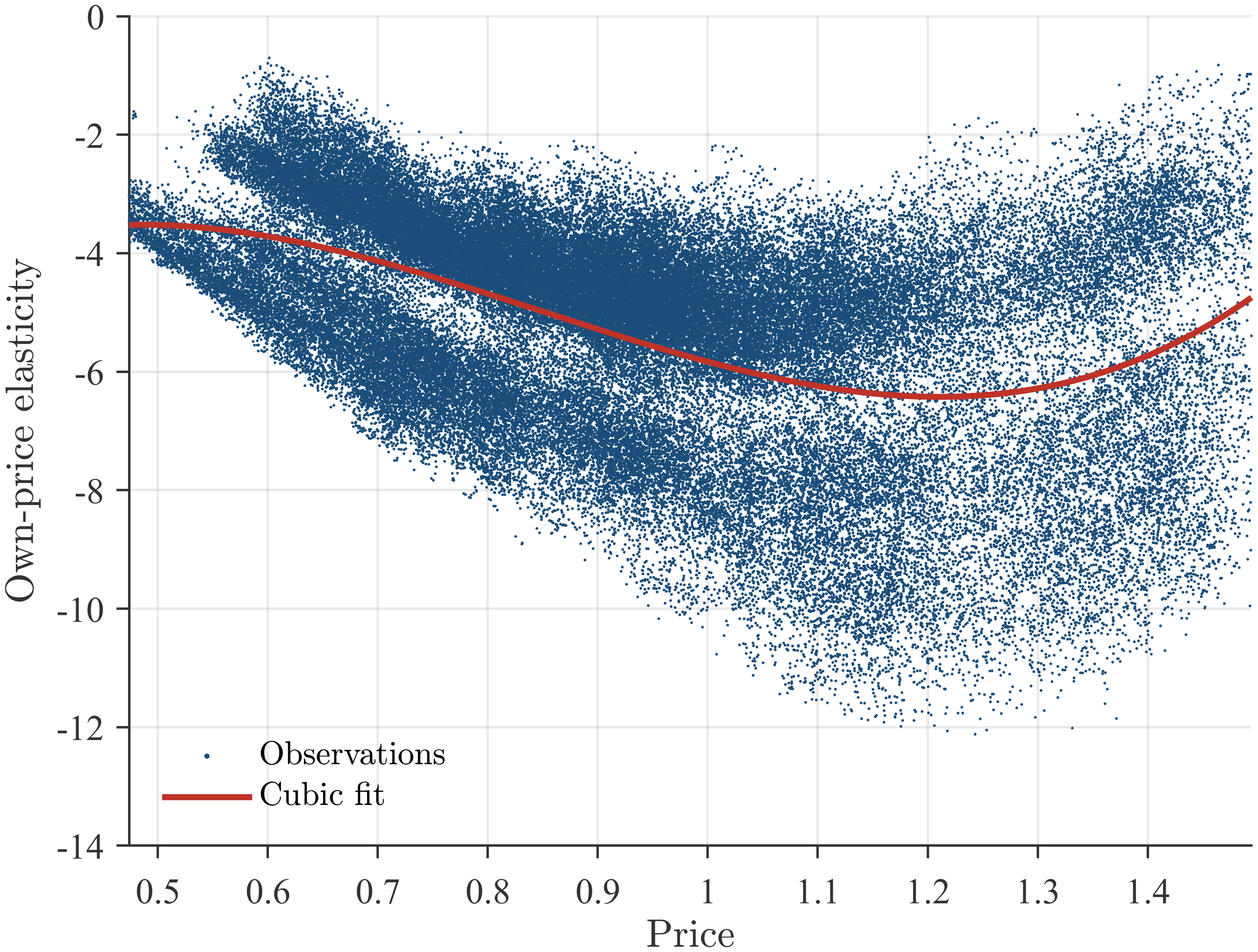}
    \par\smallskip
    SNPD
\end{minipage}
\hfill
\begin{minipage}[t]{0.48\textwidth}
    \centering
    \includegraphics[width=\linewidth]{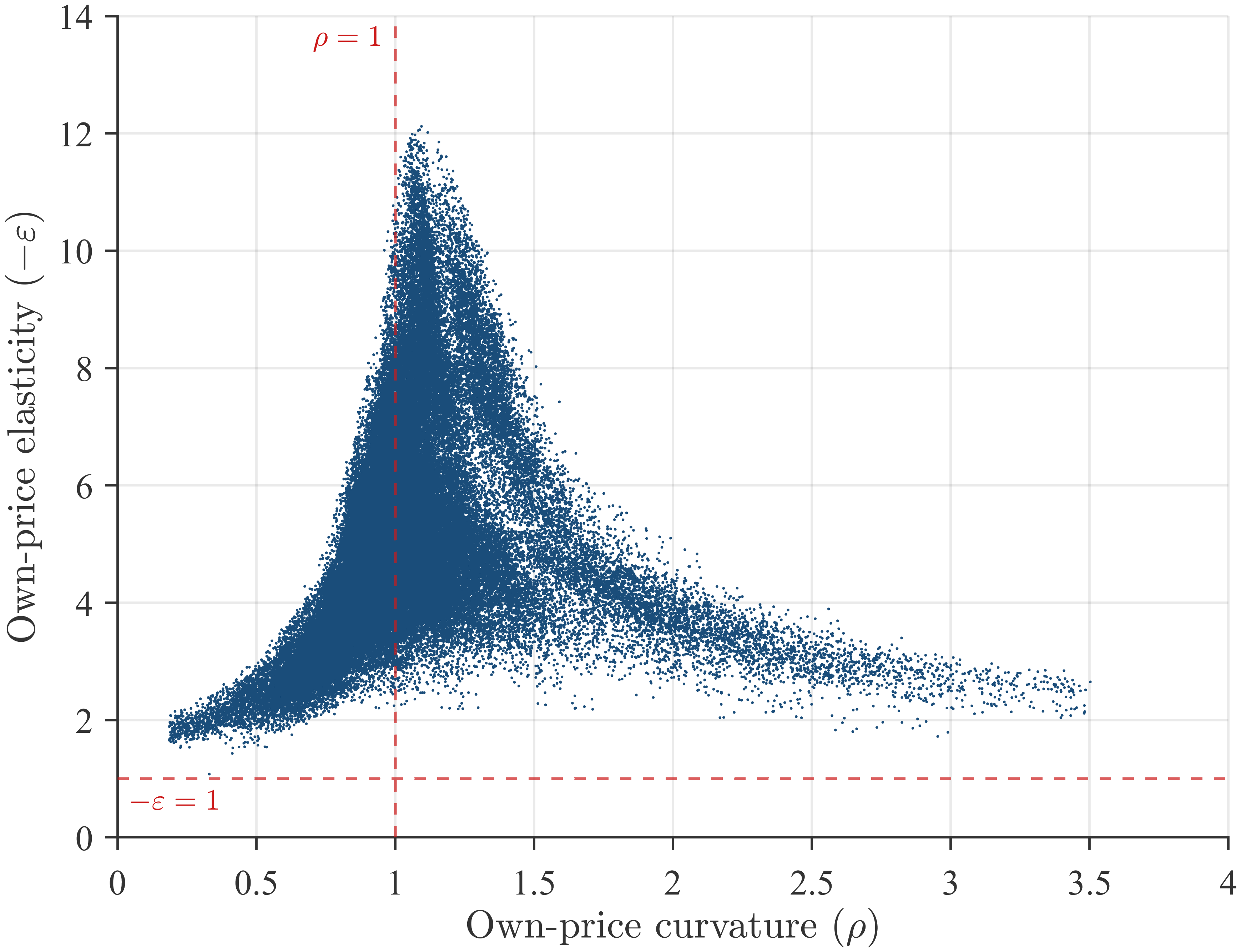}
    \par\smallskip
    SNPD
\end{minipage}

\par\medskip

\begin{minipage}[t]{0.48\textwidth}
    \centering
    \includegraphics[width=\linewidth]{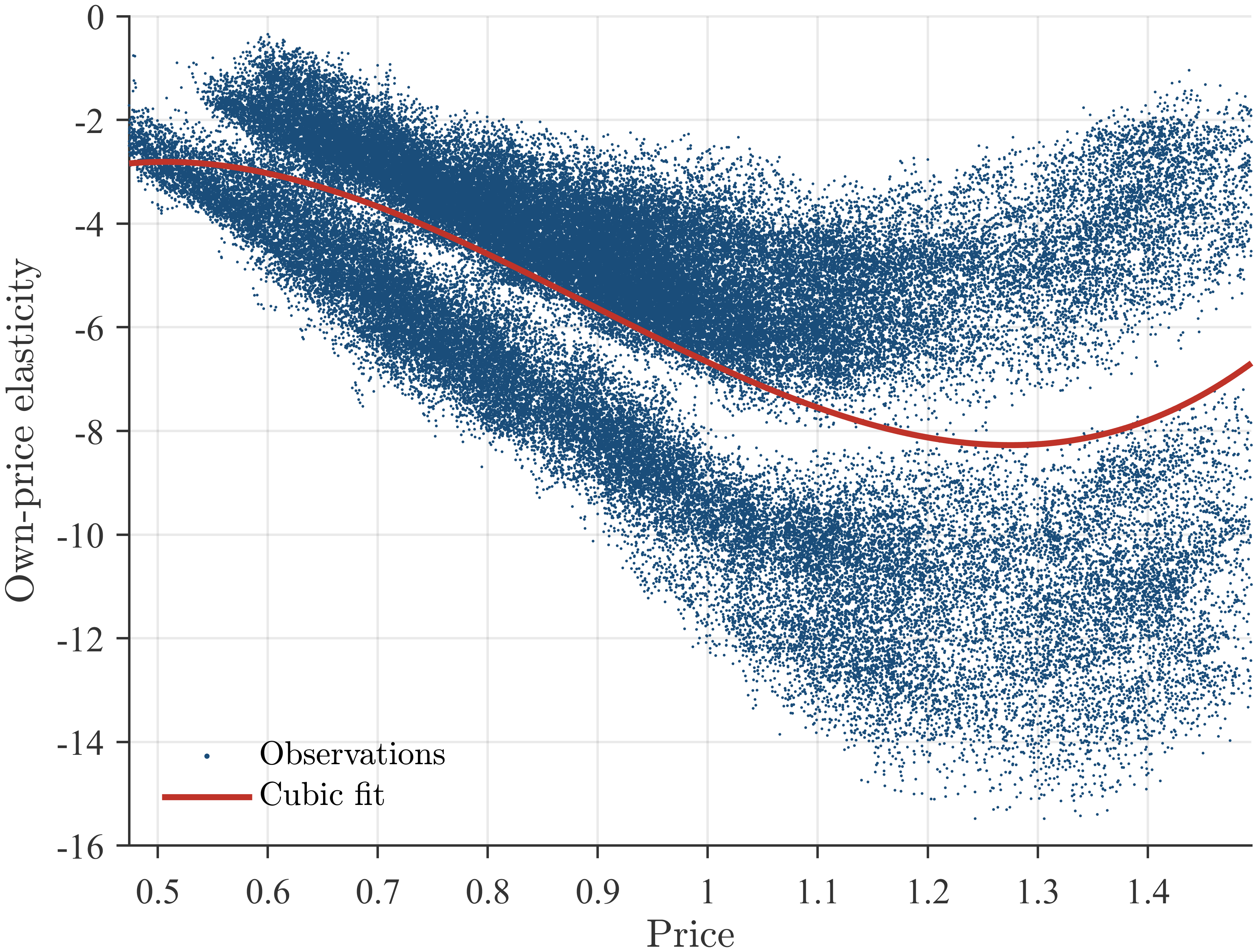}
    \par\smallskip
    SNPD -- no price interaction
\end{minipage}
\hfill
\begin{minipage}[t]{0.48\textwidth}
    \centering
    \includegraphics[width=\linewidth]{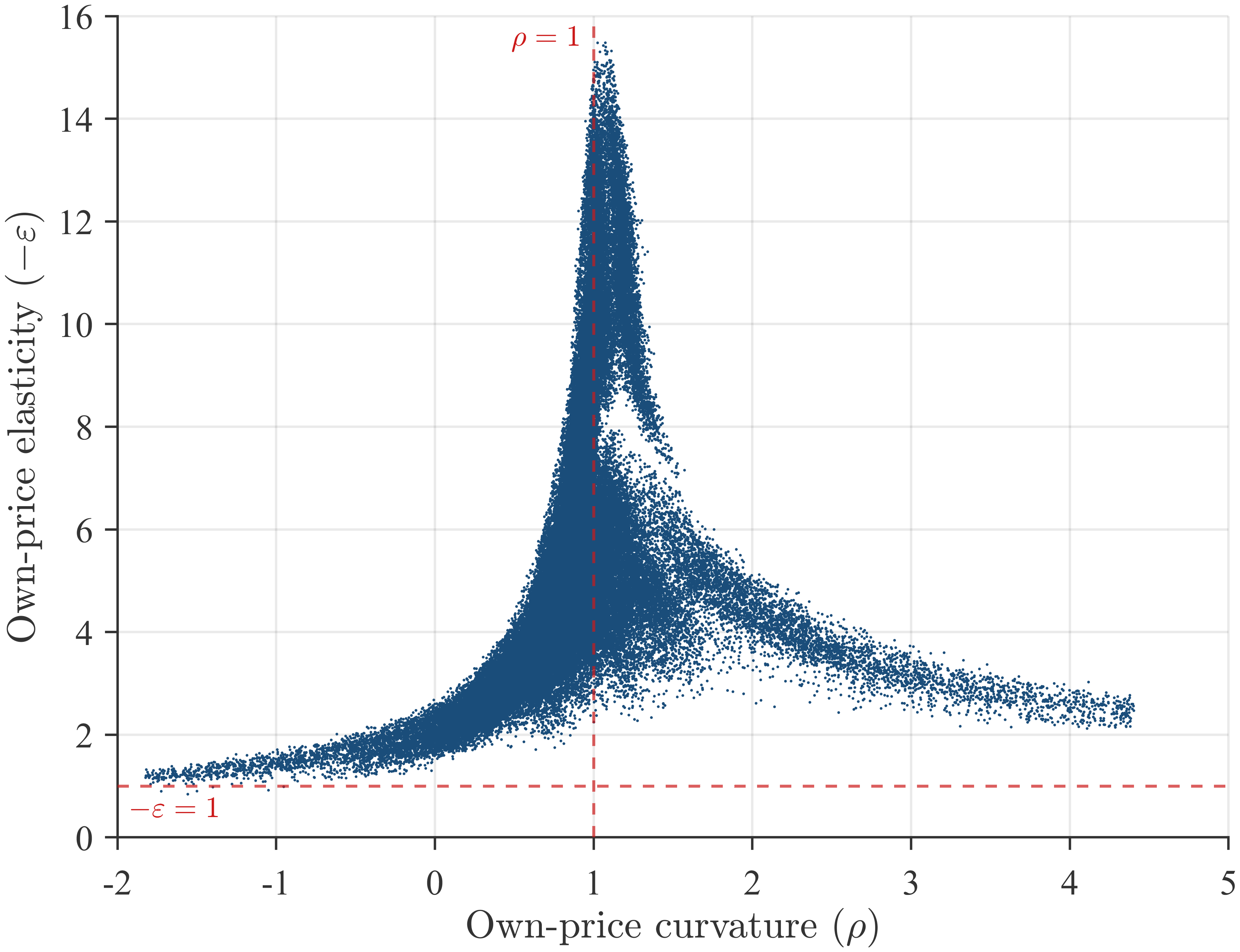}
    \par\smallskip
    SNPD -- no price interaction
\end{minipage}

\par\medskip

{\footnotesize
\textit{Notes:}
Each point represents a product-market observation. Rows report
estimates from RCNL, SNPD, and SNPD without price interaction,
respectively. Panel (a) plots own-price elasticities against prices. Panel (b) plots the magnitude of the own-price elasticity,
against curvature. In the first two rows of Panel (a), observations below the 0.5th and
above the 99.5th percentiles of price are excluded; red curves are
cubic fits across the displayed observations. In Panel (b), red dashed lines mark unit elasticity and unit
curvature. The second row of Panel (b) excludes observations below
the 0.5th and above the 99.5th percentiles of estimated curvature.
\par}

\end{minipage}}}
\end{figure}

In Panel (a), products with higher prices generally have higher own-price elasticities in absolute value under RCNL. Under both SNPD specifications, elasticities vary more at a given price, and the fitted relationship is nonmonotonic: absolute elasticities increase with price at low prices and decrease at higher prices, a shape that RCL models can also generate through heterogeneity in price sensitivity \citep*{birchall2024estimating,goldszmidt2026price}. Despite similar average elasticities, the models imply different patterns of price sensitivity across products and markets.

These estimates provide the flexibility in own-price responses called for by \citet*{birchall2024estimating} and \citet*{goldszmidt2026price}: own-price elasticities do not scale quasi-linearly with price, and products with similar prices can have different own-price elasticities. The latter show that own-price elasticities are approximately quadratic functions of own price when price and non-price random coefficients are independent and type-specific choice probabilities are small. The RCNL specification considered here allows these coefficients to be correlated through income. Nevertheless, SNPD produces greater variation in own-price elasticities at a given price.

Panel (b) examines demand curvature $\rho_{jt}$, defined in Section~\ref{Sec:Implementation}. After excluding observations below the 0.5th and above the 99.5th
percentiles of curvature in each specification, average curvature is
$1.020$ under RCNL, $1.061$ under baseline SNPD, and $0.890$ under SNPD
without the price-income interaction. Under RCNL, the curvature is concentrated
near one, whereas SNPD allows a wider range, including substantially
higher values. At similar elasticities, the models therefore imply
different demand curvature, an additional dimension of flexibility
beyond the substitution patterns in Table~\ref{tab:empirical_demand_results}. 

This comparison addresses the concern of \citet*{miravete2023pass,miravete2023elasticity} that similar elasticities can conceal different demand curvature. They show that heterogeneity in price sensitivity can raise curvature above one, while heterogeneity in tastes for non-price characteristics tends to lower it. SNPD allows flexibility in curvature without specifying a random coefficient distribution. Panel (b) illustrates this flexibility in the estimated specifications rather than establishing a general curvature restriction of RCNL. These differences matter for the counterfactual below because responses to cost shocks depend jointly on demand curvature, cross-product demand derivatives, and ownership.

\subsection{Counterfactual: a Tax on Imported Beer}

I consider a per-unit tax on imported products equal to $10\%$ of their prices. Holding product characteristics and ownership fixed,
I compute counterfactual prices under a static oligopolistic price competition. The exercise illustrates how substitution and curvature shape the price responses of both imported and domestic products to the tax.

Table~\ref{tab:counterfactual_results} reports the effects of the import
tax on prices, profits, and consumer surplus. Imported-product prices
increase by $11.60\%$ under SNPD and $10.23\%$ under RCNL. The larger
difference concerns domestic products, whose prices increase by
$6.92\%$ under SNPD, compared with $1.53\%$ under RCNL. Prices
across all inside products therefore increase by $7.94\%$ under SNPD
and $3.42\%$ under RCNL. SNPD thus implies a stronger domestic price
response to the increase in import costs. Domestic products face no direct cost increase, so their price responses reflect the indirect effects of higher import prices on pricing incentives. This mechanism also arises when nonmerging firms respond to 
price increases by merging firms. The larger domestic price response under SNPD shows that
the estimated demand specification matters for the transmission of cost shocks beyond the directly affected products. The small domestic price response under RCNL is reminiscent of the
weak responses of nonmerging firms predicted by some merger simulations under a static price competition model and RCL demand models \citep*{NevoRand,houde2012spatial,BjornerstedtVerboven2016}. Although the shock here affects costs rather than ownership, both exercises depend on how firms respond to competitors' price
changes. SNPD predicts a substantially larger domestic response.

\begin{table}[H]
\centering
\caption{Counterfactual effects of an increase in import costs}
\label{tab:counterfactual_results}
\begingroup
\fontsize{10.25}{10.5}\selectfont
\setlength{\tabcolsep}{19pt}
\renewcommand{\arraystretch}{0.94}
\begin{tabular}{@{}lcccc@{}}
\toprule
Statistic
& \multicolumn{2}{c}{RCNL}
& \multicolumn{2}{c}{SNPD} \\
\cmidrule(lr){2-3}
\cmidrule(lr){4-5}
& Estimate & 95\% CI
& Estimate & 95\% CI \\
\midrule

\multicolumn{5}{@{}l}{\textit{Price changes (\%)}} \\
\quad All inside products
& $3.423$ & $[3.344,\,3.539]$
& $7.941$ & $[6.791,\,8.778]$ \\
\quad Imported
& $10.23$ & $[9.935,\,10.66]$
& $11.60$ & $[9.578,\,11.93]$ \\
\quad Non-imported
& $1.527$ & $[1.489,\,1.579]$
& $6.920$ & $[5.793,\,8.054]$ \\
\quad AB InBev
& $1.536$ & $[1.499,\,1.585]$
& $6.864$ & $[5.806,\,8.143]$ \\
\quad Grupo Modelo
& $10.19$ & $[9.875,\,10.66]$
& $11.89$ & $[10.58,\,12.24]$ \\
\quad Heineken
& $10.28$ & $[10.01,\,10.67]$
& $11.22$ & $[6.972,\,11.54]$ \\
\quad Miller
& $1.518$ & $[1.476,\,1.573]$
& $6.986$ & $[5.531,\,8.060]$ \\

\addlinespace[4.5pt]
\multicolumn{5}{@{}l}{\textit{Profit changes (\%)}} \\
\quad All inside products
& $3.520$ & $[3.209,\,3.990]$
& $11.52$ & $[6.316,\,17.499]$ \\
\quad Imported
& $-29.85$ & $[-33.90,\,-26.06]$
& $-7.967$ & $[-13.40,\,-1.286]$ \\
\quad Non-imported
& $8.732$ & $[7.990,\,9.668]$
& $14.88$ & $[9.520,\,21.11]$ \\
\quad AB InBev
& $8.489$ & $[7.769,\,9.395]$
& $15.59$ & $[10.020,\,21.39]$ \\
\quad Grupo Modelo
& $-29.69$ & $[-33.61,\,-25.97]$
& $-9.000$ & $[-14.28,\,-2.722]$ \\
\quad Heineken
& $-30.07$ & $[-34.32,\,-26.18]$
& $-7.097$ & $[-12.72,\,0.478]$ \\
\quad Miller
& $9.015$ & $[8.247,\,9.986]$
& $14.16$ & $[8.925,\,20.69]$ \\

\addlinespace[4.5pt]
\multicolumn{5}{@{}l}{Consumer welfare measure (\$ per consumer)} \\
\quad All consumers
& $-0.050$ & $[-0.052,\,-0.048]$
& $-0.112$ & $[-0.127,\,-0.100]$ \\
\bottomrule
\end{tabular}
\endgroup

\par\smallskip
\begin{minipage}{\textwidth}
\footnotesize
\textit{Notes:}
The counterfactual increases the marginal cost of each imported product by
10\% of its baseline price. Price changes are percentage changes in price indexes weighted by
baseline market shares, $100(\sum_{j\in\mathcal J}s_j^0p_j^1/\sum_{j\in\mathcal J}s_j^0p_j^0-1)$, where $\mathcal J$ denotes
the relevant product group, $s_j^0$ denotes baseline market shares,
and $p_j^0$ and $p_j^1$ denote baseline and counterfactual prices. Profit changes compare aggregate profits within each product or
firm group. RCNL confidence intervals are percentile intervals from 999
parametric bootstrap draws based on the market-clustered covariance matrix.
SNPD confidence intervals are obtained using the \cite{hong2020numerical} bootstrap
procedure with 999 replications. The RCNL measure is computed using the log-sum formula, averaging consumer-specific surplus changes over the estimated distribution
of preferences. The SNPD welfare measure is the negative straight-line price integral of demand, equally weighted across the selected markets and expressed in the original dollar units. It is a path-specific measure.
\end{minipage}
\end{table}

Domestic profits increase by $14.88\%$ under SNPD and $8.73\%$ under
RCNL. Profits from imported products decline by $7.97\%$ and $29.85\%$,
respectively. Relative to RCNL, SNPD therefore predicts larger gains
for domestic producers and smaller losses for importers. These profit
effects reflect changes in both prices and sales, which depend on
substitution across products and firms' joint pricing decisions,
as well as own-price curvature.

The RCNL consumer-surplus change is $-\$0.050$ per consumer, while the SNPD straight-line demand integral is $-\$0.112$. The RCNL measure is computed using the log-sum formula, averaging
consumer-specific surplus changes over the estimated distribution
of preferences. The SNPD measure
instead integrates demand-weighted price changes along the straight-line path from baseline to counterfactual prices. Both indicate a consumer loss, with a larger magnitude under SNPD,
although a path-independent consumer-surplus interpretation of the latter requires additional integrability conditions.

Similar average elasticities can conceal substantial differences in counterfactual price responses. The SNPD and RCNL specifications imply different substitution patterns, demand curvature, and transmission of cost shocks across products. This application illustrates the economic relevance of the concerns raised by recent work and shows that a flexible nonparametric approach remains feasible with thousands of markets.

\section{Concluding Remarks \label{Section:CCL}}

In this paper, I develop a scalable nonparametric approach to estimating demand for differentiated products with market-level data. The approach builds on mild and economically motivated invariance restrictions, in particular permutation invariance with respect to competitors, to formulate an inverse market share model that can be written as the sum of the logit inverse term and a permutation-invariant function of market share ratios and differences in prices and characteristics relative to competitors. Leveraging a recent approximation result, I approximate this permutation-invariant component with a permutation-invariant polynomial expansion that addresses the curse of dimensionality in the number of products by expressing the function in terms of competitor-pooled rather than competitor-specific arguments. Monte Carlo simulations and an application to the U.S. beer market show that the approach performs well empirically.

Monte Carlo simulations show that the approach closely recovers substitution patterns, demand curvature, and markups across a range of data-generating processes. In the U.S. beer market, it yields average elasticities similar to those of the RCNL model but stronger substitution toward close substitutes, own-price elasticities that vary more at a given price and change nonmonotonically with price, and a wider range of demand curvature, with values both below and above one. These differences matter for counterfactuals: the price response of non-imported products to a tax on imports is much larger. Estimation takes about 30 seconds, more than a thousand times faster than the RCNL model (about 10.5 hours), illustrating that the approach is scalable to applications with thousands of markets and dozens of products.

The paper also points to several directions for future research. One is to address the two limitations discussed above: the difficulty of applying the approach to questions at the individual-consumer level and the remaining curse of dimensionality associated with incorporating many product characteristics into the inverse market share model. Another is to extend the framework to individual-level data and to dynamic and multicategory demand models.

\appendix
\begin{appendices}

\section{Proofs \label{Appendix:Proof}}

This appendix provides the proofs of the propositions stated in the main text. To ease exposition, I first suppress the notation for markets $t$.

\paragraph{Proof of Proposition \ref{prop:Main}.} Fix a market $t$ and suppress the market index. Let
$\mathbf{w}=(\mathbf{p},\mathbf{x}^{(2)})$.

\medskip
\noindent\textit{Step 1 (Market share representation).}
For each fixed $\mathbf{w}$, under Assumptions \ref{Ass1} -- \ref{AssReg},  
\citet[][Theorem 1 and Corollary 1]{monardo2019chap3} implies that there exist a constant $\mu>0$ and a continuously differentiable function $\mathbf{H}(\cdot,\mathbf{w})=\big(H_0(\cdot,\mathbf{w}),\ldots,H_J(\cdot,\mathbf{w})\big):
\mathbb R_{++}^{J+1}\to\mathbb R_{++}^{J+1}$ such that
\[
\sigma_j(\boldsymbol{\delta},\mathbf{w})=\frac{H_j(e^{\boldsymbol{\delta}},\mathbf{w})}{H_\Sigma(e^{\boldsymbol{\delta}},\mathbf{w})},
\qquad
H_\Sigma(e^{\boldsymbol{\delta}},\mathbf{w})=\sum_{m=0}^J H_m(e^{\boldsymbol{\delta}},\mathbf{w}),
\qquad j \in\mathcal{J},
\]
where each component $H_j(\cdot,\mathbf{w})$ is positive, $\mathbf{H}(\cdot,\mathbf{w})$ is homogeneous of
degree $\mu>0$, for any fixed finite values of $\{e^{\delta_k}\}_{k \ne j}$, $\lim_{e^{\delta_j} \to +\infty} H_j (e^{\boldsymbol{\delta}},\mathbf{w}) = + \infty$, and the derivative matrix $\left[\frac{\partial H_i(e^{\boldsymbol{\delta}},\mathbf{w})}{\partial \delta_j}\right]_{0\le i,j\le J}$ is symmetric positive semidefinite. In the remainder I take $\mu=1$. This is without loss for the auxiliary representation:
replace $\mathbf H$ by $\overline{\mathbf H}=H_\Sigma^{1/\mu-1}\mathbf H$.
This leaves market shares unchanged and makes $\overline{\mathbf H}$ homogeneous of degree one.
Symmetry and homogeneity give $\nabla_{\boldsymbol\delta}H_\Sigma=\mu\mathbf H$, and hence $\mathbf J_{\overline{\mathbf H}}^{\boldsymbol\delta}
=H_\Sigma^{1/\mu}\left(\mathbf J_{\boldsymbol\sigma}^{\boldsymbol\delta}
+\boldsymbol\sigma\boldsymbol\sigma'\right)$ is symmetric positive semidefinite. Moreover, $\overline H_j=H_\Sigma^{1/\mu}\sigma_j>0$ and
$\overline H_j\to+\infty$ as $\delta_j\to+\infty$, since $H_\Sigma\to+\infty$ and $\sigma_j\to1$.
Hereafter I write $\mathbf H$ for $\overline{\mathbf H}$.

\medskip
\noindent\textit{Step 2 (Inverse market share).}
The market share equations $\mathbf{s}=\sigma(\boldsymbol{\delta},\mathbf{w})$ imply
\begin{equation}\label{eq:mktshH}
\mathbf{H}(e^{\boldsymbol{\delta}},\mathbf{w})=H_\Sigma(e^{\boldsymbol{\delta}},\mathbf{w})\,\mathbf{s}.
\end{equation}
Let $C(\boldsymbol\delta,\mathbf w)=\ln H_\Sigma(e^{\boldsymbol\delta},\mathbf w)$.
By Step~1, $\nabla_{\boldsymbol\delta}C=\boldsymbol\sigma$.
Assumption~\ref{AssReg} makes $C(0,\boldsymbol\delta_{-0},\mathbf w)$ strictly convex,
since its Hessian is the positive-definite inside-good substitution matrix.
The limit condition in Assumption \ref{AssReg} gives $\partial C/\partial\delta_j=\sigma_j\to1$ as $\delta_j\to+\infty$,
with the other indices fixed, and hence the required boundary growth of $C$.
These conditions and the full index domain allow application of \citet[][Lemma~1]{monardo2019chap3}:
for every $\mathbf s\in\Delta_J^+$, there exists a unique $\boldsymbol\delta$ with $\delta_0=0$
such that $\boldsymbol\sigma(\boldsymbol\delta,\mathbf w)=\mathbf s$.

Then, for each fixed $\mathbf{w}$, I show that $\mathbf{H}(\cdot,\mathbf{w})$ is injective on $\mathbb R_{++}^{J+1}$. Suppose $\mathbf{H}(e^{\boldsymbol{\delta}},\mathbf{w})=\mathbf{H}(e^{\boldsymbol{\delta}'},\mathbf{w})$. Then, $\frac{\mathbf{H}(e^{\boldsymbol{\delta}},\mathbf{w})}{\sum_{m=0}^J H_m(e^{\boldsymbol{\delta}},\mathbf{w})}
=
\frac{\mathbf{H}(e^{\boldsymbol{\delta}'},\mathbf{w})}{\sum_{m=0}^J H_m(e^{\boldsymbol{\delta}'},\mathbf{w})}$, so $\boldsymbol{\sigma}(\boldsymbol{\delta},\mathbf{w})=\boldsymbol{\sigma}(\boldsymbol{\delta}',\mathbf{w})$. So,
$\boldsymbol{\delta}'=\boldsymbol{\delta}+c\mathbf 1$ for some $c\in\mathbb R$. Since $\mathbf{H}(\cdot,\mathbf{w})$ is
homogeneous of degree one, $\mathbf{H}(e^{\boldsymbol{\delta}'},\mathbf{w})
=
\mathbf{H}(e^{\boldsymbol{\delta}+c\mathbf 1},\mathbf{w})
=
\mathbf{H}(e^c e^{\boldsymbol{\delta}},\mathbf{w})
=
e^c \mathbf{H}(e^{\boldsymbol{\delta}},\mathbf{w})$. Because $\mathbf{H}(e^{\boldsymbol{\delta}'},\mathbf{w})=\mathbf{H}(e^{\boldsymbol{\delta}},\mathbf{w})$ and each component of $\mathbf{H}(e^{\boldsymbol{\delta}},\mathbf{w})$
is strictly positive, it follows that $e^c=1$, hence $c=0$. Therefore
$\boldsymbol{\delta}'=\boldsymbol{\delta}$, proving that $\mathbf{H}(\cdot,\mathbf{w})$ is injective.  Let $\mathbf{K}(\cdot,\mathbf{w})$
denote its inverse on its range.

I next show that the range of $\mathbf{H}(\cdot,\mathbf{w})$ is
$\mathbb R_{++}^{J+1}$. Let $\mathbf{y}\in\mathbb R_{++}^{J+1}$ be arbitrary, and define $\mathbf{s}=\mathbf{y}/\sum_{m=0}^J y_m$. Then $\mathbf{s}$ lies in the interior of the share simplex. For every such interior share
vector, \citet*[][Theorem 1 and Corollary 1]{monardo2019chap3} implies that there exists a unique normalized
index vector $\boldsymbol{\delta}$ such that $\boldsymbol{\sigma}(\boldsymbol{\delta},\mathbf{w})=\mathbf{s}$, and thus $\mathbf{H}(e^{\boldsymbol{\delta}},\mathbf{w})=H_\Sigma(e^{\boldsymbol{\delta}},\mathbf{w})\,\mathbf{s}$

Let $\lambda=H_\Sigma(e^{\boldsymbol{\delta}},\mathbf{w})>0$. Since $\mathbf{H}(\cdot,\mathbf{w})$ is
homogeneous of degree one, for every $c>0$, $\mathbf{H}(c e^{\boldsymbol{\delta}},\mathbf{w})=c \mathbf{H}(e^{\boldsymbol{\delta}},\mathbf{w})=c \lambda \mathbf{s}$. Choosing $c=\sum_{m=0}^J y_m/\lambda$ gives $\mathbf{H}(c e^{\boldsymbol{\delta}},\mathbf{w})
=
\left(\sum_{m=0}^J y_m\right)\mathbf{s}
=
\mathbf{y}$. Since $\mathbf{y}$ was arbitrary, $\operatorname{range}(\mathbf{H}(\cdot,\mathbf{w}))=\mathbb R_{++}^{J+1}$. Combined with injectivity, this implies that $\mathbf{H}(\cdot,\mathbf{w})$ is bijective on
$\mathbb R_{++}^{J+1}$, so we can write $\mathbf{K}(\cdot,\mathbf{w})=\mathbf{H}(\cdot,\mathbf{w})^{-1}$, and $\mathbf{K}(\mathbf{H}(e^{\boldsymbol{\delta}},\mathbf{w}),\mathbf{w})=e^{\boldsymbol{\delta}}$.

The inverse $\mathbf K(\cdot,\mathbf w)$ is continuously differentiable
on $\mathbb R^{J+1}_{++}$. Let $\mathbf A
=
\mathbf J_{\mathbf H}^{\boldsymbol\delta}
(e^{\boldsymbol\delta},\mathbf w)$.
Using
$\mathbf H=H_\Sigma\boldsymbol\sigma$, the symmetry of $\mathbf A$,
and the homogeneity of $\mathbf H$, we obtain $\mathbf A
=
H_\Sigma
\left(
\mathbf J_{\boldsymbol\sigma}^{\boldsymbol\delta}
+
\boldsymbol\sigma\boldsymbol\sigma'
\right)$. Furthermore, translation invariance implies that a common change in all indices does
not change market shares. Because
$\mathbf J_{\boldsymbol\sigma}^{\boldsymbol\delta}$ is positive semidefinite
and its inside-good principal block is positive definite, this common
change is the only direction in which shares remain locally unchanged.
Consequently, if $\mathbf v'
\mathbf J_{\boldsymbol\sigma}^{\boldsymbol\delta}
\mathbf v=0$, then $\mathbf v=c\mathbf1$ for some $c\in\mathbb R$. It follows that $\mathbf A$ is positive definite. Indeed, for every
$\mathbf v\neq\mathbf 0$, $
\mathbf v'\mathbf A\mathbf v
=
H_\Sigma
\left[
\mathbf v'
\mathbf J_{\boldsymbol\sigma}^{\boldsymbol\delta}
\mathbf v
+
(\mathbf v'\boldsymbol\sigma)^2
\right]
>0$. Finally, writing $\mathbf z=e^{\boldsymbol\delta}$, $\mathbf J_{\mathbf H}^{\mathbf z}(\mathbf z,\mathbf w)
=
\mathbf A\,\operatorname{diag}(\mathbf z)^{-1}$,
which is nonsingular. Since $\mathbf H(\cdot,\mathbf w)$ is a bijection,
the inverse-function theorem implies that
$\mathbf K(\cdot,\mathbf w)=\mathbf H(\cdot,\mathbf w)^{-1}$
is continuously differentiable on $\mathbb R^{J+1}_{++}$.

Because $\mathbf{H}(\cdot,\mathbf{w})$ is homogeneous of degree one, its inverse $\mathbf{K}(\cdot,\mathbf{w})$ is
 homogeneous of degree one: for every $z\in\mathbb{R}_{++}^{J+1}$
and every $\lambda>0$, $\mathbf{K}(\lambda \mathbf{z},\mathbf{w})=\lambda \mathbf{K}(\mathbf{z},\mathbf{w})$. Equivalently, each component $\log K_j(\cdot,\mathbf{w})$ is logarithmically homogeneous. Applying $\mathbf{K}(\cdot,\mathbf{w})$ to \eqref{eq:mktshH} and using its homogeneity yields $e^\delta
=
K\!\big(H_\Sigma(e^\delta,\mathbf{w})\,\mathbf{s},\mathbf{w}\big)
=
H_\Sigma(e^\delta,\mathbf{w})\,K(\mathbf{s},\mathbf{w})$. Taking logs componentwise gives
\[
\delta_j=\ln H_\Sigma(e^\delta,\mathbf{w})+\ln K_j(s,\mathbf{w}),
\qquad j\in\mathcal J.
\]
Using the normalization $\delta_0=0$ and rearranging terms, the equation for $j=0$ becomes $\ln H_\Sigma(e^\delta,\mathbf{w})=-\ln K_0(s,\mathbf{w})$, which, when substituted back, yields
\begin{equation}\label{eq:inverse_pre_ratio}
\delta_j=\ln K_j(s,\mathbf{w})-\ln K_0(\mathbf{s},\mathbf{w}),
\qquad j\in\mathcal J_0.
\end{equation}

Although $\mathbf K=\mathbf H^{-1}$ is uniquely determined by the
chosen function $\mathbf H$, \eqref{eq:inverse_pre_ratio}
depends on $\mathbf K$ only through the ratios $K_j/K_0$. Define the
auxiliary normalized function $\widetilde K_j(\mathbf s,\mathbf w)
=
\frac{s_0}{K_0(\mathbf s,\mathbf w)}
K_j(\mathbf s,\mathbf w)$ for all $j\in\mathcal{J}$. Then $\widetilde K_0(\mathbf s,\mathbf w)=s_0$, and
$
\widetilde K_j(\mathbf s,\mathbf w)/\widetilde K_0(\mathbf s,\mathbf w)
=
K_j(\mathbf s,\mathbf w)/
K_0(\mathbf s,\mathbf w)$. Therefore this normalization leaves
\eqref{eq:inverse_pre_ratio} unchanged and is without loss of
generality for the inverse market share representation. The normalized function
$\widetilde{\mathbf K}$ is used only as an auxiliary representation;
it is not asserted to be the inverse of $\mathbf H$.

Under this normalization,
\[
\delta_j=\ln \widetilde K_j(\mathbf{s},\mathbf{w})-\ln s_0,
\qquad j\in\mathcal J_0.
\]
Since $K(\cdot,\mathbf{w})$ is homogeneous of degree one, $K_0(\cdot,\mathbf{w})$ is also homogeneous
of degree one. Hence the scaling factor $s_0/K_0(\mathbf{s},\mathbf{w})$ is homogeneous of degree
zero in $s$, and each $\widetilde K_j(\cdot,\mathbf{w})$ remains homogeneous of degree one.
It follows that
\[
\ln \widetilde K_j(\mathbf{s},\mathbf{w})-\ln s_j=\bigl(\ln K_j(\mathbf s,\mathbf w)-\ln s_j\bigr)
-\bigl(\ln K_0(\mathbf s,\mathbf w)-\ln s_0\bigr)
\]
is homogeneous of degree zero in $\mathbf{s}$. Taking $s_j$ as numeraire, this term depends
on $\mathbf{s}$ only through the full collection of share ratios relative to product $j$,
namely $\{r_{jk}=s_k/s_j\}_{k\in\mathcal J\setminus\{j\}}$. Thus, the nonparametric term incorporates the subtraction of the outside good component.

Because $\boldsymbol\sigma$ is continuously differentiable jointly in
$(\boldsymbol\delta,\mathbf w)$ and its normalized inside good
Jacobian is everywhere nonsingular, the inverse-function theorem,
applied while treating $\mathbf w$ as a parameter, implies that the
normalized inverse is continuously differentiable jointly in
$(\mathbf s,\mathbf w)$. Therefore, there exists a continuously differentiable function $\widetilde g_j$ such that
\begin{equation}\label{eq:inverse_with_outside_ratio_clean}
\delta_j
=
\ln\!\left(\frac{s_j}{s_0}\right)
+
\widetilde g_j\!\left(\{r_{jk}\}_{k\in\mathcal J\setminus\{j\}},
\mathbf{w}
\right),
\qquad j\in\mathcal J.
\end{equation}

\noindent\textit{Step 3 (Relative objects and common functional form).} Fix $j\in\mathcal J_0$. Since $p_0=0$ and $\mathbf x_0^{(2)}=\mathbf 0$, the relative objects recover the levels: $p_j=-d^p_{j0}$ and $p_k=d^p_{jk}-d^p_{j0}$, while $\mathbf x_j^{(2)}=-\mathbf d_{j0}^{(2)}$ and $\mathbf x_k^{(2)}=\mathbf d_{jk}^{(2)}-\mathbf d_{j0}^{(2)}$ for $k\ne j$. Substituting them for $\mathbf w=(\mathbf p,\mathbf x^{(2)})$ in
\eqref{eq:inverse_with_outside_ratio_clean} expresses $\widetilde g_j$ as a function of the
relative objects, keeping those associated with $0$ separate.
Since these objects recover all price and characteristic levels on the normalized domain,
this change of arguments requires no translation invariance in $\mathbf p$ or $\mathbf x^{(2)}$.
Dependence on levels is retained through $d^p_{j0}$ and $\mathbf d_{j0}^{(2)}$.

Next, Assumptions \ref{Ass5}.\ref{Ass5.1} and \ref{Ass5}.\ref{Ass5.2} imply that relabeling
the inside products and their characteristics relabels the corresponding market shares.
Since this relabeling preserves $\delta_0=0$, uniqueness of the normalized inverse
implies that the inverse equations have the same functional form for every $j\in\mathcal J_0$.
Relabeling the inside competitors in $\mathcal J_0\setminus\{j\}$ leaves the correction
in \eqref{eq:inverse_with_outside_ratio_clean} unchanged. Therefore there exists a single
continuously differentiable function $G$, invariant under permutations of the inside competitors, such that
\[
\begin{aligned}
\delta_j=\ln\!\left(\frac{s_j}{s_0}\right)
+G\!\left(r_{j0},d^p_{j0},\mathbf d^{(2)}_{j0};
\{r_{jk},d^p_{jk},\mathbf d^{(2)}_{jk}\}_{k\in\mathcal J_0\setminus\{j\}}\right),
\quad j\in\mathcal J_0.
\end{aligned}
\]
Moreover, $s_j=(1+\sum_{k\ne j}r_{jk})^{-1}$ and $s_k=r_{jk}s_j$.
Together with the covariate identities above, this shows that the relative objects
determine $(\mathbf s,\mathbf w)$ and hence the value of $G$. Because the normalized inverse is unique (Step~2), $G=\delta_j-\ln(s_j/s_0)$ is uniquely determined at every admissible value of its arguments, which proves the uniqueness claim.

Suppose now that $p_k>0$ for every $k\in\mathcal J_0$.
For a fixed inside product $j$, write $v_{jk}=(r_{jk},d^p_{jk},\mathbf d^{(2)}_{jk})$.
Since $p_0=0$, $d^p_{j0}=-p_j<p_k-p_j=d^p_{jk}$, $k\in\mathcal J_0\setminus\{j\}$. Thus the outside block is the unique block with the smallest price difference.
Let $V$ be any ordering of an admissible collection of competitor blocks, and let $v_*(V)$ denote this block. Define $\widetilde G(V)=G\!\left(v_*(V);\,V\setminus\{v_*(V)\}\right)$, where the remaining blocks retain their order.
Permuting $V$ leaves the selected block unchanged and only reorders the remaining inside blocks.
Inside-competitor invariance of $G$ therefore implies that $\widetilde G$ is invariant under every permutation of the competitor blocks.
Since $v_*(V)=v_{j0}$, this representation gives the same value as the $G$ constructed above.
The strict inequality makes the selected block locally unchanged, so $\widetilde G$ is continuously differentiable on the domain obtained by permuting admissible blocks.
This proves the additional conclusion.

\medskip

\noindent\textit{Step 4 (Jacobian statement).} Fix $\mathbf w$ and let $\mathbf B(\boldsymbol\delta,\mathbf w)
=\left[\frac{\partial\sigma_i(\boldsymbol\delta,\mathbf w)}{\partial\delta_j}\right]_{1\leq i,j\leq J}$. By Assumption~\ref{AssReg}, $\mathbf B$ is symmetric positive definite.
Differentiating the normalized inverse established in Step~2 gives
$\partial\boldsymbol\delta_{-0}/\partial\mathbf s_{-0}=\mathbf B^{-1}$,
where $\boldsymbol\delta_{-0}=(\delta_1,\ldots,\delta_J)'$,
$\mathbf s_{-0}=(s_1,\ldots,s_J)'$, and $\mathbf B$ is evaluated at that inverse.
Differentiating the representation in Step~3, with $s_0=1-\sum_{j=1}^J s_j$, therefore yields
\[
\operatorname{diag}(\mathbf s_{-0})^{-1}
+\frac{1}{s_0}\mathbf1\mathbf1'
+\mathbf J_G(\mathbf s_{-0})
=\mathbf B^{-1}\succ0,
\]
where $\mathbf J_G$ is the Jacobian of the vector of corrections $(G_1,\ldots,G_J)'$
with respect to the inside shares, holding covariates fixed.
This proves the Jacobian statement.

Finally,
$\delta_j=\mathbf x_j\boldsymbol\beta-\alpha p_j+\xi_j$, so the proposition
follows.

\paragraph{Proof of Proposition \ref{PropBH}.} The argument follows \citet*[][proof of Theorem 1]{berry2014identification}. Fix $j \in \mathcal{J}_0$. Suppose that both $(\alpha,\boldsymbol{\beta},G)$ and $(\alpha',\boldsymbol{\beta}',G')$ satisfy the conditional moment restriction implied by \eqref{eq:PLIV} and Assumption \ref{Ass3}.\ref{Ass3.1}:
\begin{align}
\mathbb{E}\!\left[
\ln\!\left(\frac{s_{jt}}{s_{0t}}\right)
+ G(\mathbf{V}_{jt})
- \mathbf{x}_{jt}\boldsymbol{\beta}
+ \alpha p_{jt}
\;\Bigm|\;
\mathbf{x}_t,\mathbf{z}_t
\right]
&= 0
\quad \text{a.s.},
\label{eq:BH1}
\\[1ex]
\mathbb{E}\!\left[
\ln\!\left(\frac{s_{jt}}{s_{0t}}\right)
+ G'(\mathbf{V}_{jt})
- \mathbf{x}_{jt}\boldsymbol{\beta}'
+ \alpha' p_{jt}
\;\Bigm|\;
\mathbf{x}_t,\mathbf{z}_t
\right]
&= 0
\quad \text{a.s.}
\label{eq:BH2}
\end{align}

Subtracting \eqref{eq:BH2} from \eqref{eq:BH1} yields
\begin{equation}
\mathbb{E}\!\left[B(\mathbf{V}_{jt}) \mid \mathbf{x}_t,\mathbf{z}_t\right]
=
\mathbf{x}_{jt}\Delta\boldsymbol{\beta}
-
\Delta\alpha\,\mathbb{E}\!\left[p_{jt}\mid \mathbf{x}_t,\mathbf{z}_t\right],
\quad \text{a.s.},
\label{eq:BH3}
\end{equation}
where $B(\mathbf{V}_{jt}) = G'(\mathbf{V}_{jt}) - G(\mathbf{V}_{jt})$, $\Delta\boldsymbol{\beta} = \boldsymbol{\beta}' - \boldsymbol{\beta}$, and
$\Delta\alpha = \alpha' - \alpha$. By the separation condition in Assumption \ref{Ass3}.\ref{Ass3.3}, \eqref{eq:BH3} implies that $\Delta\alpha = 0$ and $\Delta\boldsymbol{\beta} = \mathbf{0}$. Hence $\alpha = \alpha'$ and $\boldsymbol{\beta} = \boldsymbol{\beta}'$.

Substituting $\Delta\alpha = 0$ and $\Delta\boldsymbol{\beta} = \mathbf{0}$ back into \eqref{eq:BH3} gives $\mathbb{E}\!\left[B(\mathbf{V}_{jt}) \mid \mathbf{x}_t,\mathbf{z}_t\right] = 0$ a.s. The completeness condition in Assumption \ref{Ass3}.\ref{Ass3.2} then implies that $B(\mathbf{V}_{jt}) = 0$ a.s., so $G(\mathbf{V}_{jt}) = G'(\mathbf{V}_{jt})$ a.s. on the support of $\mathbf{V}_{jt}$. Thus, $(\alpha,\boldsymbol{\beta})$ are point identified, and any two admissible functions $G$ and $G'$ that satisfy the conditional moment restriction obey
$G(\mathbf{V}_{jt})=G'(\mathbf{V}_{jt})$ almost surely. Because $G$ and $G'$ are continuous, this almost-sure equality extends to every admissible point in the support of $\mathbf{V}_{jt}$. Thus, $G$ is point identified on its support. Because this argument holds for every $j\in\mathcal{J}_0$, the inverse market share function $\boldsymbol{\sigma}^{-1}$ is point identified on the support of $(\mathbf{s}_t,\mathbf{p}_t,\mathbf{x}_t^{(2)})$. Moreover, $\xi_{jt}=\ln(s_{jt}/s_{0t})+G(\mathbf V_{jt})
-\mathbf x_{jt}\boldsymbol\beta+\alpha p_{jt}$ is identified almost surely. Thus all arguments of $\boldsymbol\sigma(\mathcal X_t)$ are identified, and the observed shares identify $\boldsymbol\sigma$ on $\mathcal X$ by continuity.

\paragraph{Proof of Proposition \ref{Prop:Approx}.} 

Let $\mathcal C_{\mathrm{sym}}(\Omega^N)$ denote the space of continuous permutation-invariant functions on $\Omega^N$. Extend $\widetilde G$ continuously from $\mathcal{\widetilde{V}}$ to $\Omega^N$ and average this extension over permutations of the competitor blocks. Since $\mathcal{\widetilde{V}}$ and $\widetilde G$ are permutation invariant, the averaged extension agrees with $\widetilde G$ on $\mathcal{\widetilde{V}}$. The result then follows from \citet*{bachmayr2024polynomial}. Their Proposition 3.2 shows that the span of the symmetrized tensor-product basis is dense in $\mathcal C_{\mathrm{sym}}(\Omega^N)$, and their Proposition 3.3 shows that this span coincides with the span of the pooled basis functions $A_{\mathbf u}$. Hence, for every $\varepsilon>0$, there exist $D$ and $\widetilde G_D\in\mathcal A_D$ such that $\sup_{\widetilde{\mathbf V}\in\mathcal{\widetilde{V}}}|\widetilde G(\widetilde{\mathbf V})-\widetilde G_D(\widetilde{\mathbf V})|<\varepsilon$.

\end{appendices}

\end{bibunit}

\newpage

\clearpage
\pagenumbering{arabic}
\setcounter{page}{1}
\setcounter{section}{0}
\setcounter{equation}{0}
\renewcommand{\theequation}{A.\arabic{equation}}

\begin{bibunit}

\begin{center}
    {\LARGE \bfseries Supplemental Appendix\par}
    \vspace{1.5em}
\end{center}

\thispagestyle{plain}

\section{Relation to Existing Models \label{Appendix:Connection}}

This supplemental appendix relates the inverse market share representation in Proposition \ref*{prop:Main} in the main text to three leading demand models and clarifies when the symmetric sieve delivers dimensionality reduction. This gain arises when the unknown component is symmetric in competitor-specific variables.

\paragraph{Semiparametric logit.}
\citet[][Appendix D.6]{compiani2018} considers the semiparametric logit model
\begin{equation}
\sigma_j(\boldsymbol{\delta}_t,\textbf{p}_t,\textbf{x}_t^{(2)})
=
\frac{\exp\!\big(\delta_{jt}+G(p_{jt},\textbf{x}_{jt}^{(2)})\big)}
{1+\sum_{k=1}^{J}\exp\!\big(\delta_{kt}+G(p_{kt},\textbf{x}_{kt}^{(2)})\big)} .
\label{eq:semi_logit_share}
\end{equation}
Inverting \eqref{eq:semi_logit_share} yields
\begin{equation}
\xi_{jt}
=
\ln\!\left(\frac{s_{jt}}{s_{0t}}\right)
-\textbf{x}_{jt}^{(1)}\boldsymbol{\beta}
-G(p_{jt},\textbf{x}_{jt}^{(2)}),
\label{eq:semi_logit_inverse}
\end{equation}
where the unknown component $G$ depends only on own-product observables. The symmetric sieve therefore provides no additional dimensionality reduction: there are no competitor-specific arguments to pool or symmetrize.

\paragraph{Universal logit model.}
The universal logit model of \citet*{mcfadden1984econometric} generalizes the standard and semiparametric logit models by allowing the utility of product $j$ to depend on both own-product variables and competitors' variables:
\begin{equation*}
\sigma_j(\boldsymbol{\delta}_t,\textbf{p}_t,\textbf{x}_t^{(2)})
=
\frac{\exp\!\Big(\delta_{jt}+G_1(p_{jt},\textbf{x}_{jt}^{(2)})
+G_2\big(\{p_{kt},\textbf{x}_{kt}^{(2)}\}_{k\in\mathcal J\setminus\{0,j\}}\big)\Big)}
{1+\sum_{k=1}^{J}\exp\!\Big(\delta_{kt}+G_1(p_{kt},\textbf{x}_{kt}^{(2)})
+G_2\big(\{p_{\ell t},\textbf{x}_{\ell t}^{(2)}\}_{\ell\in\mathcal J\setminus\{0,k\}}\big)\Big)} .
\label{eq:universal_logit_share}
\end{equation*}
Its inverse representation is
\begin{equation*}
\xi_{jt}
=
\ln\!\left(\frac{s_{jt}}{s_{0t}}\right)
-\textbf{x}_{jt}\boldsymbol{\beta}
+\alpha p_{jt}
-G_1(p_{jt},\textbf{x}_{jt}^{(2)})
-G_2\big(\{p_{kt},\textbf{x}_{kt}^{(2)}\}_{k\in\mathcal J\setminus\{0,j\}}\big) .
\label{eq:universal_logit_inverse}
\end{equation*}

The function $G_1$ captures how own-product variables affect product $j$'s market share, whereas $G_2$ captures how competitors' variables shape substitution patterns. If the market share function is permutation-invariant with respect to competitors, then $G_2$ must be permutation-invariant in its competitor-specific arguments. The symmetric sieve then approximates $G_2$ parsimoniously and imposes symmetry.

\paragraph{RCL model.} \citet*{Nevo2026} and \citet*{lu2023semi} propose two-step procedures for estimating the RCL demand. In both procedures, the first step uses inverse market share equations, and the second recovers the distribution of random coefficients nonparametrically.

In \citet*{lu2023semi}, separability between the product index and the random-coefficient component yields the inverse representation in \eqref{eq:semi_logit_inverse}. Because the unknown component again depends only on own-product variables, the symmetric sieve provides no additional dimensionality reduction.

By contrast, \citet*{Nevo2026} derive inverse market share equations of the form
\begin{equation}
\xi_{jt}
=
\ln\!\left(\frac{s_{jt}}{s_{0t}}\right)
-\textbf{x}_{jt}\boldsymbol{\beta}
+\alpha p_{jt}
-G\Big(\{s_{kt},d^p_{jk,t},\textbf{d}_{jk,t}^{(2)},1\{k=0\}\}_{k\in\mathcal J\setminus\{j\}}\Big).
\label{eq:rcl_inverse}
\end{equation}

In \eqref{eq:rcl_inverse}, $G$ is permutation invariant in the competitor blocks, which include the outside good distinguished by the indicator $1\{k=0\}$; the own share is not a separate argument \citep*[][Corollary~1]{Nevo2026}. The symmetric sieve applies once the indicator is added as a coordinate of each block. Unlike the inverse market share representation of this paper under the outside good exclusion, this representation keeps the outside good among the arguments of $G$.

\section{Alternative Approximation and Models}

This supplemental appendix first presents an alternative approximation result for permutation-invariant functions from \citet*{bachmayr2024polynomial} and then introduces an inverse market share model that can be viewed as a flexible parametric version of the nonparametric model. \citet*{chaves2025misallocation} build on both to estimate gasoline demand.

\subsection{Body-Order Approximation \label{Appendix:BodyApprox}}

The following result from \citet*{bachmayr2024polynomial} establishes that continuous permutation-invariant functions can be approximated by body-order expansions.

\begin{proposition}
Let $f: \Omega^n \to \mathbb{R}$ be a continuous, permutation-invariant function, where $\Omega$ is a compact subset of $\mathbb{R}^d$. Then $f$ can be uniformly approximated by a body-order expansion:
\[
f(\mathbf{x}_1, \ldots, \mathbf{x}_n)
\approx
\sum_{k=0}^N \sum_{1 \le j_1 < \cdots < j_k \le n}
V_k(\mathbf{x}_{j_1}, \ldots, \mathbf{x}_{j_k}),
\]
where $N\leq n$ and, for each $k=0,\ldots,N$, $V_k:\Omega^k\to\mathbb{R}$ is a permutation-invariant $k$-body function that can itself be uniformly approximated by a polynomial $V_{k,D_k}$ of degree $D_k$.

Furthermore, for any $\epsilon > 0$, there exist $N$ and $\{D_k\}_{k=1}^N$ such that
\[
f_{ND}(\mathbf{x}_1, \ldots, \mathbf{x}_n)
=
\sum_{k=0}^N \sum_{1 \le j_1 < \cdots < j_k \le n}
V_{k,D_k}(\mathbf{x}_{j_1}, \ldots, \mathbf{x}_{j_k})
\]
satisfies
\[
\sup_{(\mathbf{x}_1, \ldots, \mathbf{x}_n) \in \Omega^n}
\left|f(\mathbf{x}_1, \ldots, \mathbf{x}_n) - f_{ND}(\mathbf{x}_1, \ldots, \mathbf{x}_n)\right|
< \epsilon.
\]
\end{proposition}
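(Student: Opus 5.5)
The plan is to prove the two claims separately. The first is that a body-order expansion exists with continuous symmetric $k$-body terms approximating $f$. The second is that each $V_k$ can be replaced by a polynomial while keeping uniform control. Both follow from the density of symmetric polynomials, combined with an explicit algebraic decomposition of symmetric polynomials into body orders. The driving fact is Proposition~\ref{Prop:Approx}, or more precisely its source in \citet*{bachmayr2024polynomial}: the span of symmetrized tensor-product polynomials is dense in $\mathcal C_{\mathrm{sym}}(\Omega^n)$. Two preliminary reductions come first. If $\Omega$ is a general compact set rather than a cube, embed it in a cube $[-a,a]^d$. Then extend $f$ continuously by Tietze and average the extension over $S_n$, as in the proof of Proposition~\ref{Prop:Approx}. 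This keeps both continuity and symmetry and leaves $f$ unchanged on $\Omega^n$.

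Fix $\epsilon>0$. Density gives a symmetric polynomial $P$ with $\sup_{\Omega^n}|f-P|<\epsilon$. Write $P$ in the one-body basis $\{\phi_{\mathbf r}\}$ with $\phi_{\mathbf 0}=1$:
\[
P(\mathbf x_1,\ldots,\mathbf x_n)=\sum_{\mathbf r_1,\ldots,\mathbf r_n} c_{\mathbf r_1\cdots\mathbf r_n}\prod_{i=1}^n\phi_{\mathbf r_i}(\mathbf x_i).
\]
In each monomial, let $S=\{i:\mathbf r_i\neq\mathbf 0\}$ be its support. Group the monomials by support. Symmetry of $P$ means the coefficients are invariant when the $\mathbf r_i$ are permuted together with the index positions. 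Consequently, the sum of the monomials with a given support $S=\{j_1<\cdots<j_k\}$ equals $V_k(\mathbf x_{j_1},\ldots,\mathbf x_{j_k})$ for a single polynomial $V_k:\Omega^k\to\mathbb R$. This $V_k$ does not depend on $S$, and it is symmetric in its $k$ arguments. Summing over $k$ and $S$ then gives exactly $P=\sum_{k=0}^{N}\sum_{j_1<\cdots<j_k}V_k(\mathbf x_{j_1},\ldots,\mathbf x_{j_k})$. Here $N\le n$ is the largest support size that occurs, which is at most $\min(n,\deg P)$, and each $V_k$ has degree $D_k\le\deg P$.

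This one construction proves both statements at once. It produces symmetric $k$-body functions $V_k$ that are polynomials of finite degree $D_k$, and with $f_{ND}=P$ the sup-norm error is below $\epsilon$. For the wording of the first statement, where the $V_k$ are continuous and are then themselves approximated by polynomials, the same construction applies: take the polynomial $V_k$ as both the continuous component and its own polynomial approximant. Alternatively, one can let $\epsilon\to0$ along a sequence and read the statement as a family of approximations.

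The main obstacle is the bookkeeping in the support decomposition. I have to check that the coefficient tensor restricted to support $S$, after the positions in $S$ are relabeled in increasing order, is the same for every $S$ of size $k$ and is symmetric. The route is to apply a permutation $\pi$ with $\pi(\{1,\ldots,k\})=S$ that is order-preserving on $\{1,\ldots,k\}$, and use $P\circ\pi=P$ together with the linear independence of the tensor basis. That gives equality of the coefficients and hence the claimed $V_k$.
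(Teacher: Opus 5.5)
Your argument is correct, but it runs in the opposite direction from the paper's. The paper's proof is degenerate. It sets $N=n$, $V_n=f$ and $V_k=0$ for $k<n$, so the first claim holds with equality. The second claim is then just Proposition~\ref{Prop:Approx} applied to the single $n$-body term, with $V_{k,D_k}=0$ for $k<n$. Any genuine truncation $N<n$ is deferred to Section~4 of Bachmayr et al.\ under an extra assumption on $f$.

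You instead approximate first and decompose second. Because $\phi_{\mathbf 0}=1$ and a symmetric $P$ has coefficients with $c_{\mathbf r_1\cdots\mathbf r_n}=c_{\mathbf r_{\pi(1)}\cdots\mathbf r_{\pi(n)}}$, grouping monomials by support gives the exact identity $P=\sum_{k=0}^{N}\sum_{j_1<\cdots<j_k}V_k(\mathbf x_{j_1},\ldots,\mathbf x_{j_k})$. Here each $V_k$ is a polynomial and $N\le\min(n,\deg P)$. This gives something the paper's proof does not: a symmetric polynomial of degree $D$ is exactly an expansion with at most $D$-body terms. That is the mechanism behind the paper's claim that the sieve stops growing once $J-1\ge D$.

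You can also drop Tietze and the linear-independence bookkeeping. Take any polynomial within $\epsilon$ of $f$ on $\Omega^n$ by Stone--Weierstrass and average it over block permutations. The result is formally symmetric and still within $\epsilon$, since $f$ and $\Omega^n$ are permutation invariant.

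The cost of your route is that the $V_k$, and $N$, depend on $\epsilon$, with $N$ typically rising to $n$ as $\epsilon\to0$. The statement can be read as fixing the $V_k$ and letting only $N$ and the $D_k$ vary with $\epsilon$. Under that reading your construction does not suffice. The paper's trivial choice does. So would an $\epsilon$-free anchored decomposition of $f$ itself, $V_k(\mathbf y_1,\ldots,\mathbf y_k)=\sum_{T\subseteq\{1,\ldots,k\}}(-1)^{k-|T|}f_T$, followed by symmetric polynomial approximation of each $V_k$ on $\Omega^k$. Here $f_T$ is $f$ evaluated at the points $\mathbf y_i$, $i\in T$, and $n-|T|$ copies of a fixed $\mathbf a\in\Omega$.
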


\begin{proof}
    Take $N=n$, $V_n=f$, and $V_k=0$ for $k<n$. Proposition~\ref*{Prop:Approx} in the main text, which rests on Propositions~3.2 and~3.3 of \citet*{bachmayr2024polynomial}, then yields a symmetric polynomial approximation of $V_n$. Section~4 of \citet*{bachmayr2024polynomial} studies when much smaller truncation levels $N<n$ suffice, under an additional assumption on $f$ (their Assumption~4.2).
\end{proof}

This proposition shows that any continuous permutation-invariant function of $n$ variables can be approximated by a sum of permutation-invariant $k$-body components, each depending on at most $k\leq N$ variables, where $N\leq n$. These components capture interactions within subsets of variables, and each can in turn be approximated by a polynomial of degree $D_k$. Therefore, choosing the truncation level $N$ and the polynomial degrees $\{D_k\}_{k=1}^N$ appropriately yields an arbitrarily accurate polynomial approximation.

\subsection{A Flexible Parametric Inverse Market Share Model \label{Sec:FPD}}

This subsection relates the inverse market share representation in Proposition \ref*{prop:Main} in the main text to models of nested substitution and develops a flexible parametric specification of the nonparametric model.

The nested logit model partitions products into nests. Following \citet*{berry1994estimating}, its inverse market share function for a product $j$ in nest $g$ is
\begin{equation*}
\sigma_j^{-1}(\mathbf{s}_t;\boldsymbol{\mu})
=
\ln\!\left(\frac{s_{jt}}{s_{0t}}\right)
-\mu \ln\!\left(\frac{s_{jt}}{\sum_{k\in g}s_{kt}}\right).
\label{eq:nested_logit_inverse}
\end{equation*}

The nested logit restricts substitution through a prespecified nesting structure: products in the same nest must be closer substitutes than products in different nests.

One way to relax this restriction is to assign a separate nest to each product pair and a singleton nest to the outside good. The resulting inverse market share function is
\begin{equation}
\sigma_j^{-1}(\mathbf{s}_t;\boldsymbol{\mu})
=
\ln\!\left(\frac{s_{jt}}{s_{0t}}\right)
-\sum_{i\in \mathcal{J}_0\backslash\{j\}}\mu_{ij}
\ln\!\left(\frac{s_{jt}}{s_{it}+s_{jt}}\right),
\qquad j\in\mathcal{J}_0,
\label{eq:flexible_inverse_logit}
\end{equation}
where the pair-specific nesting parameter $\mu_{ij}$ governs substitution between products $i$ and $j$, larger values indicating closer substitutes.

Unlike the standard nested logit model, this specification does not require the modeler to prespecify a small number of market segments. The pair-specific terms instead fully parameterize the matrix of inverse market share derivatives, accommodating rich substitution patterns without coarse segmentation. When several product characteristics drive substitution, specifying $\mu_{ij}$ as a function of the proximity between products $i$ and $j$ maps the nesting parameters into characteristic space. Substitution can then vary with product proximity without prespecifying which dimensions of that space matter most. Consequently, \eqref{eq:flexible_inverse_logit} depends only on relative objects.

The nonparametric component $G(\mathbf{V}_{jt})$ in the main text generalizes this flexible-nesting logic. The vector $\mathbf{V}_{jt}$ collects relative objects between product $j$ and its competitors, and the symmetric sieve estimates how those objects shape substitution. The sieve thereby preserves the central intuition that substitution is governed by product-by-product comparisons without imposing a specific functional form or a small number of nests.

\section{Identification and Properties of the GMM Estimator \label{Appendix:ProofUniqueExist}}

This section addresses three distinct questions. The first part states conditions under which population moments identify the coefficients of a correctly specified polynomial family, linking identification to the specification used in estimation. The second establishes conditions for existence and uniqueness of the constrained sample GMM solution, ensuring that the vector of coefficient estimates is well defined. The third gives sufficient conditions for coefficient consistency and uniform convergence of the estimated $G$ to the true $G$ as the sample size and polynomial degree increase. These additional conditions are needed to establish convergence to the structural parameters; identification at a fixed degree and uniqueness of the sample solution do not suffice. 

\subsection{Identification of the Polynomial Specification}\label{Appendix:PolynomialIdentification}

\paragraph{Completeness without characteristics $\mathbf{x}^{(1)}_t$.} 
Suppose $J\geq2$ and $\mathbf x_t^{(1)}$ is absent. Under the outside good exclusion \eqref{eq:OutsideExclusion}, subtracting \eqref{eq:PLIV} for $j$ from the equation for $k\ne j$ gives
\[
B_{jk}(\mathbf V_{jt})
=\ln r_{jk,t}+G(\mathbf V_{kt})-G(\mathbf V_{jt})
-\mathbf d^{(2)}_{jk,t}\boldsymbol\beta+\alpha d^p_{jk,t}
=\xi_{kt}-\xi_{jt}.
\]
To see why the left-hand side is a function of $\mathbf V_{jt}$ alone, set $r_{jj,t}=1$, $d^p_{jj,t}=0$, and $\mathbf d^{(2)}_{jj,t}=\mathbf0$. For every inside product $\ell\ne k$,
\[
r_{k\ell,t}=\frac{r_{j\ell,t}}{r_{jk,t}},
\qquad
d^p_{k\ell,t}=d^p_{j\ell,t}-d^p_{jk,t},
\qquad
\mathbf d^{(2)}_{k\ell,t}
=\mathbf d^{(2)}_{j\ell,t}-\mathbf d^{(2)}_{jk,t}.
\]
These identities recover all the relative objects in $\mathbf V_{kt}$ from $\mathbf V_{jt}$, including the block for product $j$. Positive shares make the ratios well defined. At the true $(\alpha,\boldsymbol\beta,G)$, $B_{jk}$ is therefore a function of $\mathbf V_{jt}$. Exclusion gives
\[
\mathbb E[B_{jk}(\mathbf V_{jt})\mid\mathbf x_t,\mathbf z_t]
=\mathbb E[\xi_{kt}-\xi_{jt}\mid\mathbf x_t,\mathbf z_t]=0.
\]
Assumption~\ref*{Ass3}.\ref*{Ass3.2} in the main text applies to every integrable function of $\mathbf V_{jt}$, so it implies $B_{jk}(\mathbf V_{jt})=0$ almost surely. Hence $\xi_{kt}=\xi_{jt}$ almost surely for all inside $j,k$.

Under the model and Assumptions~\ref*{Ass3}.\ref*{Ass3.1} -- \ref*{Ass3}.\ref*{Ass3.2} in the main text, the absence of $\mathbf x_t^{(1)}$ rules out product-specific shocks, but allows a common market shock $\xi_{jt}=u_t$ with $\mathbb E[u_t\mid\mathbf x_t,\mathbf z_t]=0$. With $\mathbf x_t^{(1)}$ present, the shock-difference expression additionally contains $(\mathbf x_{kt}^{(1)}-\mathbf x_{jt}^{(1)})\boldsymbol\beta_1$, which need not have conditional mean zero or be recoverable from $\mathbf V_{jt}$. Including these characteristics does not guarantee completeness. If they contribute equally to all inside product indices, shocks must still be equal within each market.

\paragraph{Identification at a fixed degree.}

Proposition~\ref*{PropBH} in the main text establishes nonparametric identification within the admissible function class. The fixed-degree result below identifies the coefficients of the chosen polynomial family without requiring completeness or separation in Assumption~\ref*{Ass3} in the main text. Sample uniqueness ensures a well-defined estimate, but neither it nor fixed-degree identification establishes nonparametric identification.

Fix the degree $D$ and write $G(\mathbf V)=\boldsymbol\phi_D(\mathbf V)'\boldsymbol\gamma_D$, where $\boldsymbol\phi_D$ is a common, permutation-invariant basis. Let $\boldsymbol\theta=(\alpha,\boldsymbol\beta',\boldsymbol\gamma_D')'$ and maintain the constant normalization above. Stack the $N=JT$ product-market observations. Let $\mathbf Y$ collect $\ln(s_{jt}/s_{0t})$ and $\boldsymbol\xi$ collect $\xi_{jt}$. Define the matrices $\mathbf X\in\mathbb R^{N\times k_X}$ and $\mathbf Z\in\mathbb R^{N\times k_Z}$, with $k_Z\geq k_X$, whose corresponding rows are $\bigl(-p_{jt},\mathbf x_{jt},-\boldsymbol\phi_D(\mathbf V_{jt})'\bigr)$ and $\boldsymbol\psi_{jt}'$. Here, $\boldsymbol\psi_{jt}$ contains instrument functions of $(\mathbf x_t,\mathbf z_t)$; product-specific columns allow equation-specific moments. \eqref{eq:PLIV} then gives $\mathbf Y=\mathbf X\boldsymbol\theta+\boldsymbol\xi$, with sample moments $\mathbf g_T(\boldsymbol\theta)=N^{-1}\mathbf Z'(\mathbf Y-\mathbf X\boldsymbol\theta)$. Let $\mathbf W$ denote the GMM weighting matrix and $\Theta$ the admissible set defined by the shape restrictions described in the main text and below. Under correct specification at degree $D$, finite moments satisfying $\mathbb E[\mathbf Z'\boldsymbol\xi/N]=\mathbf0$ and $\operatorname{rank}\!\left(\mathbb E[\mathbf Z'\mathbf X/N]\right)=k_X$ identify $(\alpha,\boldsymbol\beta,\boldsymbol\gamma_D)$ jointly. This is standard linear GMM identification \citep[Section~2.2.3]{newey1994large}. If the true $G$ belongs to the chosen polynomial family, full rank identifies its coefficients, even without $\mathbf x^{(1)}$. Otherwise, approximation error need not be orthogonal to instruments, so GMM need not recover the structural coefficients.

\subsection{Existence and Uniqueness of a Solution to the Constrained GMM Problem}\label{Appendix:ConstrainedGMMUniqueness}

The following proposition instead concerns the sample minimization problem.

\begin{proposition}[Existence and uniqueness of a solution to the constrained GMM problem]\label{Prop:UniqueSol}
Suppose that $\mathbf{Z}'\mathbf{X}$ has full column rank $k_X$ and that $\mathbf{W}$ is symmetric positive definite. Then the constrained GMM minimization problem admits a unique solution in $\boldsymbol{\theta}$.
\end{proposition}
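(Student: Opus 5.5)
The argument treats the constrained GMM problem as minimizing a strictly convex quadratic over a closed convex set. With $\mathbf g_T(\boldsymbol\theta)=N^{-1}\mathbf Z'(\mathbf Y-\mathbf X\boldsymbol\theta)$, expand the criterion as
\[
Q_T(\boldsymbol\theta)=N^{-2}(\mathbf Y-\mathbf X\boldsymbol\theta)'\mathbf Z\mathbf W\mathbf Z'(\mathbf Y-\mathbf X\boldsymbol\theta).
\]
This is a quadratic function of $\boldsymbol\theta$ with Hessian $2N^{-2}\mathbf X'\mathbf Z\mathbf W\mathbf Z'\mathbf X$.

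The first step is strict convexity. For $\mathbf v\neq\mathbf 0$, full column rank of $\mathbf Z'\mathbf X$ gives $\mathbf Z'\mathbf X\mathbf v\neq\mathbf 0$. Positive definiteness of $\mathbf W$ then gives $\mathbf v'\mathbf X'\mathbf Z\mathbf W\mathbf Z'\mathbf X\mathbf v>0$. So the Hessian is positive definite and $Q_T$ is strictly convex. It is also coercive: $Q_T(\boldsymbol\theta)\geq c\|\boldsymbol\theta\|^2-b\|\boldsymbol\theta\|-a$, where $c>0$ is $N^{-2}$ times the smallest eigenvalue of the Hessian form. Hence its sublevel sets are compact.

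The second step concerns the admissible set $\Theta$. The shape restrictions of Section~\ref{Sec:Implementation} are the symmetry equalities and the strict row diagonal dominance of $\mathbf C_t$, written with auxiliary variables for the absolute values. The optional substitutability and global coefficient bound restrictions are also included. All of these are linear equalities or inequalities in the sieve coefficients and auxiliary variables. Projecting out the auxiliary variables preserves convexity, so $\Theta$ is convex.

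The main obstacle is closedness, and nonemptiness, which I take as part of the maintained assumptions on $\Theta$. Strict dominance written with strict inequalities gives an open set, on which a minimizer might fail to exist. I would handle this as the implementation does: impose dominance with a fixed margin $\epsilon>0$, i.e., diagonal minus the sum of auxiliary bounds $\geq\epsilon$. This makes $\Theta$ a polyhedron, the projection of a polyhedron in the lifted space, and therefore closed. With that in place, pick any $\boldsymbol\theta_0\in\Theta$. The set $\{\boldsymbol\theta\in\Theta:Q_T(\boldsymbol\theta)\leq Q_T(\boldsymbol\theta_0)\}$ is then nonempty and compact, and continuity of $Q_T$ gives a minimizer by Weierstrass.

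Uniqueness follows from strict convexity. If $\boldsymbol\theta_1\neq\boldsymbol\theta_2$ were both minimizers, their midpoint would lie in $\Theta$ by convexity. Strict convexity would give it a strictly smaller criterion value, a contradiction.

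If auxiliary variables are retained as decision variables, the minimizer is unique in $\boldsymbol\theta$ but not necessarily in the auxiliaries. This is exactly why the statement is made "in $\boldsymbol\theta$".
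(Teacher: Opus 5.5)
Your argument follows the paper's proof. You expand the criterion as a quadratic whose Hessian $2\mathbf X'\mathbf Z\mathbf W\mathbf Z'\mathbf X$ is positive definite by the rank condition and $\mathbf W\succ 0$. You then deduce strict convexity and coercivity, obtain a minimizer on a closed convex $\Theta$, and get uniqueness from strict convexity. Your treatment of closedness is more careful than the paper's, which simply asserts that the linearized absolute-value constraints are closed. You note that strict dominance must be imposed with the margin $\varepsilon_{\mathrm{pd}}$, as the implementation does, and that the projection of a polyhedron is a polyhedron; that settles the point properly.

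The one real gap is nonemptiness, which you take as a maintained assumption. It is not among the proposition's hypotheses, and the existence claim fails without it, so you need to prove it. The paper does this in one line by setting all sieve coefficients to zero, and you should do the same:
\begin{itemize}
\item Every $\Gamma^{s,p}_{jk,t}$ and $\Gamma^{p}_{jk,t}$ then vanishes, so the symmetry equalities hold trivially.
\item $\mathbf C_t=\operatorname{diag}(\mathbf s_t)^{-1}$ has zero off-diagonal entries and diagonal entries $1/s_{jt}>1>\varepsilon_{\mathrm{pd}}$. Taking all auxiliary variables equal to zero, the dominance rows hold.
\item The global bound $\sum_u\kappa_{u,s}|\theta_u|\leq 1-\eta\overline s$ holds, since its left-hand side is zero.
\item $\mathbf A_t=s_{0t}^{-1}\mathbf 1\mathbf 1'+\operatorname{diag}(\mathbf s_t)^{-1}$ is entrywise nonnegative, with diagonal entries at least as large as the off-diagonal ones.
\item The own-price restriction reduces to $\alpha\geq 0$. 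Any $\alpha>0$ satisfies it, together with a zero income--price interaction for the sign restriction used in the application.
\end{itemize}
With this feasible point, your sublevel-set and Weierstrass argument goes through as written.
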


\begin{proof}
The sample moment vector is affine in $\boldsymbol{\theta}$, so, after multiplying by $(JT)^2$, the GMM criterion can be written as
$$
Q(\boldsymbol{\theta})
=
\frac{1}{2}\boldsymbol{\theta}'\mathbf{h}\boldsymbol{\theta}
+
\mathbf{f}'\boldsymbol{\theta}
+
c,
$$
where
$$
\mathbf{h}
=
2\,\mathbf{X}'\mathbf{Z}\mathbf{W}\mathbf{Z}'\mathbf{X},
\qquad
\mathbf{f}
=
-2\,\mathbf{X}'\mathbf{Z}\mathbf{W}\mathbf{Z}'\mathbf{Y},
\qquad
c
=
\mathbf{Y}'\mathbf{Z}\mathbf{W}\mathbf{Z}'\mathbf{Y}.
$$

By assumption, $\mathbf{Z}'\mathbf{X}$ has full column rank $k_X$ and $\mathbf{W}$ is symmetric positive definite. Hence $\mathbf{h}$ is positive definite \citep[][Observation 7.1.8. (b), p. 431]{horn2013matrix}. It follows that $Q$ is strictly convex and coercive.\footnote{Because $\mathbf{h}$ is symmetric positive definite, there exists $c_h>0$ such that $\boldsymbol{\theta}'\mathbf{h}\boldsymbol{\theta}\ge c_h\|\boldsymbol{\theta}\|^2$ for all $\boldsymbol{\theta}$. By Cauchy-Schwarz, $Q(\boldsymbol{\theta})
\ge
(c_h/2)\|\boldsymbol{\theta}\|^2
-\|\mathbf{f}\|\,\|\boldsymbol{\theta}\|
+c$. The right-hand side tends to $+\infty$ as $\|\boldsymbol{\theta}\|\to\infty$, so $Q$ is coercive.}

The feasible set $\Theta$ is nonempty: setting all polynomial coefficients in $G$ to zero satisfies the shape restrictions. Moreover, the restrictions defining $\Theta$ are affine or convex absolute-value inequalities. The absolute-value inequalities are closed and convex, and their standard linearization with auxiliary variables yields an equivalent system of affine constraints. Therefore, $\Theta$ is closed and convex.

Because $Q$ is continuous and coercive, it attains a minimum on the nonempty closed set $\Theta$ \citep[][Theorem 1.9, p. 11 and Corollary 3.27, p. 92]{rockafellar2009variational}. Strict convexity of $Q$ and convexity of $\Theta$ make this minimizer unique. Equivalently, the estimation problem is a strictly convex quadratic program over a nonempty closed convex feasible set and therefore has a unique solution \citep[][Section 4.2, p. 137]{boyd2004convex}.
\end{proof}

\subsection{Sup-norm Consistency of the Sieve GMM Estimator}\label{Appendix:SupNormConsistency}

Fix $J$ and maintain the conditions of Proposition~\ref*{PropBH} in the main text. Let $\mathcal V_0=\bigcup_{j=1}^J\operatorname{supp}(\mathbf V_{jt})$ be a compact subset of the model's domain, and let the sieve domain $\widetilde{\mathcal V}$ be a compact set, unchanged by relabeling inside competitors, that contains $\mathbf h(\mathcal V_0)$ and on which $\widetilde G_0$ is continuous. Write $\vartheta=(\alpha,\boldsymbol\beta,G)$, with true value $\vartheta_0$, and use the metric $d(\vartheta,\widetilde\vartheta)=|\alpha-\widetilde\alpha|+\|\boldsymbol\beta-\widetilde{\boldsymbol\beta}\|+\|G-\widetilde G\|_{\infty,\mathcal V_0}$. Let $\mathcal H$ be a set of candidate triples $(\alpha,\boldsymbol\beta,G)$, with $G$ continuous and permutation invariant, respecting the constant normalization.

For a degree sequence $D_T\to\infty$, let $\widehat Q_T(\vartheta)=\mathbf g_T(\vartheta)'\mathbf W_T\mathbf g_T(\vartheta)$ denote the sample GMM criterion, with $\mathbf g_T$ formed using the factor $1/(JT)$ as in Section~\ref*{Subsec:EmpiricalAppli} of the main text. For $\vartheta\in\mathcal H$, let $m_j(\mathbf x_t,\mathbf z_t;\vartheta)=\mathbb E[\xi_{jt}(\vartheta)\mid\mathbf x_t,\mathbf z_t]$, with $\xi_{jt}(\vartheta)$ defined in \eqref{eq:xistruct}, and define the population criterion
\begin{equation}\label{eq:PopCriterion}
Q(\vartheta)=\sum_{j\in\mathcal J_0}\mathbb E\bigl[m_j(\mathbf x_t,\mathbf z_t;\vartheta)^2\bigr].
\end{equation}
Assume:
\begin{enumerate}[label=(\roman*),leftmargin=*,itemsep=6pt]
\item $\vartheta_0\in\mathcal H$, $\mathcal H$ is compact under $d$, and the population criterion $Q$ defined in \eqref{eq:PopCriterion} is continuous and real-valued on $\mathcal H$ and uniquely minimized at $\vartheta_0$ (Condition~3.1 and Remark~3.1(4) in \citealp{chen2007large}).

\item The sets $\mathcal H_D\subseteq\mathcal H$ are nested symmetric polynomial sieves in the normalized arguments, and there exist $\vartheta_D\in\mathcal H_D$ with $d(\vartheta_D,\vartheta_0)\to0$ as $D\to\infty$ (Condition~3.2 in \citealp{chen2007large}; with (i), also Condition~3.3).

\item Each $\mathcal H_D$ is closed in the metric $d$, and hence compact by (i) (Condition~3.4 in \citealp{chen2007large}).

\item $\widehat Q_T$ converges uniformly in probability over $\mathcal H_{D_T}$ to $Q$ (Conditions~3.5(i) -- (ii) in \citealp{chen2007large}).

\item The estimator $\widehat\vartheta_T\in\mathcal H_{D_T}$ is measurable and minimizes the sample GMM criterion up to an $o_P(1)$ error (Condition~3.5(iii) in \citealp{chen2007large}).
\end{enumerate}

Then $d(\widehat\vartheta_T,\vartheta_0)=o_P(1)$. This establishes consistency of $\widehat\alpha$ and $\widehat{\boldsymbol\beta}$ and sup-norm consistency of the estimated $G$ on $\mathcal V_0$ (\citealp[Theorem~3.1]{chen2007large}, applied to $-\widehat Q_T$ and $-Q$ since Chen uses maximization). Sup-norm consistency of $G$ does not by itself establish consistency of demand elasticities, which additionally requires control of the relevant derivatives and stability of the inverse market share Jacobian.

These conditions have the following interpretation. Under the conditions of Proposition~\ref*{PropBH} in the main text, $Q(\vartheta)=0$ only if $\alpha=\alpha_0$, $\boldsymbol\beta=\boldsymbol\beta_0$, and $G=G_0$ on $\mathcal V_0$, so uniqueness in (i) follows from identification; (iv) requires the number of instrument functions to grow with $T$. Under (i), continuity, compactness, and uniqueness imply that $Q$ is uniformly above its minimum away from $\vartheta_0$. Candidates that remain a fixed distance from the truth cannot have population criterion values arbitrarily close to the minimum. Condition (ii) means that increasing the polynomial degree allows admissible candidates to approximate the true coefficients and function arbitrarily closely, with uniform approximation of $G_0$ on $\mathcal V_0$. Proposition~\ref*{Prop:Approx} in the main text establishes approximation in the unrestricted symmetric polynomial spaces. Condition (ii) additionally requires that the approximants belong to the admissible sieves $\mathcal H_D$. Under (iii), for each fixed $D$, limits under $d$ of sequences in $\mathcal H_D$ remain in $\mathcal H_D$. Condition (iv) is uniform convergence of the sample GMM criterion to its population limit over the sieve used at sample size $T$. Under (v), the minimization may be approximate, provided its error converges to zero in probability.

The constrained estimator is also consistent if the economic restrictions permit approximation of the truth: the approximants $\vartheta_{D_T}$ must satisfy all implemented constraints with probability tending to one (in particular, the true function must itself satisfy the implemented restrictions, which, like diagonal dominance, need not follow from Assumption~\ref*{AssReg} in the main text). On this event, a measurable estimator minimizing the sample GMM criterion over the constrained subset of $\mathcal H_{D_T}$ up to an $o_P(1)$ error has criterion at most $\widehat Q_T(\vartheta_{D_T})+o_P(1)$. Consistency then follows by adapting the proof of \citet[Theorem~3.1]{chen2007large} using this bound. Proposition~\ref*{Prop:Approx} in the main text establishes polynomial density, but does not ensure approximation within $\mathcal H$ or feasibility under the economic constraints.

    \section{Shape Restrictions}
\label{app:shape_restrictions}

\subsection{Local Restrictions}

In this appendix, I report the shape restrictions imposed on the
finite-dimensional sieve approximation used in estimation. I impose these restrictions market by market at the observed data points.
They regularize the estimator by enforcing local symmetry and
positive definiteness. Because I impose them only at
observed points, the restrictions need not hold globally.

Let $\mathbf{1}$ denote the $J$-dimensional vector of ones and define
$\mathbf{J}_J=\mathbf{1}\mathbf{1}'$. Let
$\mathbf{s}_t=(s_{1t},\ldots,s_{Jt})'$ collect the inside good shares and
let $s_{0t}=1-\sum_{j\in\mathcal{J}_0}s_{jt}$. For a vector-valued function
$f:\mathbb{R}^J\rightarrow\mathbb{R}^J$, let
$\mathbf{J}^{\mathbf{x}}_f(\mathbf{x}_0)$ denote its Jacobian with respect to
$\mathbf{x}$, evaluated at $\mathbf{x}_0$.

The estimated inverse market share equations are
\begin{equation}
\widehat{\xi}_{jt}
=
\log\!\left(\frac{s_{jt}}{s_{0t}}\right)
+
G_D\!\left(
\{r_{jk,t},d^p_{jk,t},\mathbf{d}^{(2)}_{jk,t}\}_{k\neq j};
\widehat{\boldsymbol{\theta}}_D
\right)
-
x_{jt}\widehat{\boldsymbol{\beta}}
+
\widehat{\alpha}p_{jt},
\qquad j\in\mathcal{J}_0.
\label{eq:shape_price_inverse}
\end{equation}
Define
\begin{align}
\Gamma^{s,p}_{jk,t}(\boldsymbol{\theta}_D)
&=
\frac{
\partial G_D\!\left(
\{r_{j\ell,t},d^p_{j\ell,t},\mathbf{d}^{(2)}_{j\ell,t}\}_{\ell\neq j};
\boldsymbol{\theta}_D
\right)}
{\partial s_{kt}},
\qquad j,k\in\mathcal{J}_0,
\label{eq:shape_price_gamma_s}\\
\Gamma^p_{jk,t}(\boldsymbol{\theta}_D)
&=
\frac{
\partial G_D\!\left(
\{r_{j\ell,t},d^p_{j\ell,t},\mathbf{d}^{(2)}_{j\ell,t}\}_{\ell\neq j};
\boldsymbol{\theta}_D
\right)}
{\partial d^p_{jk,t}},
\qquad j\neq k.
\label{eq:shape_price_gamma_p}
\end{align}
The Jacobian of the inverse-share function with respect to $\mathbf{s}_t$ is
\begin{equation}
\mathbf{A}^p_t(\boldsymbol{\theta}_D)
=
\frac{1}{s_{0t}}\mathbf{J}_J
+
\operatorname{diag}(\mathbf{s}_t)^{-1}
+
\big[
\Gamma^{s,p}_{jk,t}(\boldsymbol{\theta}_D)
\big]_{j,k=1}^J.
\label{eq:shape_price_A}
\end{equation}
The Jacobian with respect to prices is
$\mathbf{B}^p_t(\alpha,\boldsymbol{\theta}_D)$, whose entries are
\begin{equation}
\big[
\mathbf{B}^p_t(\alpha,\boldsymbol{\theta}_D)
\big]_{jk}
=
\begin{cases}
\displaystyle
\alpha-\sum_{\ell\neq j}
\Gamma^p_{j\ell,t}(\boldsymbol{\theta}_D),
& j=k, \\[0.9em]
\displaystyle
\Gamma^p_{jk,t}(\boldsymbol{\theta}_D),
& j\neq k.
\end{cases}
\label{eq:shape_price_B}
\end{equation}
Implicit differentiation gives the demand-price Jacobian:
\begin{equation}
\mathbf{J}^{\mathbf{p}}_{\boldsymbol{\sigma}}
(\boldsymbol{\delta}_t,\mathbf{p}_t,\mathbf{x}_t^{(2)})
=
-
\big[
\mathbf{A}^p_t(\widehat{\boldsymbol{\theta}}_D)
\big]^{-1}
\mathbf{B}^p_t
(\widehat{\alpha},\widehat{\boldsymbol{\theta}}_D).
\label{eq:shape_price_demand_jacobian}
\end{equation}

Let
\begin{equation}
\mathbf{C}_t^p(\boldsymbol{\theta}_D)
=
\operatorname{diag}(\mathbf{s}_t)^{-1}
+
\big[
\Gamma^{s,p}_{jk,t}(\boldsymbol{\theta}_D)
\big]_{j,k=1}^J,
\end{equation}
so that
\[
\mathbf{A}_t^p
=
\mathbf{C}_t^p
+
\frac{1}{s_{0t}}\mathbf{1}\mathbf{1}'.
\]

When these matrices are evaluated at a candidate coefficient vector rather than at $\widehat{\boldsymbol{\theta}}_D$, I write $\mathbf{A}^p_t(\boldsymbol{\theta}_D)$, $\mathbf{B}^p_t(\alpha,\boldsymbol{\theta}_D)$ and $\mathbf{C}^p_t(\boldsymbol{\theta}_D)$; at $\widehat{\boldsymbol{\theta}}_D$.

At each observed market, I require $\mathbf{C}_t^p$ to be symmetric and positive definite:
\begin{align}
\mathbf{C}_t^p(\boldsymbol{\theta}_D)
&=
\mathbf{C}_t^p(\boldsymbol{\theta}_D)',
\tag{Symmetry}
\label{eq:shape_no_price_symmetry}\\
\mathbf{C}_t^p(\boldsymbol{\theta}_D)
&\succ 0.
\tag{Positive Definiteness}
\label{eq:shape_no_price_psd}
\end{align}

I implement both conditions using linear restrictions on the sieve
coefficients. Let $K_D=\dim(\boldsymbol{\theta}_D)$ denote the number of flexible sieve
coefficients.

\paragraph{Symmetry.}
At each observed market $t$, symmetry requires
\begin{equation}
[\mathbf{C}_t^p(\boldsymbol{\theta}_D)]_{jk}
-
[\mathbf{C}_t^p(\boldsymbol{\theta}_D)]_{kj}
=
\Gamma^{s,p}_{jk,t}(\boldsymbol{\theta}_D)
-
\Gamma^{s,p}_{kj,t}(\boldsymbol{\theta}_D)
=
0,
\qquad j<k.
\label{eq:shape_symmetry_linear}
\end{equation}
The first equality follows because
$\operatorname{diag}(\mathbf{s}_t)^{-1}$ is symmetric. Since each element
of $\Gamma^{s,p}_t(\boldsymbol{\theta}_D)$ is linear in
$\boldsymbol{\theta}_D$, these conditions are linear equality restrictions.
With $T$ markets and $J$ products per market, imposing every restriction
directly produces $TJ(J-1)/2$ equality rows and requires no auxiliary
variables.

Although the pooled system contains $TJ(J-1)/2$ equations, every row
restricts the same $K_D$-dimensional coefficient vector
$\boldsymbol{\theta}_D$. Its rank is therefore at most
$\min\{K_D,TJ(J-1)/2\}$. Any rows beyond this rank are redundant and do not
further restrict the feasible coefficient set.

To avoid constructing the full system and retaining redundant rows, I use an
equivalent low-memory representation. Let $\mathbf{R}_{\mathrm{sym}}$ denote the
$TJ(J-1)/2\times K_D$ matrix of raw symmetry restrictions. The algorithm
processes one market at a time and maintains an orthonormal basis
$\mathbf{Q}_{\mathrm{sym}}$ for the accumulated row space of
$\mathbf{R}_{\mathrm{sym}}$. It then imposes
$\mathbf{Q}_{\mathrm{sym}}\boldsymbol{\theta}_D=0$.

Because the rows of $\mathbf{Q}_{\mathrm{sym}}$ span the same space as those
of $\mathbf{R}_{\mathrm{sym}}$, the two matrices impose the same restrictions
on the sieve coefficients, up to the numerical rank tolerance. Let
$r_{\mathrm{sym}}$ denote the numerical rank of
$\mathbf{R}_{\mathrm{sym}}$. The low-memory representation therefore sends
only $r_{\mathrm{sym}}$ equality rows to the solver, with
$r_{\mathrm{sym}}\leq\min\{K_D,TJ(J-1)/2\}$.

The value of $r_{\mathrm{sym}}$ depends on the sieve degree and the observed market shares, prices, and product characteristics. The retained rows are orthonormal linear combinations of the raw restrictions, not a selected subset of product-pair restrictions. Thus, the
low-memory representation reduces memory use and improves numerical
conditioning without changing the feasible coefficient set.

\paragraph{Positive definiteness.}
Together with symmetry, strict row-diagonal dominance provides a linear sufficient
condition for positive definiteness. Fix a small positive margin $\varepsilon_{\mathrm{pd}}$. In the implementation, I set $\varepsilon_{\mathrm{pd}}=10^{-6}$.
For every market $t$ and ordered
pair $j\neq k$, I introduce a nonnegative auxiliary variable $u_{jk,t}$ satisfying
\begin{align}
u_{jk,t}
&\geq
[\mathbf{C}_t^p(\boldsymbol{\theta}_D)]_{jk},
&& j\neq k,
\nonumber\\
u_{jk,t}
&\geq
-[\mathbf{C}_t^p(\boldsymbol{\theta}_D)]_{jk},
&& j\neq k,
\label{eq:shape_no_price_abs}\\
[\mathbf{C}_t^p(\boldsymbol{\theta}_D)]_{jj}
&\geq
\sum_{k\neq j}u_{jk,t}
+\varepsilon_{\mathrm{pd}},
&& j=1,\ldots,J.
\label{eq:shape_no_price_dd}
\end{align}
The first two inequalities imply $u_{jk,t}
\geq
\left|
[\mathbf{C}_t^p(\boldsymbol{\theta}_D)]_{jk}
\right|$, so the last inequality ensures
\[
[\mathbf{C}_t^p(\boldsymbol{\theta}_D)]_{jj}
\geq
\sum_{k\neq j}
\left|
[\mathbf{C}_t^p(\boldsymbol{\theta}_D)]_{jk}
\right|
+\varepsilon_{\mathrm{pd}}.
\]

This construction introduces $TJ(J-1)$ auxiliary variables and adds
$2TJ(J-1)$ inequalities to bound the absolute values, together with $TJ$ strict diagonal-dominance inequalities. The total number of linear inequality rows
is therefore $2TJ(J-1)+TJ=TJ(2J-1)$. Together with the direct symmetry
system, these conditions add $TJ(J-1)/2+TJ(2J-1)$ constraint rows. With the
low-memory symmetry representation, they instead add
$r_{\mathrm{sym}}+TJ(2J-1)$ rows.

I impose strict row-diagonal dominance on $\mathbf{C}_t^p$ rather than on
$\mathbf{A}_t^p$. The rank-one term
$s_{0t}^{-1}\mathbf{1}\mathbf{1}'$ can prevent $\mathbf{A}_t^p$ from being
row-diagonally dominant even when it is positive definite. Under symmetry, the positive-margin diagonal-dominance restrictions ensure that
$\mathbf{C}_t^p\succeq\varepsilon_{\mathrm{pd}}\mathbf{I}_J$ and hence that
$\mathbf{C}_t^p$ is positive definite. Since $s_{0t}^{-1}\mathbf{1}\mathbf{1}'$ is also positive
semidefinite, $\mathbf{A}_t^p$ is positive definite.
The restriction on
$\mathbf{C}_t^p$ is therefore a conservative but computationally convenient
sufficient condition for the corresponding restriction on
$\mathbf{A}_t^p$. The positive margin therefore guarantees that both
$\mathbf{C}_t^p$ and $\mathbf{A}_t^p$ are positive definite and nonsingular.

Finally, following \citet*{compiani2018}, one can impose one restriction under
substitutability and two further restrictions under additional conditions. First, under substitutability,
the inside good block of $\mathbf{J}^{\boldsymbol{\delta}}_{\boldsymbol{\sigma}}$ is an
$M$-matrix. Its inverse, the Jacobian of the inverse share function, is
therefore an inverse M-matrix and has nonnegative entries:
\begin{equation}
[\mathbf{A}_t^p(\boldsymbol{\theta}_D)]_{jk}
=
\frac{\partial \sigma_j^{-1}}
{\partial s_k}
(\mathbf{s}_t,\mathbf{p}_t,\mathbf{x}_t^{(2)})
\geq 0,
\qquad j,k=1,\ldots,J.
\label{eq:shape_inverse_share_monotonicity}
\tag{Share Monotonicity}
\end{equation}

Second, suppose that the own effect of each demand index dominates the sum
of its cross effects: $|
\frac{\partial \sigma_j}
{\partial \delta_j}
(\boldsymbol{\delta}_t,\mathbf{p}_t)
|
\geq
\sum_{k\neq j}
|
\frac{\partial \sigma_k}
{\partial \delta_j}
(\boldsymbol{\delta}_t,\mathbf{p}_t)
|$, for all $j=1,\ldots,J$. Then:
\begin{equation}
[\mathbf{A}_t^p(\boldsymbol{\theta}_D)]_{jj}
\geq
[\mathbf{A}_t^p(\boldsymbol{\theta}_D)]_{jk},
\qquad j\neq k.
\label{eq:shape_inverse_share_dominance}
\tag{Share-Derivative Dominance}
\end{equation}

Third, suppose that demand slopes downward in own price: $\frac{\partial \sigma_j}
{\partial p_j}
(\boldsymbol{\delta}_t,\mathbf{p}_t,\mathbf{x}_t^{(2)})<0$ for all $j=1,\ldots,J$, that substitutability holds, and that the inverse share function of every product is
weakly increasing in each competitor's price, $\Gamma^p_{jk,t}(\boldsymbol{\theta}_D)\geq 0$
for all $j\neq k$. Then, the inverse share equation for product
$j$ is weakly increasing in its own price:
\[
[\mathbf{B}_t^p(\alpha,\boldsymbol{\theta}_D)]_{jj}=\frac{\partial \sigma_j^{-1}}
{\partial p_j}
(\mathbf{s}_t,\mathbf{p}_t,\mathbf{x}_t^{(2)})\geq0,\quad j=1,\ldots,J.\label{eq:shape_own_price_monotonicity}
\tag{Own-Price Monotonicity}
\]

At the observed market states, the three implementations respectively add
$TJ^2$, $TJ(J-1)$, and $TJ$ linear inequalities and require no auxiliary
variables. They are implications of the assumptions, rather than
a complete characterization of substitutability. See \citet*{dellacherie2020inverse} for necessary and sufficient
conditions.

\subsection{Global Restrictions}

Imposing the shape restrictions separately at each observed market state can generate a large constraint system and guarantees the restrictions only at the states at which they are imposed. I therefore use a coefficient-only sufficient condition that guarantees a uniformly positive symmetric part of the inverse-share Jacobian over a prespecified share and price domain. The condition is conservative, but its size does not depend on the number of observed markets. In particular, if $K_s$ sieve coefficients depend on shares, the restriction requires at most $K_s$ auxiliary variables and $2K_s+1$ linear inequalities.

\begin{proposition}[Sufficient global shape restriction]
Suppose $J\geq2$ and $0<\underline s\leq s_i\leq\overline s$ for all $i=1,\ldots,J$. Suppose also that $s_0=1-\sum_{i=1}^J s_i\geq\underline s_0>0$ and that $\underline p\leq p_i\leq\overline p$ for all inside products, where $\underline p$ and $\overline p$ are finite. Let the flexible component for product $i$ be $G_i=\sum_u\theta_u\varphi_u^i$, where $\varphi_u^i=\prod_{r=1}^{R_u}a_{\mathbf{k}_{ur}}^i$ and $a_{\mathbf{k}}^i=\frac{1}{J-1}\sum_{n\neq i}\chi_{\mathbf{k},in}$. The pool contains inside competitors only. Each $\chi_{\mathbf{k},in}$ is a product of Chebyshev polynomials evaluated at arguments in $[-1,1]$. Under the relative-difference share transformation used in the main text, its share argument is
\[
z_{in}=\tau\!\left(\log\frac{s_n}{s_i}\right)=\frac{r_{in}-1}{r_{in}+1}=\frac{s_n-s_i}{s_n+s_i},
\qquad r_{in}=\frac{s_n}{s_i},\qquad \tau(v)=\tanh(v/2).
\]
All other arguments, including transformed price differences and product characteristics, lie in $[-1,1]$ and do not depend on shares. Write the price argument as $w_{in}=\psi((p_n-p_i)/h_p)$, where $h_p>0$ is fixed and $\psi$ is either $\zeta(v)=v/\sqrt{1+v^2}$ or $\tanh$. Let $\overline z_p=\psi((\overline p-\underline p)/h_p)$ and define
\[
\ell_b^p=\max_{|z|\leq\overline z_p}|T_b(z)|,
\qquad
\lambda_{u,p}=\prod_{r=1}^{R_u}\ell_{k_{ur,p}}^p,
\]
where $k_{ur,p}$ is the price-coordinate degree. Thus $0\leq\lambda_{u,p}\leq1$. These bounds use price variation only through $\overline p-\underline p$.

Let $A=\log(\overline s/\underline s)$ and, for each share-coordinate degree $k\geq0$, define
\[
c_k(A)
=\max_{0\leq v\leq A}(1+e^v)\left|T_k'\!\left(\tau(v)\right)\right|\tau'(v)
=\max_{1\leq q\leq \overline s/\underline s}\frac{2q}{1+q}\left|T_k'\!\left(\frac{q-1}{q+1}\right)\right|,
\qquad c_0(A)=0.
\]
For each sieve term, define its degree-specific share weight by
\[
\kappa_{u,s}=\lambda_{u,p}\sum_{r=1}^{R_u}c_{k_{ur,s}}(A),
\]
where $k_{ur,s}$ is the degree of the share-coordinate polynomial in $\chi_{\mathbf{k}_{ur},in}$. The price factors weakly reduce the coefficient weights and leave the right-hand side of the restriction unchanged. Setting $\ell_b^p=1$ recovers the restriction without price bounds.

Suppose that, for some $\eta>0$ satisfying $\eta\overline s<1$, the coefficients satisfy
\begin{equation}\label{eq:GlobalShape}
\sum_u\kappa_{u,s}|\theta_u|\leq 1-\eta\overline s.
\end{equation}
Then, uniformly over the stated domain, the matrix $\mathbf A^p(\mathbf s)$ defined in \eqref{eq:shape_price_A} satisfies
\[
\operatorname{sym}\!\bigl(\mathbf A^p(\mathbf s)\bigr)
=\frac{\mathbf A^p(\mathbf s)+\mathbf A^p(\mathbf s)'}{2}
\succeq\eta\mathbf I_J.
\]
Consequently, $\mathbf A^p(\mathbf s)$ is nonsingular throughout the domain; if it is symmetric, then $\mathbf A^p(\mathbf s)\succeq\eta\mathbf I_J$ and is uniformly positive definite.
\end{proposition}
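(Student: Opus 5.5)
The plan is to write $\mathbf A^p(\mathbf s)=\operatorname{diag}(\mathbf s)^{-1}+s_0^{-1}\mathbf 1\mathbf 1'+\boldsymbol\Gamma(\mathbf s)$ with $\boldsymbol\Gamma=[\Gamma^{s,p}_{ij}]$. The first two terms are symmetric, and $\operatorname{diag}(\mathbf s)^{-1}\succeq \overline s^{-1}\mathbf I_J$ while $s_0^{-1}\mathbf 1\mathbf 1'\succeq0$. It therefore suffices to show $\operatorname{sym}(\boldsymbol\Gamma)\succeq-(\overline s^{-1}-\eta)\mathbf I_J$, so that $\operatorname{sym}(\mathbf A^p)\succeq(\overline s^{-1}-\overline s^{-1}+\eta)\mathbf I_J=\eta\mathbf I_J$. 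For the symmetric part I will use the Gershgorin-type bound $\lambda_{\min}(\operatorname{sym}\boldsymbol\Gamma)\geq-\|\operatorname{sym}\boldsymbol\Gamma\|_2\geq-\tfrac12(\|\boldsymbol\Gamma\|_1+\|\boldsymbol\Gamma\|_\infty)$. The task then reduces to bounding both the maximal absolute row sum and the maximal absolute column sum of $\boldsymbol\Gamma$ by $(1-\eta\overline s)/\overline s$. The hypothesis \eqref{eq:GlobalShape} is exactly designed so that these sums are at most $\overline s^{-1}\sum_u\kappa_{u,s}|\theta_u|$.

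Next I compute $\partial G_i/\partial s_k$ at fixed prices and characteristics; following the paper's convention I treat inside shares as the free variables. Only the share arguments $z_{in}=\tau(\log s_n-\log s_i)$ depend on shares. By the product rule over the $R_u$ pooled factors, each bounded by $1$ in absolute value after the price factors are replaced by their bounds $\ell^p$, we get
\[
\frac{\partial\varphi^i_u}{\partial s_k}=\sum_{r}\Bigl(\prod_{r'\neq r}a^i_{\mathbf k_{ur'}}\Bigr)\frac{1}{J-1}\sum_{n\neq i}\partial_z\chi_{\mathbf k_{ur},in}\,\tau'(v_{in})\Bigl(\frac{\mathbf 1\{k=n\}}{s_n}-\frac{\mathbf 1\{k=i\}}{s_i}\Bigr).
\]
The non-share Chebyshev factors are bounded by $\ell^p$ for the price coordinate, since $|p_n-p_i|\le\overline p-\underline p$ and $\psi$ is monotone, and by $1$ otherwise. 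This produces the factor $\lambda_{u,p}$.

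The key estimate is on $|T_k'(\tau(v))|\tau'(v)$ times the share denominators. The identity $c_k(A)=\max_v(1+e^v)|T_k'(\tau(v))|\tau'(v)$ over $|v|\le A$ (using symmetry in $v$, since $T_k'$ has definite parity and $\tau'$ is even) is used as follows. Summing the row $i$ contributions over $k$, the off-diagonal part gives $\frac1{J-1}\sum_{n}\frac{1}{s_n}|\cdot|$ and the diagonal part gives $\frac1{J-1}\sum_n\frac1{s_i}|\cdot|$. Their total is $\frac{1}{J-1}\sum_n\frac{1}{s_i}(1+e^{-v_{in}})\,|T'|\tau'$. I will need the version with $e^{v}$ or $e^{-v}$, and I will check that the max over $|v|\le A$ matches $c_k$ by the evenness just noted. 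Bounding $1/s_i\le$ appropriately is the delicate point. I plan to factor out $\overline s^{-1}$-type scale by noting that $(1+s_n/s_i)/s_n=(s_i+s_n)/(s_is_n)$. Whether this really yields $\overline s^{-1}c_k$ rather than $\underline s^{-1}c_k$ is the step I expect to be the main obstacle, and I would first try to absorb the ratio into the $q=s_n/s_i$ parametrization $2q/(1+q)$ given in the statement. Column sums are handled the same way by exchanging the roles of $i$ and $k$.

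Combining the pieces gives row and column sums of $|\boldsymbol\Gamma|$ bounded by $\overline s^{-1}\sum_u\kappa_{u,s}|\theta_u|\le\overline s^{-1}-\eta$, and the conclusion $\operatorname{sym}(\mathbf A^p)\succeq\eta\mathbf I_J$ follows. Nonsingularity then follows because $\mathbf x'\mathbf A^p\mathbf x\ge\eta\|\mathbf x\|^2>0$ for $\mathbf x\neq0$. If $\mathbf A^p$ is symmetric, it equals its symmetric part, so it is uniformly positive definite. Finally, I will record the linearization of \eqref{eq:GlobalShape}: for each of the $K_s$ share-dependent terms, introduce $t_u\ge\pm\theta_u$ and impose $\sum_u\kappa_{u,s}t_u\le1-\eta\overline s$. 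This gives the stated $2K_s+1$ rows.
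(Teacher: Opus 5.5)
\textbf{There is a genuine gap, at exactly the step you flagged.} Your decomposition bounds $\operatorname{diag}(\mathbf s)^{-1}$ below by $\overline s^{-1}\mathbf I_J$. It then needs the unweighted row and column sums of $|\boldsymbol\Gamma|$ to be at most $\overline s^{-1}L_s$, where $L_s=\sum_u\kappa_{u,s}|\theta_u|$. Your own derivative computation does not deliver this. In row $i$, the factor $1+e^{-v_{in}}$ absorbs the ratio $s_i/s_n$ into $c_k(A)$, but an absolute factor $1/s_i$ remains. What you actually get is $\sum_j|\Gamma_{ij}|\leq L_s/s_i$, and no reparametrization in $q=s_n/s_i$ removes the $1/s_i$.

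In fact the uniform bound, and with it your intermediate target $\operatorname{sym}(\boldsymbol\Gamma)\succeq-(\overline s^{-1}-\eta)\mathbf I_J$, is false under the hypothesis. Take $J=2$ and the single term $G_i=\theta z_{in}$ (share degree one, no price or characteristic factor). Then $\kappa_{u,s}=c_1(A)=2Q/(1+Q)$ with $Q=\overline s/\underline s$. Set $\theta=(1-\eta\overline s)(1+Q)/(2Q)$, which satisfies $\sum_u\kappa_{u,s}|\theta_u|\leq 1-\eta\overline s$ with equality. At $s_1=s_2=\underline s$ (admissible when $1-2\underline s\geq\underline s_0$), a direct computation gives $\boldsymbol\Gamma=\frac{\theta}{2\underline s}\bigl(\begin{smallmatrix}-1&1\\1&-1\end{smallmatrix}\bigr)$. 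Its smallest eigenvalue is $-\theta/\underline s=-(1-\eta\overline s)(1+Q)/(2\overline s)$, strictly below $-(\overline s^{-1}-\eta)$ whenever $\overline s>\underline s$. The proposition itself survives at this point: $\mathbf C=\operatorname{diag}(\mathbf s)^{-1}+\boldsymbol\Gamma$ has eigenvalues $1/\underline s$ and $(1-\theta)/\underline s\geq\eta$. What fails is the decoupling, which discards the fact that the large entries of row $i$ sit next to the equally large diagonal entry $1/s_i$.

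\textbf{The missing idea is to keep the diagonal and the perturbation together and use weighted sums.}
\begin{enumerate}
\item \emph{Rows.} Your row computation, done carefully, gives $s_i\sum_j|\Gamma_{ij}|\leq L_s$.
\item \emph{Columns.} This is not just the row case with roles exchanged. Column $j$ receives one term from each competitor equation and the full average from the own equation, all carrying the same factor $1/s_j$. With $d_k(v)=|T_k'(\tau(v))|\tau'(v)$, this gives $s_j\sum_i|\partial a^i_{\mathbf k}/\partial s_j|\leq 2\ell^p_{k_p}\max_{|v|\leq A}d_{k_s}(v)\leq \ell^p_{k_p}c_{k_s}(A)$. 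The last step uses evenness of $d_k$ and $1+e^v\geq 2$ for $v\geq0$. The product rule then gives $s_j\sum_i|\Gamma_{ij}|\leq L_s$.
\item \emph{Dominance.} Hence $C_{ii}-\sum_{j\neq i}|C_{ij}|\geq(1-L_s)/s_i\geq\eta\overline s/s_i\geq\eta$, and likewise for column $i$. Averaging the row and column inequalities gives diagonal dominance of $\operatorname{sym}(\mathbf C)$ with margin $\eta$, so $\operatorname{sym}(\mathbf C)\succeq\eta\mathbf I_J$. Adding $s_0^{-1}\mathbf 1\mathbf 1'\succeq0$ finishes the argument.
\end{enumerate}
Note that $s_i\leq\overline s$ is used only after $1/s_i$ has been matched with the diagonal; that is the only place $\overline s$ enters the restriction.

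If you prefer your spectral route, the correct version is the Schur test with weights $s_i^{-1/2}$. It gives $\|\operatorname{diag}(\mathbf s)^{1/2}\boldsymbol\Gamma\operatorname{diag}(\mathbf s)^{1/2}\|_2\leq L_s$, hence $\operatorname{sym}(\mathbf C)\succeq(1-L_s)\operatorname{diag}(\mathbf s)^{-1}\succeq\eta\mathbf I_J$.
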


\begin{proof}
Write $\Gamma_{ij}=\partial G_i/\partial s_j$ and $C=\operatorname{diag}(1/s_i)+\Gamma$, so that $\mathbf A^p=C+\mathbf1\mathbf1^\top/s_0$. All share derivatives hold prices and characteristics fixed. Chebyshev polynomials satisfy $|T_k(x)|\leq1$ for all $x\in[-1,1]$. Moreover, $d_k(v)=\left|T_k'\!\left(\tau(v)\right)\right|\tau'(v)$ is even because $\tau$ is odd, $\tau'(v)=(1-\tau(v)^2)/2>0$ is even, and $\left|T_k'\right|$ is even. With $q=e^v$, $(1+e^v)\tau'(v)=2q/(1+q)$, which gives the second expression for $c_k(A)$. The argument uses only that $\tau$ is odd, strictly increasing, continuously differentiable, and takes values in $[-1,1]$.

For an inside pair, let $v=\log(s_n/s_i)$. Since $|v|\leq A$,
\[
s_i\left(
\left|\frac{\partial T_k(\tau(v))}{\partial s_i}\right|
+\left|\frac{\partial T_k(\tau(v))}{\partial s_n}\right|
\right)
=(1+e^{-v})d_k(v)\leq c_k(A).
\]
Averaging over the $J-1$ inside competitors therefore yields
\[
s_i\sum_j\left|\frac{\partial a_{\mathbf{k}}^i}{\partial s_j}\right|
\leq \ell_{k_p}^p c_{k_s}(A).
\]
For a fixed derivative column, the own-product equation and the $J-1$ competitor equations each contribute at most one average of $d_{k_s}$. Hence
\[
s_j\sum_i\left|\frac{\partial a_{\mathbf{k}}^i}{\partial s_j}\right|
\leq 2\ell_{k_p}^p\max_{|v|\leq A}d_{k_s}(v)
\leq \ell_{k_p}^p c_{k_s}(A),
\]
where the last inequality follows from the evenness of $d_{k_s}$ and $1+e^v\geq2$ for $v\geq0$.

Because $|a_{\mathbf{k}}^i|\leq\ell_{k_p}^p$, the product rule retains every price factor and gives
\[
s_i\sum_j\left|\frac{\partial\varphi_u^i}{\partial s_j}\right|\leq\kappa_{u,s},
\]
with the same bound for derivative columns. Hence, writing $L_s=\sum_u\kappa_{u,s}|\theta_u|$, we obtain
\[
s_i\sum_j|\Gamma_{ij}|\leq L_s,
\qquad
s_j\sum_i|\Gamma_{ij}|\leq L_s.
\]
Restriction \eqref{eq:GlobalShape} therefore implies
\[
C_{ii}-\sum_{j\neq i}|C_{ij}|
\geq\frac{1-L_s}{s_i}
\geq\frac{\eta\overline s}{s_i}
\geq\eta.
\]
The column inequality follows identically.

Let $H=\operatorname{sym}(C)$. Averaging the row and column inequalities gives
\[
H_{ii}-\sum_{j\neq i}|H_{ij}|\geq\eta.
\]
Because $H$ is symmetric, diagonal dominance with margin $\eta$ implies $H\succeq\eta \mathbf I_J$. Finally, since $s_0^{-1}\mathbf1\mathbf1^\top$ is positive semidefinite,
\[
\operatorname{sym}(\mathbf A^p)=H+\frac{\mathbf1\mathbf1^\top}{s_0}\succeq\eta \mathbf I_J.
\]
Because $\eta>0$, $\mathbf A^p$ is nonsingular.
\end{proof}

\section{Additional Monte Carlo Results: $J=25$}
\label{app:mc_j25}

This appendix reports the results for $J=25$ from the three experiments in Section~\ref*{Section:Simul} of the main text. Tables~\ref{tab:mc_normal_j25}, \ref{tab:mc_snpd_j25}, and \ref{tab:mc_parametric_j25} report the results for the RCL model with normal coefficients, the RCL model with a lognormal price coefficient, and the parametric inverse market share model, respectively. Each experiment uses 100 Monte Carlo replications with $T\in\{100,500\}$ markets and follows the simulation and estimation procedures described in the main text.

\begin{table}[H]
\centering
\caption{RCL model with normal price and characteristic coefficients: $J=25$}
\label{tab:mc_normal_j25}
\begingroup
\fontsize{10.25}{10.5}\selectfont
\setlength{\tabcolsep}{0.8pt}
\renewcommand{\arraystretch}{0.92}
\begin{tabular*}{\textwidth}{
@{\extracolsep{\fill}}l c c
r@{\hspace{4pt}}r
r@{\hspace{4pt}}r
r@{\hspace{4pt}}r@{}}
\toprule
Statistic & $T$ & True
& \multicolumn{2}{c}{\makebox[0pt][c]{\shortstack{Correctly\\[-1pt] specified RCL}}}
& \multicolumn{2}{c}{SNPD}
& \multicolumn{2}{c}{Logit} \\
\cmidrule(lr){4-5}
\cmidrule(lr){6-7}
\cmidrule(lr){8-9}
& & & Mean & RMSE & Mean & RMSE & Mean & RMSE \\
\midrule

\multicolumn{9}{@{}l}{\textit{Price elasticities}} \\
\quad Own
& 100 & -2.484 & -2.490 & 0.054
& -2.482 & 0.054 & -2.285 & 0.213 \\
& 500 & -2.483 & -2.486 & 0.026
& -2.478 & 0.028 & -2.284 & 0.203 \\[0.5pt]

\quad Cross, all
& 100 & 0.037 & 0.037 & 0.001
& 0.039 & 0.010 & 0.017 & 0.020 \\
& 500 & 0.037 & 0.037 & 0.001
& 0.039 & 0.005 & 0.017 & 0.020 \\[0.5pt]

\quad Cross, five closest
& 100 & 0.054 & 0.054 & 0.002
& 0.050 & 0.016 & 0.017 & 0.037 \\
& 500 & 0.054 & 0.054 & 0.001
& 0.049 & 0.008 & 0.017 & 0.037 \\[0.5pt]

\quad Cross, others
& 100 & 0.032 & 0.032 & 0.001
& 0.037 & 0.010 & 0.017 & 0.015 \\
& 500 & 0.032 & 0.032 & 0.000
& 0.036 & 0.005 & 0.017 & 0.015 \\

\addlinespace[2pt]
\multicolumn{9}{@{}l}{\textit{Mean diversion ratios}} \\
\quad All
& 100 & 0.019 & 0.019 & 0.000
& 0.015 & 0.005 & 0.007 & 0.012 \\
& 500 & 0.019 & 0.019 & 0.000
& 0.015 & 0.004 & 0.007 & 0.012 \\[0.5pt]

\quad Five closest
& 100 & 0.040 & 0.039 & 0.001
& 0.032 & 0.011 & 0.012 & 0.028 \\
& 500 & 0.040 & 0.040 & 0.000
& 0.032 & 0.009 & 0.012 & 0.028 \\[0.5pt]

\quad Other products
& 100 & 0.013 & 0.013 & 0.000
& 0.011 & 0.004 & 0.006 & 0.008 \\
& 500 & 0.013 & 0.013 & 0.000
& 0.011 & 0.003 & 0.006 & 0.008 \\

\addlinespace[2pt]
\multicolumn{9}{@{}l}{\textit{Total diversion ratios}} \\
\quad Five closest
& 100 & 0.198 & 0.197 & 0.002
& 0.158 & 0.053 & 0.058 & 0.140 \\
& 500 & 0.198 & 0.198 & 0.001
& 0.158 & 0.043 & 0.058 & 0.140 \\[0.5pt]

\quad Other inside
& 100 & 0.251 & 0.251 & 0.002
& 0.204 & 0.078 & 0.109 & 0.142 \\
& 500 & 0.251 & 0.251 & 0.001
& 0.200 & 0.058 & 0.109 & 0.142 \\[0.5pt]

\quad Outside good
& 100 & 0.552 & 0.552 & 0.004
& 0.638 & 0.126 & 0.833 & 0.281 \\
& 500 & 0.552 & 0.552 & 0.002
& 0.643 & 0.099 & 0.833 & 0.281 \\

\addlinespace[2pt]
\multicolumn{9}{@{}l}{\textit{Price curvature}} \\
\quad Own
& 100 & 1.100 & 1.114 & 0.066
& 1.087 & 0.198 & 0.993 & 0.107 \\
& 500 & 1.100 & 1.106 & 0.028
& 1.093 & 0.077 & 0.993 & 0.107 \\

\addlinespace[2pt]
\multicolumn{9}{@{}l}{\textit{Markups}} \\
\quad Markup (\%)
& 100 & 42.19 & 41.98 & 1.022
& 41.74 & 1.369 & 44.05 & 2.380 \\
& 500 & 42.19 & 42.09 & 0.486
& 41.58 & 0.921 & 44.03 & 1.971 \\
\bottomrule
\end{tabular*}
\endgroup

\par\vspace{1pt}
\begin{minipage}{\textwidth}
\fontsize{8.5}{9.2}\selectfont
\textit{Notes:}
Results are based on 100 Monte Carlo replications with \(J=25\) products
per market. \(T\) denotes the number of markets. ``True'' and ``Mean'' are
averages across replications; RMSE is the square root of the average squared
difference between estimated and true statistics. Price elasticities and
own-price curvature are unweighted. Diversion ratios weight source products
by market share. Mean diversion ratios average across individual
destinations, whereas total diversion ratios sum within destination groups.
Markups are market-share-weighted price-cost margins expressed as a
percentage of price. The five closest products minimize Euclidean distance
in price-characteristic space after rescaling both variables to \([0,1]\)
within each market.
\end{minipage}
\end{table}

\begin{table}[H]
\centering
\caption{RCL model with lognormal price and normal characteristic coefficients: $J=25$}
\label{tab:mc_snpd_j25}
\begingroup
\fontsize{10.25}{10.5}\selectfont
\setlength{\tabcolsep}{0.4pt}
\renewcommand{\arraystretch}{0.92}
\begin{tabular*}{\textwidth}{
@{\extracolsep{\fill}}l c c
r@{\hspace{3pt}}r
r@{\hspace{3pt}}r
r@{\hspace{3pt}}r
r@{\hspace{3pt}}r@{}}
\toprule
Statistic & $T$ & True
& \multicolumn{2}{c}{\makebox[0pt][c]{\shortstack{Correctly\\[-1pt] specified RCL}}}
& \multicolumn{2}{c}{\makebox[0pt][c]{\shortstack{Misspecified\\[-1pt] RCL}}}
& \multicolumn{2}{c}{SNPD}
& \multicolumn{2}{c}{Logit} \\
\cmidrule(lr){4-5}
\cmidrule(lr){6-7}
\cmidrule(lr){8-9}
\cmidrule(lr){10-11}
& & & Mean & RMSE & Mean & RMSE & Mean & RMSE & Mean & RMSE \\
\midrule

\multicolumn{11}{@{}l}{\textit{Price elasticities}} \\
\quad Own
& 100 & -2.188 & -2.194 & 0.042 & -2.138 & 0.069
& -2.181 & 0.045 & -2.129 & 0.092 \\
& 500 & -2.188 & -2.191 & 0.021 & -2.139 & 0.055
& -2.176 & 0.026 & -2.128 & 0.069 \\[0.5pt]

\quad Cross, all
& 100 & 0.040 & 0.040 & 0.002 & 0.029 & 0.010
& 0.040 & 0.008 & 0.011 & 0.028 \\
& 500 & 0.040 & 0.040 & 0.001 & 0.029 & 0.010
& 0.039 & 0.003 & 0.011 & 0.028 \\[0.5pt]

\quad Cross, five closest
& 100 & 0.056 & 0.057 & 0.002 & 0.046 & 0.010
& 0.053 & 0.013 & 0.011 & 0.045 \\
& 500 & 0.056 & 0.056 & 0.001 & 0.046 & 0.010
& 0.053 & 0.006 & 0.011 & 0.045 \\[0.5pt]

\quad Cross, others
& 100 & 0.035 & 0.035 & 0.001 & 0.025 & 0.010
& 0.036 & 0.007 & 0.011 & 0.024 \\
& 500 & 0.035 & 0.035 & 0.001 & 0.025 & 0.010
& 0.036 & 0.003 & 0.011 & 0.024 \\

\addlinespace[2pt]
\multicolumn{11}{@{}l}{\textit{Mean diversion ratios}} \\
\quad All
& 100 & 0.021 & 0.021 & 0.001 & 0.017 & 0.004
& 0.019 & 0.004 & 0.005 & 0.016 \\
& 500 & 0.021 & 0.021 & 0.000 & 0.017 & 0.004
& 0.019 & 0.002 & 0.005 & 0.016 \\[0.5pt]

\quad Five closest
& 100 & 0.040 & 0.040 & 0.001 & 0.033 & 0.007
& 0.037 & 0.006 & 0.008 & 0.032 \\
& 500 & 0.040 & 0.040 & 0.001 & 0.033 & 0.007
& 0.037 & 0.004 & 0.008 & 0.032 \\[0.5pt]

\quad Other products
& 100 & 0.016 & 0.016 & 0.001 & 0.012 & 0.004
& 0.015 & 0.004 & 0.004 & 0.011 \\
& 500 & 0.016 & 0.016 & 0.000 & 0.012 & 0.004
& 0.014 & 0.002 & 0.004 & 0.011 \\

\addlinespace[2pt]
\multicolumn{11}{@{}l}{\textit{Total diversion ratios}} \\
\quad Five closest
& 100 & 0.199 & 0.201 & 0.007 & 0.166 & 0.034
& 0.186 & 0.031 & 0.039 & 0.160 \\
& 500 & 0.199 & 0.200 & 0.004 & 0.166 & 0.034
& 0.185 & 0.018 & 0.039 & 0.160 \\[0.5pt]

\quad Other inside
& 100 & 0.301 & 0.303 & 0.010 & 0.234 & 0.068
& 0.276 & 0.068 & 0.084 & 0.217 \\
& 500 & 0.301 & 0.302 & 0.006 & 0.234 & 0.067
& 0.274 & 0.039 & 0.084 & 0.217 \\[0.5pt]

\quad Outside good
& 100 & 0.499 & 0.496 & 0.017 & 0.600 & 0.102
& 0.538 & 0.093 & 0.876 & 0.377 \\
& 500 & 0.499 & 0.498 & 0.009 & 0.600 & 0.101
& 0.541 & 0.055 & 0.876 & 0.377 \\

\addlinespace[2pt]
\multicolumn{11}{@{}l}{\textit{Price curvature}} \\
\quad Own
& 100 & 1.330 & 1.352 & 0.128 & 2.263 & 1.549
& 1.336 & 0.187 & 0.995 & 0.335 \\
& 500 & 1.330 & 1.330 & 0.011 & 2.095 & 0.850
& 1.336 & 0.074 & 0.995 & 0.335 \\

\addlinespace[2pt]
\multicolumn{11}{@{}l}{\textit{Markups}} \\
\quad Markup (\%)
& 100 & 46.80 & 46.71 & 0.944 & 44.13 & 3.620
& 48.05 & 1.850 & 49.27 & 2.921 \\
& 500 & 46.80 & 46.77 & 0.417 & 43.76 & 3.072
& 47.92 & 1.327 & 49.26 & 2.573 \\
\bottomrule
\end{tabular*}
\endgroup

\par\vspace{1pt}
\begin{minipage}{\textwidth}
\fontsize{8.5}{9.2}\selectfont
\textit{Notes:}
Results are based on 100 Monte Carlo replications with \(J=25\) products
per market. \(T\) denotes the number of markets. ``True'' and ``Mean'' are
averages across replications; RMSE is the square root of the average squared
difference between estimated and true statistics. Price elasticities and
own-price curvature are unweighted. Diversion ratios weight source products
by market share. Mean diversion ratios average across individual
destinations, whereas total diversion ratios sum within destination groups.
Markups are market-share-weighted price-cost margins expressed as a
percentage of price. The five closest products minimize Euclidean distance
in price-characteristic space after rescaling both variables to \([0,1]\)
within each market. For the misspecified RCL, price-curvature statistics use
95 replications after excluding five datasets with extreme curvature values.
\end{minipage}
\end{table}

\begin{table}[H]
\centering
\caption{Parametric inverse market share model: $J=25$}
\label{tab:mc_parametric_j25}
\begingroup
\fontsize{10.25}{10.5}\selectfont
\setlength{\tabcolsep}{0.8pt}
\renewcommand{\arraystretch}{0.92}
\begin{tabular*}{\textwidth}{
@{\extracolsep{\fill}}l c c
r@{\hspace{4pt}}r
r@{\hspace{4pt}}r
r@{\hspace{4pt}}r@{}}
\toprule
Statistic & $T$ & True
& \multicolumn{2}{c}{\makebox[0pt][c]{\shortstack{Misspecified\\[-1pt] RCL}}}
& \multicolumn{2}{c}{SNPD}
& \multicolumn{2}{c}{Logit} \\
\cmidrule(lr){4-5}
\cmidrule(lr){6-7}
\cmidrule(lr){8-9}
& & & Mean & RMSE & Mean & RMSE & Mean & RMSE \\
\midrule

\multicolumn{9}{@{}l}{\textit{Price elasticities}} \\
\quad Own
& 100 & -4.293 & -4.298 & 0.064
& -4.295 & 0.064 & -4.307 & 0.065 \\
& 500 & -4.293 & -4.300 & 0.031
& -4.297 & 0.031 & -4.310 & 0.034 \\[0.5pt]

\quad Cross, all
& 100 & 0.063 & 0.053 & 0.011
& 0.062 & 0.008 & 0.034 & 0.029 \\
& 500 & 0.063 & 0.053 & 0.010
& 0.063 & 0.003 & 0.034 & 0.029 \\[0.5pt]

\quad Cross, five closest
& 100 & 0.067 & 0.054 & 0.013
& 0.071 & 0.013 & 0.034 & 0.033 \\
& 500 & 0.067 & 0.054 & 0.013
& 0.072 & 0.007 & 0.034 & 0.033 \\[0.5pt]

\quad Cross, others
& 100 & 0.062 & 0.053 & 0.010
& 0.060 & 0.007 & 0.034 & 0.028 \\
& 500 & 0.062 & 0.053 & 0.009
& 0.061 & 0.003 & 0.034 & 0.028 \\

\addlinespace[2pt]
\multicolumn{9}{@{}l}{\textit{Mean diversion ratios}} \\
\quad All
& 100 & 0.014 & 0.012 & 0.002
& 0.014 & 0.002 & 0.008 & 0.006 \\
& 500 & 0.014 & 0.012 & 0.001
& 0.014 & 0.001 & 0.008 & 0.006 \\[0.5pt]

\quad Five closest
& 100 & 0.016 & 0.014 & 0.002
& 0.017 & 0.003 & 0.009 & 0.007 \\
& 500 & 0.016 & 0.014 & 0.002
& 0.017 & 0.002 & 0.009 & 0.007 \\[0.5pt]

\quad Other products
& 100 & 0.013 & 0.012 & 0.002
& 0.013 & 0.002 & 0.008 & 0.005 \\
& 500 & 0.013 & 0.012 & 0.001
& 0.013 & 0.001 & 0.008 & 0.005 \\

\addlinespace[2pt]
\multicolumn{9}{@{}l}{\textit{Total diversion ratios}} \\
\quad Five closest
& 100 & 0.081 & 0.071 & 0.011
& 0.085 & 0.015 & 0.045 & 0.036 \\
& 500 & 0.081 & 0.071 & 0.010
& 0.087 & 0.009 & 0.045 & 0.036 \\[0.5pt]

\quad Other inside
& 100 & 0.245 & 0.222 & 0.029
& 0.239 & 0.032 & 0.144 & 0.101 \\
& 500 & 0.246 & 0.222 & 0.024
& 0.243 & 0.013 & 0.144 & 0.101 \\[0.5pt]

\quad Outside good
& 100 & 0.674 & 0.707 & 0.040
& 0.676 & 0.045 & 0.811 & 0.137 \\
& 500 & 0.674 & 0.707 & 0.034
& 0.670 & 0.018 & 0.811 & 0.137 \\

\addlinespace[2pt]
\multicolumn{9}{@{}l}{\textit{Price curvature}} \\
\quad Own
& 100 & 0.816 & 1.061 & 0.245
& 0.945 & 0.140 & 0.992 & 0.176 \\
& 500 & 0.816 & 1.061 & 0.245
& 0.946 & 0.132 & 0.992 & 0.176 \\

\addlinespace[2pt]
\multicolumn{9}{@{}l}{\textit{Markups}} \\
\quad Markup (\%)
& 100 & 25.87 & 24.79 & 1.139
& 25.85 & 0.515 & 24.48 & 1.429 \\
& 500 & 25.87 & 24.77 & 1.111
& 25.84 & 0.243 & 24.47 & 1.409 \\
\bottomrule
\end{tabular*}
\endgroup

\par\vspace{1pt}
\begin{minipage}{\textwidth}
\fontsize{8.5}{9.2}\selectfont
\textit{Notes:}
Results are based on 100 Monte Carlo replications with \(J=25\) products
per market. \(T\) denotes the number of markets. ``True'' and ``Mean'' are
averages across replications; RMSE is the square root of the average squared
difference between estimated and true statistics. Price elasticities and
own-price curvature are unweighted. Diversion ratios weight source products
by market share. Mean diversion ratios average across individual
destinations, whereas total diversion ratios sum within destination groups.
Markups are market-share-weighted price-cost margins expressed as a
percentage of price. The five closest products minimize Euclidean distance
in price-characteristic space after rescaling both variables to \([0,1]\)
within each market.
\end{minipage}
\end{table}

\section{Details on the Empirical Application \label{Appendix:Appli}}

\subsection{Demand Models}
The benchmark is the random coefficient nested logit (RCNL) model of
\citet*{miller2017understanding}. Let $y_{it}$ denote consumer $i$'s income in market $t$,
let $x^c_{jt}$ denote calories of product $j$ in market $t$, and let $d(t)$ denote the date associated with
market $t$. The deterministic component of consumer $i$'s utility from
product $j$ in market $t$ is
\[
v_{ijt}
=
\xi_j+\xi_{d(t)}-\alpha p_{jt}
+y_{it}\left(\pi_0+\pi_p p_{jt}+\pi_c x^c_{jt}\right)
+\xi_{jt},
\]
where $\xi_j$ and $\xi_{d(t)}$ are product and date fixed effects,
respectively, and $\xi_{jt}$ is the product-market demand shock. The empirical income distribution in each market generates heterogeneity in the valuation of the inside good, price, and calories. All inside products form a single nest and the outside good forms a separate singleton nest, with $\mu \in [0,1)$ denoting the corresponding nesting parameter

For the $N_t$ income draws with weights $w_{it}$ satisfying
$\sum_{i=1}^{N_t}w_{it}=1$, the aggregate inside good shares are
\[
s_{jt}
=
\sum_{i=1}^{N_t}w_{it}
\frac{
\exp\left(v_{ijt}/(1-\mu)\right)A_{it}^{-\mu}
}{
1+A_{it}^{\,1-\mu}
},
\qquad j\in\mathcal J_0,
\]
with $s_{0t}=1-\sum_{j\in\mathcal J_0}
s_{jt}$ and $A_{it}
=
\sum_{k\in\mathcal J_0}
\exp\left(\frac{v_{ikt}}{1-\mu}\right)$.

The SNPD specification is
\begin{equation}\label{eq:SNPD_MW}
\ln\left(\frac{
s_{jt}
}{
s_{0t}
}\right)
+
G_D\left(\widetilde{\mathbf V}_{jt}\right)
=
\xi_j+\xi_{d(t)}-\alpha p_{jt}
+\gamma_y\bar y_t
+\gamma_{yp}\bar y_t p_{jt}
+\gamma_{yc}\bar y_t x^c_{jt}
+\xi_{jt},
\end{equation}
where $G_D$ is the degree-$D$ permutation-invariant Chebyshev sieve, $\bar y_t$ is mean income in market $t$, and the vector $\widetilde{\mathbf V}_{jt}
=
(
\widetilde r_{jk,t},
\widetilde d^p_{jk,t},
\widetilde d^c_{jk,t}
)_{k\in\mathcal J_0\setminus\{j\}}$ collects the transformed share ratios, price differences,
and characteristic differences relative to product $j$. I approximate $G$ with $G_D$ and select the degree $D \in \{3,4,5,6\}$ by cross-validation. The selected degree is $D=3$. The cubic approximation uses only 53 nonconstant sieve terms. 

\subsection{RCNL Estimation}

I estimate the RCNL model of \citet*{miller2017understanding} with the same instruments using the PyBLP package developed by
\citet*{conlon2018}, following their recommended practices. I use \citet*{miller2017understanding}'s income draws. I solve the market share inversion problem using the SQUAREM accelerated fixed-point algorithm with a tight convergence tolerance of $10^{-12}$. I minimize the GMM objective using Knitro 13.1's Interior/Direct algorithm. I use two-step GMM with the original 12 instruments described in the main text. The weighting matrix and standard errors are clustered by market (region-month), which yields 2{,}808 clusters rather than the 39 clusters obtained by clustering by region as in \citet*{miller2017understanding}. Market-level clustering does not allow for correlation of demand shocks across months within a region.

\subsection{SNPD Estimation}

\paragraph{Constrained GMM.} I estimate the SNPD model using the specification in \eqref{eq:SNPD_MW}. The transformed variables
$\widetilde r_{jk,t}$, $\widetilde d^p_{jk,t}$, and
$\widetilde d^c_{jk,t}$ are the share ratio, price difference, and
calorie difference variables defined in the main text. The symmetric
Chebyshev sieve $G_D$ pools only inside competitors. Income is excluded from
$G_D$ and enters the linear index through its level and its interactions
with price and calories. I set $D=3$ in the empirical specification.

Before estimation, I express price per unit by dividing the price of a
144-ounce equivalent unit by 12, rescale $x^c_{jt}$ to $[0,1]$, and divide
mean income by 10. I estimate the model by constrained two-step GMM. The
instrument matrix is constructed from the excluded variables described in
the main text, their symmetric sieve transformations, and the included
exogenous regressors. I replace this matrix by an orthonormal basis for its
column space using a QR decomposition, which improves conditioning without
changing the instrument span. I construct the second-step matrix from market-clustered
moments. Because the estimated covariance matrix of the clustered moments is ill-conditioned, I regularize it using the linear Ledoit-Wolf shrinkage estimator before inversion \citep{ledoit2004well}.

I also impose the following shape restrictions. I require the share derivative matrix to be symmetric at the observed market states and impose coefficient bounds that ensure its symmetric part is positive definite over a prespecified domain of shares and prices. I also restrict the effective linear price coefficient to be negative over the observed income range and require the diagonal entries of $B_t$ to be nonnegative at the observed market states.
These restrictions are linear in the parameters and auxiliary variables, preserving the  quadratic programming formulation.

\paragraph{Selection of the sieve degree.} Sieve estimation requires selecting the polynomial degree \(D\). With endogenous regressors, standard cross-validation based on mean-squared prediction error is inappropriate because it yields a biased estimate of the structural mean-squared error \citep*{chen2025adaptive}. I instead select $D$ by minimizing a cross-validated GMM criterion, building on \citet*{rivers2002model} and \citet*{komiyama2018cross}, over \(D \in \{2,3,4,5\}\). I implement two-fold cross-validation at the market level by partitioning the markets into two nonoverlapping, approximately equal subsets, or folds. For each candidate \(D\), I estimate the model on one fold and evaluate the GMM objective on the other, then reverse their roles. I select the degree that minimizes the GMM objective averaged across the two validation folds.

\paragraph{Counterfactuals and welfare.} Counterfactual prices and market shares can be obtained by solving the corresponding demand-and-supply equilibrium, allowing the analysis of mergers, product exit, and cost or policy interventions such as taxes and tariffs. When prices do not enter $G$ and the shape restrictions hold globally, the demand model admits a representative consumer formulation and a consumer surplus function, which can be used to compute welfare changes \citep*{fosgerau2018demand}. When the estimated demand system does not admit a global consumer surplus function, a path-specific measure of welfare changes under simultaneous price movements can instead be constructed directly from demand \citep*{chipman1980compensating,auerbach1985theory,bhattacharya2018empirical,monardo2025product}.

\subsection{SNPD Inference} 

\paragraph{Numerical Bootstrap.} Binding inequality restrictions under the SNPD approach can generate
nonnormal limiting distributions, invalidating standard normal-based inference.
I therefore conduct pointwise-valid inference using the numerical bootstrap of
\citet*{hong2020numerical}, which accommodates the resulting nonregular behavior.

Inference is conditional on the selected sieve dimension \(D\), instrument set, and constraint set, with markets as the independent sampling units, and treats the degree-$D$ specification as correctly specified, so that the population moment vanishes at the parameter value being estimated. Let
\[
\widehat g_T(\boldsymbol{\theta})
=
\frac{1}{T}
\sum_{t=1}^{T}
\left[
\frac{1}{J}
\sum_{j=1}^{J}
\xi_{jt}(\boldsymbol{\theta})z_{jt}
\right]
\]
denote the sample moment vector. I use the step size $\epsilon_T=T^{-1/3}$.

For each bootstrap replication \(b\), I draw independent 
multipliers \(\{\nu_t^b\}_{t=1}^{T}\) from a standard normal distribution and define $\bar\nu^b=\frac{1}{T}\sum_{t=1}^{T}\nu_t^b$. 
The centered multiplier process is
\[
\widehat g_T^{*,b}(\boldsymbol{\theta})
=
\frac{1}{T}
\sum_{t=1}^{T}
(\nu_t^b-\bar\nu^b)
\left[
\frac{1}{J}
\sum_{j=1}^{J}
\xi_{jt}(\boldsymbol{\theta})z_{jt}
\right].
\]
The perturbed moment vector is
\[
\widehat g_T^{\epsilon,b}(\boldsymbol{\theta})
=
\widehat g_T(\boldsymbol{\theta})
+
\epsilon_T\sqrt{T}\,
\widehat g_T^{*,b}(\boldsymbol{\theta}).
\]

I estimate the final GMM weighting matrix \(\widehat W_T\) from the
original sample and hold it fixed across bootstrap replications. For each replication \(b\), the numerical-bootstrap estimate
\(\widehat{\boldsymbol{\theta}}^{*,b}\) solves
\[
\widehat{\boldsymbol{\theta}}^{*,b}
=
\arg\min_{\boldsymbol{\theta}\in\Theta_D}
\widehat g_T^{\epsilon,b}(\boldsymbol{\theta})'
\widehat W_T
\widehat g_T^{\epsilon,b}(\boldsymbol{\theta}),
\]
where \(\Theta_D\) is the same constrained sieve parameter space used for the
original estimator. 

For a smooth scalar object \(f(\boldsymbol{\theta})\), define the scaled
numerical-bootstrap statistic
\[
R_b(f)
=
\epsilon_T^{-1}
\left\{
f(\widehat{\boldsymbol{\theta}}^{*,b})
-
f(\widehat{\boldsymbol{\theta}})
\right\}.
\]
The conditional distribution of \(R_b(f)\) approximates the distribution of
\[
\sqrt{T}
\left\{
f(\widehat{\boldsymbol{\theta}})
-
f(\boldsymbol{\theta}_0)
\right\}.
\]

I estimate the standard error of
\(f(\widehat{\boldsymbol{\theta}})\) by
$\operatorname{sd}_b\{R_b(f)\}/\sqrt{T}$. Because the limiting distribution
may be nonnormal, this standard deviation provides only a heuristic guide.

Let \(q_\tau\) denote the \(\tau\)-quantile of
\(\{R_b(f)\}_{b=1}^{B}\). The corresponding \(95\%\) confidence
interval is
\[
\left[
f(\widehat{\boldsymbol{\theta}})
-
\frac{q_{0.975}}{\sqrt{T}},
\quad
f(\widehat{\boldsymbol{\theta}})
-
\frac{q_{0.025}}{\sqrt{T}}
\right].
\]

\paragraph{Local QP Approximation.} I also implement a local QP approximation to
\citet*{hong2020numerical}'s numerical bootstrap. The procedure estimates the
original model once and replaces each bootstrap re-estimation with a smaller
local quadratic program.

For fixed sieve dimension \(D\), let $\Theta_D
=
\left\{
\boldsymbol{\theta}:
R\boldsymbol{\theta}=r,\;
A\boldsymbol{\theta}\leq b
\right\}$ denote the constrained parameter space, and let \(\widehat G_T\) denote the
sample-moment Jacobian evaluated at \(\widehat{\boldsymbol{\theta}}\). The final
GMM weighting matrix \(\widehat W_T\) is estimated once and held fixed during
inference.

Using the same centered market multipliers as above, define
\[
\zeta_T^b
=
\frac{1}{\sqrt{T}}
\sum_{t=1}^{T}
(\nu_t^b-\bar\nu^b)
\left\{
m_t(\widehat{\boldsymbol{\theta}})
-
\widehat g_T(\widehat{\boldsymbol{\theta}})
\right\},
\qquad
m_t(\boldsymbol{\theta})
=
\frac{1}{J}
\sum_{j=1}^{J}
\xi_{jt}(\boldsymbol{\theta})z_{jt}.
\]
Following \citet*{hong2020numerical}, I approximate the
unknown tangent cone by
\[
\widehat{\mathcal C}_{T,\epsilon}
=
\epsilon_T^{-1}
\left(
\Theta_D-\widehat{\boldsymbol{\theta}}
\right)
=
\left\{
h:
Rh=0,\quad
Ah\leq
\frac{b-A\widehat{\boldsymbol{\theta}}}{\epsilon_T}
\right\},
\qquad
\epsilon_T=T^{-1/3}.
\]
Binding and nearly binding constraints remain relevant locally, while
constraints with large slack become asymptotically irrelevant.

Each multiplier draw $b$ solves the convex quadratic program
\[
h^b
=
\arg\min_{h\in\widehat{\mathcal C}_{T,\epsilon}}
\left(
\zeta_T^b+\widehat G_Th
\right)'
\widehat W_T
\left(
\zeta_T^b+\widehat G_Th
\right).
\]

Under regular conditions, Theorem 4.3 of
\citet*{hong2020numerical} implies that the conditional distribution of \(h^b\)
consistently approximates the pointwise distribution of
$\sqrt{T}
\left(
\widehat{\boldsymbol{\theta}}
-
\boldsymbol{\theta}_0
\right)$. For a smooth scalar object \(f(\boldsymbol{\theta})\), I use
\[
R_b^{\mathrm{HL-QP}}(f)
=
\nabla_{\boldsymbol{\theta}} f(\widehat{\boldsymbol{\theta}})'h^b.
\]
Standard errors and confidence intervals are constructed exactly as in the
numerical bootstrap above, replacing \(R_b(f)\) with
\(R_b^{\mathrm{HL-QP}}(f)\). 

The local QP procedure allows the set of binding inequalities to vary across draws without repeatedly estimating the original model, making it substantially cheaper than the direct numerical bootstrap. Its validity is pointwise and conditional on the selected sieve dimension, instruments, and constraint set, while its finite-sample accuracy depends on the local quadratic approximation. I therefore use the direct numerical bootstrap as the preferred pointwise inference procedure and rely on the local QP approximation when direct implementation becomes computationally prohibitive.

\end{bibunit}


\begin{thebibliography}{71}
\newcommand{\enquote}[1]{``#1''}
\expandafter\ifx\csname natexlab\endcsname\relax\def\natexlab#1{#1}\fi

\bibitem[\protect\citeauthoryear{Ai and Chen}{Ai and
  Chen}{2003}]{ai2003efficient}
\textsc{Ai, C. and X.~Chen} (2003): \enquote{Efficient estimation of models
  with conditional moment restrictions containing unknown functions,}
  \emph{Econometrica}, 71, 1795--1843.

\bibitem[\protect\citeauthoryear{Al{\'e}-Chilet, Chen, Li, and
  Reynaert}{Al{\'e}-Chilet et~al.}{2026}]{ale2026colluding}
\textsc{Al{\'e}-Chilet, J., C.~Chen, J.~Li, and M.~Reynaert} (2026):
  \enquote{Colluding against environmental regulation,} \emph{Review of
  Economic Studies}, 93, 35--71.

\bibitem[\protect\citeauthoryear{Anderson, de~Palma, and Thisse}{Anderson
  et~al.}{1992}]{anderson1992discrete}
\textsc{Anderson, S.~P., A.~de~Palma, and J.~F. Thisse} (1992): \emph{Discrete
  choice theory of product differentiation}, Cambridge, MA: MIT press.

\bibitem[\protect\citeauthoryear{Bachmayr, Dusson, Ortner, and Thomas}{Bachmayr
  et~al.}{2024}]{bachmayr2024polynomial}
\textsc{Bachmayr, M., G.~Dusson, C.~Ortner, and J.~Thomas} (2024):
  \enquote{Polynomial approximation of symmetric functions,} \emph{Mathematics
  of Computation}, 93, 811--839.

\bibitem[\protect\citeauthoryear{Backus, Conlon, and Sinkinson}{Backus
  et~al.}{2021}]{backus2021common}
\textsc{Backus, M., C.~Conlon, and M.~Sinkinson} (2021): \enquote{Common
  ownership and competition in the ready-to-eat cereal industry,} .

\bibitem[\protect\citeauthoryear{Bakhitov}{Bakhitov}{2025}]{Bakhitov2025}
\textsc{Bakhitov, E.} (2025): \enquote{On machine learning instrumental
  variable estimators,} \emph{Economics Letters}, 256, 112601.

\bibitem[\protect\citeauthoryear{Barnett and Serletis}{Barnett and
  Serletis}{2008}]{barnettserletis2008}
\textsc{Barnett, W.~A. and A.~Serletis} (2008): \enquote{Consumer preferences
  and demand systems,} \emph{Journal of Econometrics}, 147, 210--224.

\bibitem[\protect\citeauthoryear{Berry and Haile}{Berry and
  Haile}{2014}]{berry2014identification}
\textsc{Berry, S. and P.~Haile} (2014): \enquote{Identification in
  differentiated products markets using market level data,}
  \emph{Econometrica}, 82, 1749--1797.

\bibitem[\protect\citeauthoryear{Berry and Haile}{Berry and
  Haile}{2016}]{berryhaile2016}
---\hspace{-.1pt}---\hspace{-.1pt}--- (2016): \enquote{Identification in
  differentiated products markets,} \emph{Annual review of Economics}, 8,
  27--52.

\bibitem[\protect\citeauthoryear{Berry, Levinsohn, and Pakes}{Berry
  et~al.}{1995}]{BLP1995}
\textsc{Berry, S., J.~Levinsohn, and A.~Pakes} (1995): \enquote{Automobile
  prices in market equilibrium,} \emph{Econometrica}, 63, 841--890.

\bibitem[\protect\citeauthoryear{Berry, Levinsohn, and Pakes}{Berry
  et~al.}{2004}]{berry2004differentiated}
---\hspace{-.1pt}---\hspace{-.1pt}--- (2004): \enquote{Differentiated products
  demand systems from a combination of micro and macro data: The new car
  market,} \emph{Journal of political Economy}, 112, 68--105.

\bibitem[\protect\citeauthoryear{Berry}{Berry}{1994}]{berry1994estimating}
\textsc{Berry, S.~T.} (1994): \enquote{Estimating discrete-choice models of
  product differentiation,} \emph{The RAND Journal of Economics}, 25, 242--262.

\bibitem[\protect\citeauthoryear{Berry and Haile}{Berry and
  Haile}{2021}]{berryhaile2021}
\textsc{Berry, S.~T. and P.~A. Haile} (2021): \enquote{Foundations of demand
  estimation,} in \emph{Handbook of industrial organization}, Elsevier, vol.~4,
  1--62.

\bibitem[\protect\citeauthoryear{Birchall, Mohapatra, and Verboven}{Birchall
  et~al.}{2024}]{birchall2024estimating}
\textsc{Birchall, C., D.~Mohapatra, and F.~Verboven} (2024):
  \enquote{Estimating substitution patterns and demand curvature in
  discrete-choice models of product differentiation,} \emph{Review of Economics
  and Statistics}, 1--40.

\bibitem[\protect\citeauthoryear{Bj\"ornerstedt and Verboven}{Bj\"ornerstedt
  and Verboven}{2016}]{BjornerstedtVerboven2016}
\textsc{Bj\"ornerstedt, J. and F.~Verboven} (2016): \enquote{Does merger
  simulation work? Evidence from the Swedish analgesics market,} \emph{American
  Economic Journal: Applied Economics}, 8, 125--64.

\bibitem[\protect\citeauthoryear{Blundell, Horowitz, and Parey}{Blundell
  et~al.}{2012}]{blundell2012measuring}
\textsc{Blundell, R., J.~L. Horowitz, and M.~Parey} (2012): \enquote{Measuring
  the price responsiveness of gasoline demand: Economic shape restrictions and
  nonparametric demand estimation,} \emph{Quantitative Economics}, 3, 29--51.

\bibitem[\protect\citeauthoryear{Borusyak, Chen, Hull, and Lei}{Borusyak
  et~al.}{2026}]{borusyak2026nonparametric}
\textsc{Borusyak, K., J.~Chen, P.~Hull, and L.~Lei} (2026):
  \enquote{Nonparametric identification of demand without exogenous product
  characteristics,} .

\bibitem[\protect\citeauthoryear{Brand and Smith}{Brand and
  Smith}{2025}]{brand2025quasi}
\textsc{Brand, J. and A.~N. Smith} (2025): \enquote{A Quasi-Bayes Approach to
  Nonparametric Demand Estimation with Economic Constraints,} \emph{Available
  at SSRN 5100826}.

\bibitem[\protect\citeauthoryear{Bronnenberg, Kruger, and Mela}{Bronnenberg
  et~al.}{2008}]{bronnenberg2008database}
\textsc{Bronnenberg, B.~J., M.~W. Kruger, and C.~F. Mela} (2008):
  \enquote{Database paper - The IRI marketing data set,} \emph{Marketing
  science}, 27, 745--748.

\bibitem[\protect\citeauthoryear{Chen}{Chen}{2007}]{Chen2007}
\textsc{Chen, X.} (2007): \enquote{Large Sample Sieve Estimation of
  Semi-Nonparametric Models,} in \emph{Handbook of Econometrics}, ed. by J.~J.
  Heckman and E.~E. Leamer, Elsevier, vol.~6B.

\bibitem[\protect\citeauthoryear{Chen and Pouzo}{Chen and
  Pouzo}{2012}]{chen2012estimation}
\textsc{Chen, X. and D.~Pouzo} (2012): \enquote{Estimation of nonparametric
  conditional moment models with possibly nonsmooth generalized residuals,}
  \emph{Econometrica}, 80, 277--321.

\bibitem[\protect\citeauthoryear{Chetverikov, Santos, and Shaikh}{Chetverikov
  et~al.}{2018}]{chetverikov2018}
\textsc{Chetverikov, D., A.~Santos, and A.~M. Shaikh} (2018): \enquote{The
  econometrics of shape restrictions,} \emph{Annual Review of Economics}, 10,
  31--63.

\bibitem[\protect\citeauthoryear{Chetverikov and Wilhelm}{Chetverikov and
  Wilhelm}{2017}]{chetverikov2017nonparametric}
\textsc{Chetverikov, D. and D.~Wilhelm} (2017): \enquote{Nonparametric
  instrumental variable estimation under monotonicity,} \emph{Econometrica},
  85, 1303--1320.

\bibitem[\protect\citeauthoryear{Compiani}{Compiani}{2022}]{compiani2018}
\textsc{Compiani, G.} (2022): \enquote{Market counterfactuals and the
  specification of multiproduct demand: A nonparametric approach,}
  \emph{Quantitative Economics}, 13, 545--591.

\bibitem[\protect\citeauthoryear{Conlon and Gortmaker}{Conlon and
  Gortmaker}{2020}]{conlon2018}
\textsc{Conlon, C. and J.~Gortmaker} (2020): \enquote{Best practices for
  differentiated products demand estimation with pyblp,} \emph{The RAND Journal
  of Economics}, 51, 1108--1161.

\bibitem[\protect\citeauthoryear{Conlon and Gortmaker}{Conlon and
  Gortmaker}{2025}]{conlon2023incorporating}
---\hspace{-.1pt}---\hspace{-.1pt}--- (2025): \enquote{Incorporating micro data
  into differentiated products demand estimation with PyBLP,} \emph{Journal of
  Econometrics}, 105926.

\bibitem[\protect\citeauthoryear{Daly and Zachary}{Daly and
  Zachary}{1979}]{dalyzachary}
\textsc{Daly, A. and S.~Zachary} (1979): \enquote{Improved multiple choice
  models,} in \emph{Determinants of Travel Choice}, ed. by D. Hensher and Q.
  Dalvi (eds), London: Teakfield, 337--357.

\bibitem[\protect\citeauthoryear{Davis and Schiraldi}{Davis and
  Schiraldi}{2014}]{davis2014flexible}
\textsc{Davis, P. and P.~Schiraldi} (2014): \enquote{The flexible coefficient
  multinomial logit (FC-MNL) model of demand for differentiated products,}
  \emph{The RAND Journal of Economics}, 45, 32--63.

\bibitem[\protect\citeauthoryear{D{\"o}pper, MacKay, Miller, and
  Stiebale}{D{\"o}pper et~al.}{2025}]{dopper2025rising}
\textsc{D{\"o}pper, H., A.~MacKay, N.~H. Miller, and J.~Stiebale} (2025):
  \enquote{Rising markups and the role of consumer preferences,} \emph{Journal
  of Political Economy}, 133, 2462--2505.

\bibitem[\protect\citeauthoryear{Fally and Ligon}{Fally and
  Ligon}{2026}]{fally2026consumer}
\textsc{Fally, T. and E.~Ligon} (2026): \enquote{Consumer demand with price
  aggregators and low-rank cross-price effects,} .

\bibitem[\protect\citeauthoryear{Florens, Johannes, and Van~Bellegem}{Florens
  et~al.}{2012}]{florens2012instrumental}
\textsc{Florens, J.-P., J.~Johannes, and S.~Van~Bellegem} (2012):
  \enquote{Instrumental regression in partially linear models,} \emph{The
  Econometrics Journal}, 15, 304--324.

\bibitem[\protect\citeauthoryear{Fosgerau, Monardo, and de~Palma}{Fosgerau
  et~al.}{2024}]{fosgerau2018demand}
\textsc{Fosgerau, M., J.~Monardo, and A.~de~Palma} (2024): \enquote{The inverse
  product differentiation logit model,} \emph{American Economic Journal:
  Microeconomics}, 16, 329--370.

\bibitem[\protect\citeauthoryear{Gandhi and Houde}{Gandhi and
  Houde}{2026}]{gandhihoude}
\textsc{Gandhi, A. and J.-F. Houde} (2026): \enquote{Measuring Substitution
  Patterns in Differentiated-Products Industries,} \emph{Available on author's
  website}.

\bibitem[\protect\citeauthoryear{Goldszmidt, Henry~de Frahan, List, Mogstad,
  Muir, and Torgovitsky}{Goldszmidt et~al.}{2026}]{goldszmidt2026price}
\textsc{Goldszmidt, A., L.~Henry~de Frahan, J.~A. List, M.~Mogstad, I.~Muir,
  and A.~Torgovitsky} (2026): \enquote{Price Elasticities in Logit Demand
  Models,} .

\bibitem[\protect\citeauthoryear{Hastie, Tibshirani, and Friedman}{Hastie
  et~al.}{2009}]{hastie09elements}
\textsc{Hastie, T., R.~Tibshirani, and J.~Friedman} (2009): \emph{The Elements
  of Statistical Learning: Data Mining, Inference, and Prediction}, Springer
  Series in Statistics, Springer New York, 2 ed.

\bibitem[\protect\citeauthoryear{Hong and Li}{Hong and
  Li}{2020}]{hong2020numerical}
\textsc{Hong, H. and J.~Li} (2020): \enquote{THE Numerical Bootstrap,}
  \emph{The Annals of Statistics}, 48, 397--412.

\bibitem[\protect\citeauthoryear{Horowitz}{Horowitz}{2011}]{horowitz2011applied}
\textsc{Horowitz, J.~L.} (2011): \enquote{Applied nonparametric instrumental
  variables estimation,} \emph{Econometrica}, 79, 347--394.

\bibitem[\protect\citeauthoryear{Horowitz}{Horowitz}{2012}]{horowitz2012semiparametric}
---\hspace{-.1pt}---\hspace{-.1pt}--- (2012): \emph{Semiparametric methods in
  econometrics}, vol. 131, Springer Science \& Business Media.

\bibitem[\protect\citeauthoryear{Horta\c{c}su, Lieber, Monardo, and
  de~Paula}{Horta\c{c}su et~al.}{2024}]{Hortacsu2020}
\textsc{Horta\c{c}su, A., J.~Lieber, J.~Monardo, and A.~de~Paula} (2024):
  \enquote{Estimating Nesting Structures,} \emph{Working paper}.

\bibitem[\protect\citeauthoryear{Houde}{Houde}{2012}]{houde2012spatial}
\textsc{Houde, J.-F.} (2012): \enquote{Spatial differentiation and vertical
  mergers in retail markets for gasoline,} \emph{American Economic Review},
  102, 2147--2182.

\bibitem[\protect\citeauthoryear{Ichimura and Todd}{Ichimura and
  Todd}{2007}]{ichimura2007implementing}
\textsc{Ichimura, H. and P.~E. Todd} (2007): \enquote{Implementing
  nonparametric and semiparametric estimators,} \emph{Handbook of
  econometrics}, 6, 5369--5468.

\bibitem[\protect\citeauthoryear{Knittel and Metaxoglou}{Knittel and
  Metaxoglou}{2014}]{knittel2014estimation}
\textsc{Knittel, C.~R. and K.~Metaxoglou} (2014): \enquote{Estimation of
  Random-Coefficient Demand Models: Two Empiricists' Perspective,} \emph{Review
  of Economics and Statistics}, 96, 34--59.

\bibitem[\protect\citeauthoryear{Komiyama and Shimao}{Komiyama and
  Shimao}{2018}]{komiyama2018cross}
\textsc{Komiyama, J. and H.~Shimao} (2018): \enquote{Cross validation based
  model selection via generalized method of moments,} \emph{arXiv preprint
  arXiv:1807.06993}.

\bibitem[\protect\citeauthoryear{Koning and Ridder}{Koning and
  Ridder}{2003}]{koning2003}
\textsc{Koning, R.~H. and G.~Ridder} (2003): \enquote{Discrete choice and
  stochastic utility maximization,} \emph{The Econometrics Journal}, 6, 1--27.

\bibitem[\protect\citeauthoryear{Ledoit and Wolf}{Ledoit and
  Wolf}{2004}]{ledoit2004well}
\textsc{Ledoit, O. and M.~Wolf} (2004): \enquote{A well-conditioned estimator
  for large-dimensional covariance matrices,} \emph{Journal of multivariate
  analysis}, 88, 365--411.

\bibitem[\protect\citeauthoryear{Lu, Shi, and Tao}{Lu
  et~al.}{2023}]{lu2023semi}
\textsc{Lu, Z., X.~Shi, and J.~Tao} (2023): \enquote{Semi-nonparametric
  estimation of random coefficients logit model for aggregate demand,}
  \emph{Journal of Econometrics}, 235, 2245--2265.

\bibitem[\protect\citeauthoryear{Matzkin}{Matzkin}{1994}]{matzkin1994}
\textsc{Matzkin, R.~L.} (1994): \enquote{Restrictions of economic theory in
  nonparametric methods,} \emph{Handbook of econometrics}, 4, 2523--2558.

\bibitem[\protect\citeauthoryear{McFadden and Train}{McFadden and
  Train}{2000}]{mcfadden2000mixed}
\textsc{McFadden, D. and K.~Train} (2000): \enquote{Mixed MNL models for
  discrete response,} \emph{Journal of applied Econometrics}, 15, 447--470.

\bibitem[\protect\citeauthoryear{McFadden}{McFadden}{1984}]{mcfadden1984econometric}
\textsc{McFadden, D.~L.} (1984): \enquote{Econometric analysis of qualitative
  response models,} \emph{Handbook of econometrics}, 2, 1395--1457.

\bibitem[\protect\citeauthoryear{Miller and Weinberg}{Miller and
  Weinberg}{2017}]{miller2017understanding}
\textsc{Miller, N.~H. and M.~C. Weinberg} (2017): \enquote{Understanding the
  price effects of the MillerCoors joint venture,} \emph{Econometrica}, 85,
  1763--1791.

\bibitem[\protect\citeauthoryear{Miravete, Seim, and Thurk}{Miravete
  et~al.}{2023}]{miravete2023pass}
\textsc{Miravete, E., K.~Seim, and J.~Thurk} (2023): \enquote{Pass-through and
  tax incidence in differentiated product markets,} \emph{International Journal
  of Industrial Organization}, 102985.

\bibitem[\protect\citeauthoryear{Miravete, Seim, and Thurk}{Miravete
  et~al.}{2026}]{miravete2023elasticity}
---\hspace{-.1pt}---\hspace{-.1pt}--- (2026): \enquote{Elasticity and Curvature
  of Discrete Choice Demand Models,} .

\bibitem[\protect\citeauthoryear{Monardo}{Monardo}{2026}]{monardo2019chap3}
\textsc{Monardo, J.} (2026): \enquote{Functional Form and Shape Restrictions in
  Discrete Choice Models,} \emph{Available at SSRN 4681913}.

\bibitem[\protect\citeauthoryear{{MOSEK ApS}}{{MOSEK ApS}}{2024}]{mosek}
\textsc{{MOSEK ApS}} (2024): \enquote{MOSEK Optimizer,}
  \url{https://www.mosek.com/}, version 10.2.1.

\bibitem[\protect\citeauthoryear{Mr{\'a}zov{\'a} and Neary}{Mr{\'a}zov{\'a} and
  Neary}{2017}]{mrazova2017}
\textsc{Mr{\'a}zov{\'a}, M. and J.~P. Neary} (2017): \enquote{Not so demanding:
  Demand structure and firm behavior,} \emph{American Economic Review}, 107,
  3835--74.

\bibitem[\protect\citeauthoryear{Nevo}{Nevo}{2000}]{NevoRand}
\textsc{Nevo, A.} (2000): \enquote{Mergers with Differentiated Products: The
  Case of the Ready-to-Eat Cereal Industry,} \emph{The RAND Journal of
  Economics}, 31, 395--421.

\bibitem[\protect\citeauthoryear{Nevo, Tao, and Gandhi}{Nevo
  et~al.}{2026}]{Nevo2026}
\textsc{Nevo, A., J.~Tao, and A.~Gandhi} (2026): \enquote{Flexible Estimation
  of Differentiated Products Demand Models Using Aggregate Data,}
  \emph{Available on author's website}.

\bibitem[\protect\citeauthoryear{Newey and Powell}{Newey and
  Powell}{2003}]{newey2003instrumental}
\textsc{Newey, W.~K. and J.~L. Powell} (2003): \enquote{Instrumental variable
  estimation of nonparametric models,} \emph{Econometrica}, 71, 1565--1578.

\bibitem[\protect\citeauthoryear{Pakes and McGuire}{Pakes and
  McGuire}{1994}]{pakes1994computing}
\textsc{Pakes, A. and P.~McGuire} (1994): \enquote{Computing Markov-perfect
  Nash equilibria: Numerical implications of a dynamic differentiated product
  model,} \emph{The Rand Journal of Economics}, 25, 555.

\bibitem[\protect\citeauthoryear{Petrin}{Petrin}{2002}]{petrin2002}
\textsc{Petrin, A.} (2002): \enquote{Quantifying the benefits of new products:
  The case of the minivan,} \emph{Journal of political Economy}, 110, 705--729.

\bibitem[\protect\citeauthoryear{Pinkse and Slade}{Pinkse and
  Slade}{2004}]{pinkse2004mergers}
\textsc{Pinkse, J. and M.~E. Slade} (2004): \enquote{Mergers, brand
  competition, and the price of a pint,} \emph{European Economic Review}, 48,
  617--643.

\bibitem[\protect\citeauthoryear{Pinkse, Slade, and Brett}{Pinkse
  et~al.}{2002}]{pinkse2002spatial}
\textsc{Pinkse, J., M.~E. Slade, and C.~Brett} (2002): \enquote{Spatial price
  competition: a semiparametric approach,} \emph{Econometrica}, 70, 1111--1153.

\bibitem[\protect\citeauthoryear{Rivers and Vuong}{Rivers and
  Vuong}{2002}]{rivers2002model}
\textsc{Rivers, D. and Q.~Vuong} (2002): \enquote{Model selection tests for
  nonlinear dynamic models,} \emph{The Econometrics Journal}, 5, 1--39.

\bibitem[\protect\citeauthoryear{Salani{\'e} and Wolak}{Salani{\'e} and
  Wolak}{2022}]{salanie2022}
\textsc{Salani{\'e}, B. and F.~A. Wolak} (2022): \enquote{Fast, detail-free,
  and approximately correct: Estimating mixed demand systems,} .

\bibitem[\protect\citeauthoryear{Shen, Wang, and Yu}{Shen
  et~al.}{2014}]{shen2014approximations}
\textsc{Shen, J., L.-L. Wang, and H.~Yu} (2014): \enquote{Approximations by
  orthonormal mapped Chebyshev functions for higher-dimensional problems in
  unbounded domains,} \emph{Journal of computational and applied mathematics},
  265, 264--275.

\bibitem[\protect\citeauthoryear{Singh, Liu, and Yoganarasimhan}{Singh
  et~al.}{2026}]{singh2023choice}
\textsc{Singh, A., Y.~Liu, and H.~Yoganarasimhan} (2026): \enquote{Choice
  Models and Permutation Invariance: Deep Demand Estimation in Differentiated
  Products Markets,} \emph{Marketing Science}, published online July 28, 2026.

\bibitem[\protect\citeauthoryear{Slade}{Slade}{2004}]{slade2004}
\textsc{Slade, M.~E.} (2004): \enquote{Market power and joint dominance in UK
  brewing,} \emph{The Journal of Industrial Economics}, 52, 133--163.

\bibitem[\protect\citeauthoryear{Starc and Wollmann}{Starc and
  Wollmann}{2025}]{starc2025does}
\textsc{Starc, A. and T.~G. Wollmann} (2025): \enquote{Does entry remedy
  collusion? Evidence from the generic prescription drug cartel,}
  \emph{American Economic Review}, 115, 1400--1438.

\bibitem[\protect\citeauthoryear{Wang}{Wang}{2023}]{wang2021sieve}
\textsc{Wang, A.} (2023): \enquote{Sieve BLP: A semi-nonparametric model of
  demand for differentiated products,} \emph{Journal of Econometrics}, 235,
  325--351.

\bibitem[\protect\citeauthoryear{Wang and Huang}{Wang and
  Huang}{2025}]{wang2025scalable}
\textsc{Wang, J. and Y.~Huang} (2025): \enquote{Scalable just-in-time price
  elasticity estimation,} \emph{Available at SSRN 3557359}.

\bibitem[\protect\citeauthoryear{Zaheer, Kottur, Ravanbakhsh, Poczos,
  Salakhutdinov, and Smola}{Zaheer et~al.}{2017}]{zaheer2017}
\textsc{Zaheer, M., S.~Kottur, S.~Ravanbakhsh, B.~Poczos, R.~R. Salakhutdinov,
  and A.~J. Smola} (2017): \enquote{Deep sets,} \emph{Advances in neural
  information processing systems}, 30.

\end{thebibliography}

\begin{thebibliography}{25}
\newcommand{\enquote}[1]{``#1''}
\expandafter\ifx\csname natexlab\endcsname\relax\def\natexlab#1{#1}\fi

\bibitem[\protect\citeauthoryear{Auerbach}{Auerbach}{1985}]{auerbach1985theory}
\textsc{Auerbach, A.~J.} (1985): \enquote{The theory of excess burden and
  optimal taxation,} in \emph{Handbook of public economics}, Elsevier, vol.~1,
  61--127.

\bibitem[\protect\citeauthoryear{Bachmayr, Dusson, Ortner, and Thomas}{Bachmayr
  et~al.}{2024}]{bachmayr2024polynomial}
\textsc{Bachmayr, M., G.~Dusson, C.~Ortner, and J.~Thomas} (2024):
  \enquote{Polynomial approximation of symmetric functions,} \emph{Mathematics
  of Computation}, 93, 811--839.

\bibitem[\protect\citeauthoryear{Berry}{Berry}{1994}]{berry1994estimating}
\textsc{Berry, S.~T.} (1994): \enquote{Estimating discrete-choice models of
  product differentiation,} \emph{The RAND Journal of Economics}, 25, 242--262.

\bibitem[\protect\citeauthoryear{Bhattacharya}{Bhattacharya}{2018}]{bhattacharya2018empirical}
\textsc{Bhattacharya, D.} (2018): \enquote{Empirical welfare analysis for
  discrete choice: Some general results,} \emph{Quantitative Economics}, 9,
  571--615.

\bibitem[\protect\citeauthoryear{Boyd and Vandenberghe}{Boyd and
  Vandenberghe}{2004}]{boyd2004convex}
\textsc{Boyd, S. and L.~Vandenberghe} (2004): \emph{Convex optimization},
  Cambridge, U.K.: Cambridge University Press.

\bibitem[\protect\citeauthoryear{Chaves, Clark, Duarte, and Houde}{Chaves
  et~al.}{2025}]{chaves2025misallocation}
\textsc{Chaves, D., R.~Clark, M.~Duarte, and J.-F. Houde} (2025):
  \enquote{Misallocation in Differentiated Product Markets: Evidence from
  Collusion under Uniform Pricing,} .

\bibitem[\protect\citeauthoryear{Chen}{Chen}{2007}]{chen2007large}
\textsc{Chen, X.} (2007): \enquote{Large sample sieve estimation of
  semi-nonparametric models,} \emph{Handbook of econometrics}, 6, 5549--5632.

\bibitem[\protect\citeauthoryear{Chen, Christensen, and Kankanala}{Chen
  et~al.}{2025}]{chen2025adaptive}
\textsc{Chen, X., T.~Christensen, and S.~Kankanala} (2025): \enquote{Adaptive
  estimation and uniform confidence bands for nonparametric structural
  functions and elasticities,} \emph{Review of Economic Studies}, 92, 162--196.

\bibitem[\protect\citeauthoryear{Chipman and Moore}{Chipman and
  Moore}{1980}]{chipman1980compensating}
\textsc{Chipman, J.~S. and J.~C. Moore} (1980): \enquote{Compensating
  variation, consumer's surplus, and welfare,} \emph{The American Economic
  Review}, 70, 933--949.

\bibitem[\protect\citeauthoryear{Compiani}{Compiani}{2022}]{compiani2018}
\textsc{Compiani, G.} (2022): \enquote{Market counterfactuals and the
  specification of multiproduct demand: A nonparametric approach,}
  \emph{Quantitative Economics}, 13, 545--591.

\bibitem[\protect\citeauthoryear{Conlon and Gortmaker}{Conlon and
  Gortmaker}{2020}]{conlon2018}
\textsc{Conlon, C. and J.~Gortmaker} (2020): \enquote{Best practices for
  differentiated products demand estimation with pyblp,} \emph{The RAND Journal
  of Economics}, 51, 1108--1161.

\bibitem[\protect\citeauthoryear{Dellacherie, Mart{\'\i}nez, and
  San~Mart{\'\i}n}{Dellacherie et~al.}{2020}]{dellacherie2020inverse}
\textsc{Dellacherie, C., S.~Mart{\'\i}nez, and J.~San~Mart{\'\i}n} (2020):
  \enquote{Inverse M-matrix, a new characterization,} \emph{Linear Algebra and
  its Applications}, 595, 182--191.

\bibitem[\protect\citeauthoryear{Fosgerau, Monardo, and de~Palma}{Fosgerau
  et~al.}{2024}]{fosgerau2018demand}
\textsc{Fosgerau, M., J.~Monardo, and A.~de~Palma} (2024): \enquote{The inverse
  product differentiation logit model,} \emph{American Economic Journal:
  Microeconomics}, 16, 329--370.

\bibitem[\protect\citeauthoryear{Hong and Li}{Hong and
  Li}{2020}]{hong2020numerical}
\textsc{Hong, H. and J.~Li} (2020): \enquote{THE Numerical Bootstrap,}
  \emph{The Annals of Statistics}, 48, 397--412.

\bibitem[\protect\citeauthoryear{Horn and Johnson}{Horn and
  Johnson}{2013}]{horn2013matrix}
\textsc{Horn, R. and C.~Johnson} (2013): \emph{Matrix Analysis}, Matrix
  Analysis, Cambridge University Press.

\bibitem[\protect\citeauthoryear{Komiyama and Shimao}{Komiyama and
  Shimao}{2018}]{komiyama2018cross}
\textsc{Komiyama, J. and H.~Shimao} (2018): \enquote{Cross validation based
  model selection via generalized method of moments,} \emph{arXiv preprint
  arXiv:1807.06993}.

\bibitem[\protect\citeauthoryear{Ledoit and Wolf}{Ledoit and
  Wolf}{2004}]{ledoit2004well}
\textsc{Ledoit, O. and M.~Wolf} (2004): \enquote{A well-conditioned estimator
  for large-dimensional covariance matrices,} \emph{Journal of multivariate
  analysis}, 88, 365--411.

\bibitem[\protect\citeauthoryear{Lu, Shi, and Tao}{Lu
  et~al.}{2023}]{lu2023semi}
\textsc{Lu, Z., X.~Shi, and J.~Tao} (2023): \enquote{Semi-nonparametric
  estimation of random coefficients logit model for aggregate demand,}
  \emph{Journal of Econometrics}, 235, 2245--2265.

\bibitem[\protect\citeauthoryear{McFadden}{McFadden}{1984}]{mcfadden1984econometric}
\textsc{McFadden, D.~L.} (1984): \enquote{Econometric analysis of qualitative
  response models,} \emph{Handbook of econometrics}, 2, 1395--1457.

\bibitem[\protect\citeauthoryear{Miller and Weinberg}{Miller and
  Weinberg}{2017}]{miller2017understanding}
\textsc{Miller, N.~H. and M.~C. Weinberg} (2017): \enquote{Understanding the
  price effects of the MillerCoors joint venture,} \emph{Econometrica}, 85,
  1763--1791.

\bibitem[\protect\citeauthoryear{Monardo}{Monardo}{2026}]{monardo2025product}
\textsc{Monardo, J.} (2026): \enquote{Product-Dependent Price Coefficients in
  Discrete Choice Models,} \emph{Available at SSRN 4624771}.

\bibitem[\protect\citeauthoryear{Nevo, Tao, and Gandhi}{Nevo
  et~al.}{2026}]{Nevo2026}
\textsc{Nevo, A., J.~Tao, and A.~Gandhi} (2026): \enquote{Flexible Estimation
  of Differentiated Products Demand Models Using Aggregate Data,}
  \emph{Available on author's website}.

\bibitem[\protect\citeauthoryear{Newey and McFadden}{Newey and
  McFadden}{1994}]{newey1994large}
\textsc{Newey, W.~K. and D.~McFadden} (1994): \enquote{Large sample estimation
  and hypothesis testing,} \emph{Handbook of econometrics}, 4, 2111--2245.

\bibitem[\protect\citeauthoryear{Rivers and Vuong}{Rivers and
  Vuong}{2002}]{rivers2002model}
\textsc{Rivers, D. and Q.~Vuong} (2002): \enquote{Model selection tests for
  nonlinear dynamic models,} \emph{The Econometrics Journal}, 5, 1--39.

\bibitem[\protect\citeauthoryear{Rockafellar and Wets}{Rockafellar and
  Wets}{2009}]{rockafellar2009variational}
\textsc{Rockafellar, R.~T. and R.~J.-B. Wets} (2009): \emph{Variational
  analysis}, vol. 317, Springer Science \& Business Media.

\end{thebibliography}
\end{document}